\documentclass[english]{article}
\usepackage[T1]{fontenc}
\usepackage[latin9]{inputenc}
\usepackage{geometry}
\geometry{verbose,tmargin=2cm,bmargin=2cm,lmargin=2cm,rmargin=2cm}
\usepackage{color}
\usepackage{verbatim}
\usepackage{float}
\usepackage{amsmath}
\usepackage{amssymb}
\usepackage{graphicx}
\usepackage{esint}
\usepackage[authoryear]{natbib}

\makeatletter

\providecommand{\tabularnewline}{\\}
\floatstyle{ruled}
\newfloat{algorithm}{tbp}{loa}
\providecommand{\algorithmname}{Algorithm}
\floatname{algorithm}{\protect\algorithmname}

\@ifundefined{definecolor}
 {\usepackage{color}}{}
\@ifundefined{definecolor}{\usepackage{color}}{}
\usepackage{amsthm}\usepackage{multirow}

\floatstyle{ruled}
\newfloat{algorithm}{tbp}{loa}
\providecommand{\algorithmname}{Algorithm}
\floatname{algorithm}{\protect\algorithmname}

  \theoremstyle{plain}
  \newtheorem{prop}{\protect\propositionname}
  \theoremstyle{remark}
  \newtheorem{rem}{\protect\remarkname}

\usepackage{babel}
\providecommand{\propositionname}{Proposition}
  \providecommand{\remarkname}{Remark}

\usepackage{babel}

\usepackage{babel}

\makeatother

\usepackage{babel}
\begin{document}

\title{Sequential Monte Carlo Methods for High-Dimensional Inverse Problems:
A case study for the Navier-Stokes equations}

\date{}

\author{Nikolas Kantas%
\thanks{N. Kantas is with the Department of Mathematics, Imperial
College, London, SW7 2AZ, UK. e-mail: \{n.kantas@imperial.ac.uk\}.%
},\enskip{}~Alexandros Beskos~and~Ajay Jasra%
\thanks{A. Beskos and A. Jasra are with the Department of Statistics and Applied
Probability, National University of Singapore, Singapore, 117546,
SG. e-mail: \{staba,staja@nus.edu.sg\}.%
} 
}
\maketitle
\begin{abstract}
We consider the inverse problem of estimating the initial condition
of a partial differential equation, which is only observed through
noisy measurements at discrete time intervals. In particular, we focus
on the case where Eulerian measurements are obtained from the time
and space evolving vector field, whose evolution obeys the two-dimensional
Navier-Stokes equations defined on a torus. This context is particularly
relevant to the area of numerical weather forecasting and data assimilation.
We will adopt a Bayesian formulation resulting from a particular regularization
that ensures the problem is well posed. In the context of Monte Carlo
based inference, it is a challenging task to obtain samples from the
resulting high dimensional posterior on the initial condition. In
real data assimilation applications it is common for computational
methods to invoke the use of heuristics and Gaussian approximations.
As a result, the resulting inferences are biased and not well-justified
in the presence of non-linear dynamics and observations. On the other
hand, Monte Carlo methods can be used to assimilate data in a \textit{principled}
manner, but are often perceived as inefficient in this context due
to the high-dimensionality of the problem. In this work we will propose
a generic Sequential Monte Carlo (SMC) sampling approach for high
dimensional inverse problems that overcomes these difficulties. The
method builds upon ``state of the art'' Markov chain Monte Carlo
(MCMC) techniques, which are currently considered as benchmarks for
evaluating data assimilation algorithms used in practice. SMC samplers
can improve in terms of efficiency as they possess greater flexibility
and one can include steps like sequential tempering, adaptation and
parallelization with relatively low amount of extra computations.
We will illustrate this using numerical examples, where our proposed
SMC approach can achieve the same accuracy as MCMC but in a much more
efficient manner. \\
 \\
 Keywords: Bayesian inverse problems, Sequential Monte Carlo, data
assimilation, Navier-Stokes 
\end{abstract}

\section{Introduction}

We consider the \textit{\emph{inverse problem}}\textit{ }of estimating
the initial condition of a dynamical system described by a set of
partial differential equations (PDEs) based on noisy observations
of its evolution. Such problems are ubiquitous in many application
areas, such as meteorology and atmospheric or oceanic sciences, petroleum
engineering and imaging (see e.g. \citet{bennett2002inverse,evensen2009data,talagrand1987variational12,stuart2010inverse,cotter2012mcmc,kaipio2005statistical}).
In particular, we will look at applications mostly related to numerical
weather forecasting and \textit{\emph{data assimilation}}, where one
is interested in prediction of the velocity of wind or ocean currents.
There, a physical model of the the velocity vector field is used together
with observed data, in order to estimate its state at some point in
the past. This estimated velocity field is then used as an initial
condition within the PDE to generate forecasts. In this paper we focus
on the case where the model of the evolution of the vector field corresponds
to the two-dimensional (2D) \textit{\emph{Navier-Stokes}} equations
and the data \textit{\emph{consists of Eulerian observations}} of
the evolving velocity field originating from a regular grid of fixed
positions. Although the inverse problem related to the Navier-Stokes
dynamics may not be as difficult as some real applications, we believe
it can still provide a challenging problem where the potential of
our methods can be illustrated. Furthermore, the scope of our work
extends beyond this particular model and the computational methods
we will present are generic to inverse problems related with dynamical
systems.

In a more formal set-up, let $(U,\left\Vert \cdot\right\Vert _{U})$
and $(Y,\left\Vert \cdot\right\Vert _{Y})$ be given normed vector
spaces. A statistical inverse problem can be formulated as having
to find an unknown quantity $u\in U$ that generates\textcolor{red}{\emph{
}}data $y\in Y$: 
\[
y=\mathcal{G}(u)+e\,,
\]
 where $\mathcal{G}:U\rightarrow Y$ is an observation operator and
$e\in Y$ denotes a realization of the noise in the observation; see
\citet{kaipio2005statistical} for an overview. In a least squares
formulation, one may add a \textit{\emph{Tikhonov-Phillips regularization}}
term to ensure that the problem is well posed (see e.g. \citet{cotter2012mcmc,law2011evaluating,stuart2010inverse})
in which case one seeks to find the minimizer: 
\[
u^{\star}=\arg\min_{u\in U}\left(\left\Vert \Gamma^{-1/2}\left(y-\mathcal{G}(u)\right)\right\Vert _{Y}^{2}+\left\Vert \mathcal{C}^{-1/2}(u-\mathfrak{m})\right\Vert _{U}^{2}\right)\,,
\]
 where $\Gamma,\mathcal{C}$ are trace class, positive, self-adjoint
operators on $Y,U$ respectively and $\mathfrak{m}\in U$. In addition,
one may also be interested in quantifying the uncertainty related
to the estimate $u^{\star}$. This motivates following a Bayesian
inference perspective which is the one adopted in this work. Under
appropriate conditions (to be specified later; \citet{stuart2010inverse})
one can construct a posterior probability measure $\mu$ on $U$ such
that Bayes rule holds: 
\[
\frac{d\mu}{d\mu_{0}}(u)\propto l(y;u)\,,
\]
 where $\mu_{0}$ is the prior and $l(y;u)$ is the likelihood. The
prior is chosen to be a Gaussian probability measure $\mu_{0}=\mathcal{N}(\mathfrak{m},\mathcal{C})$
(i.e. a normal distribution on $U$ with mean $\mathfrak{m}\in U$
and covariance operator $\mathcal{C}$) as implied by prior knowledge
on the smoothness or regularization considerations. The likelihood,
$l(y;u)$, is a density w.r.t some reference measure on $Y$ and is
obtained from the statistical model believed to generate the data.
For example, one may use 
\[
l(y;u)=\exp(-\frac{1}{2}\left\Vert \Gamma^{-1/2}\left(y-\mathcal{G}(u)\right)\right\Vert _{Y}^{2})\,,
\]
 if a Gaussian additive noise model is adopted.

In this paper we will consider $u$ to be the unknown initial condition
of the PDE of interest. We will model the observations as a vector
of real random variables, $Y\in\mathbb{R}^{d_{y}}$, and assume $U$
is an appropriate Hilbert space. Thus, the observation operator is
closely related to the semigroup of solution operators of the PDE,
$\left\{ \Psi(\cdot,t):U\rightarrow U\right\} _{t\geq0}$, which maps
a chosen initial condition $u\in U$ to the present state $\Psi(u,t)$
at time $t\ge0$. It is straightforward both to extend Bayesian methodology
for these spaces (\citet{stuart2010inverse}) and to also ensure that
necessary differentiability and smoothness conditions are being enforced
with regards to the evolution of the vector field via the appropriate
choice of the prior measure. We will also work with periodic boundary
domains, which is a convenient choice that allows solving PDEs numerically
using a spectral Galerkin method with Fast Fourier Transforms (FFTs).
Notice that here we are confronted with an infinite-dimensional problem
as $U$ is a function space, but in practice a high-dimensional discretization
(or mesh) is used. Still it remains as an important requirement that
any computational method should be able to cope with an arbitrary
fine discretization, i.e. that it is robust to mesh refinement.

A plethora of methods have appeared in the literature to tackle such
inverse problems. Usually these adopt various heuristic approximations
when new data points are assimilated. The first successful attempt
in this direction of algorithms was based on optimization and variational
principles (\citet{dimet1986variational,sasaki1958objective}). Later,
these ideas were combined with Gaussian approximations, linearizations
and Kalman-type computations in \citet{talagrand1987variational12}
leading to the popular 3DVAR and 4DVAR. Another popular method is
the ensemble Kalman filter (enKF), which is nowadays employed by an
increasing number of weather forecasting centers; see \citet{evensen2009data}
for an overview. Although these methods have been used widely in practice,
an important weakness is that their use is not well justified for
non-linear problems and it is hard to quantify under which conditions
they are accurate (with the exception of linear Gaussian models; \citet{Legland2011}).
A different direction that overcomes this weakness is to use Monte
Carlo computations that make full use of Bayes rule to assimilate
data in a principled manner. In this paper we will refer to these
methods as `exact' given the resulting estimation error will diminish
by using more Monte Carlo samples and also in order to distinguish
with the above methods that use heuristic approximations. Recently,
exact Markov chain Monte Carlo (MCMC) methods suitable for high dimensional
inverse problems have been proposed in the literature (\citet{beskos2008mcmc,cotter2012mcmc,law2012proposals}).
This class of MCMC algorithms can be shown to be much more accurate
than the popular data assimilation algorithms mentioned earlier (see
\citet{law2011evaluating} for a thorough comparison). However, the
improvement in performance comes at a much greater computational cost,
limiting the effect of the method to providing benchmarks for evaluating
data assimilation algorithms used in practice.

In this paper, we aim to improve in terms of the efficiency of obtaining
Monte Carlo samples for Bayesian inference. We will use these accurate
MCMC methods as building blocks within Sequential Monte Carlo (SMC)
samplers (\citet{chopin2002sequential,del2006sequential}). Our work
builds upon recent advances in MCMC/SMC methodology and we will propose
a SMC sampler suitable for high-dimensional inverse problems. SMC
methods have been very successful in a wide range of relatively low-dimensional
applications (\citet{doucet:01}) and their validity has been demonstrated
by many theoretical results (see \citet{del2004feynman} for an exhaustive
review). However, they are widely considered to be impractical for
high-dimensional data assimilation applications. We believe that this
is true only when they are implemented naively or in an inappropriate
context, e.g. using for inverse problems standard particle filtering
algorithms intended for stochastic dynamics. Evidence for this claim
can be provided by recent success of SMC in high-dimensional applications
(\citet{jasra2011inference,schafer2013sequential}) as well as recent
theoretical results with emphasis on high dimensional problems (\citet{beskos2011stability,beskos2011error,schweizer2012non}).

We will propose an efficient algorithm based on the algorithm in \citet{chopin2002sequential}.
We will generate weighted samples (called particles) from a sequence
of target probability measures, $(\mu_{n})_{n=0}^{T}$, that starts
from the prior, $\mu_{0}$, and terminates at the posterior of interest
(i.e. $\mu_{T}=\mu$). This is achieved by a combination of importance
sampling, resampling and MCMC mutation steps. Several important challenges
arise when trying to use this approach for the high-dimensional problems
of interest in this paper: 
\begin{description}
\item [{Overcoming}] \textbf{weight degeneracy:} when the amount of information
in an assimilated data-point is overwhelming then the importance weights
will exhibit a very high variance. For instance at the $n$-th step
of the algorithm (when targeting $\mu_{n}$) the observations of the
velocity field about to be assimilated might exhibit a highly peaked
likelihood function relative to the previous target (and current proposal),
$\mu_{n-1}$. 
\item [{Constructing}] \textbf{effective MCMC mutation kernels:} the availability
of MCMC kernels with sufficiently mixing properties is well-known
to be critical for algorithmic efficiency of SMC (\citet{del2004feynman}).
This is extremely challenging in high dimensions since the target
distributions are typically comprised of components with widely varying
scales and complex correlation structures. 
\item [{Effective}] \textbf{design and monitoring of algorithmic performance:}
insufficient number of particles and MCMC mutation steps or inefficient
MCMC kernels might lead to a population of particles without the required
diversity to provide good estimates. Even in such an undesirable situation
standard performance indicators such as the Effective Sample Size
(ESS) can give satisfactory values and a false sense of security (this
has been noted in \citet{chopin2002sequential}). Hence the development
and use of reliable criteria to monitor performance is required and
these should be easy to compute using the particles.

\end{description}
In contrast to standard MCMC procedures, SMC samplers possess a great
amount of flexibility with design elements that can be modified according
to the particular problem at hand. Understanding some of the statistical
properties of the posterior of interest can be used to design an appropriate
(and possibly artificial) target sequence $(\mu_{n})_{n=0}^{T}$ as
well as constructing MCMC mutation steps with adequate mixing. To
overcome the difficulties mentioned above we will propose to: 
\begin{itemize}
\item employ sequential and adaptive tempering to smooth peaked likelihoods
by inserting an intermediate target sequence between $\mu_{n-1}$
and $\mu_{n}$. At each step of the algorithm, the next temperature
will be chosen automatically based on information from the particles
as proposed in \citet{jasra2011inference}. In particular for our
problem, adaptive tempering will not increase the total computational
cost too much, when more amount of tempering is performed at earlier
stages of the algorithm, which require shorter runs of the expensive
numerical solutions of the PDE. 
\item use the particles at each stage of the algorithm and adapt the MCMC
steps to the structure of the target. Regarding this point it will
be crucial to understand how to construct MCMC kernels robust to high
dimensions (as in \citet{cotter2012mcmc}). 
\item use a statistic to measure the amount of diversity (jitter) of the
particles during the MCMC mutation. We will use a particular standardized
square distance travelled by the particles during the mutation, which
to the best of our knowledge has not been used before. Good values
for this criterion might be chosen by requiring a minimum amount of
de-correlation. 
\item exploit the fact that many steps in SMC are trivially parallelizable.
This leads to high speed-ups in execution time when implemented on
appropriate hardware platforms, such as computing clusters or GPUs
(\citet{lee2010utility,murray2013rethinking}). 
\end{itemize}
Indeed, our contribution will be to combine the above points to design
a generic and efficient SMC algorithm that can be used for a variety
of inverse problems of interest to the data assimilation community.
We will demonstrate the performance of the proposed scheme numerically
on the inverse problem related to the Navier-Stokes equations, but
we expect similar performance in other problems such as the ones described
in \citet{cotter2012mcmc}.

The organization of the paper is as follows. In Section \ref{sec:Problem-formulation}
we formulate the inverse problem related to the Navier-Stokes equations
that will be used in this paper. In Section \ref{sec:Monte-Carlo-Methods}
we present the MCMC sampling procedure of \citet{beskos2008mcmc,cotter2012mcmc}
and a basic SMC sampling method. In Section \ref{sec:Extending-SMC-for}
we will extend the SMC methodology for high dimensional inverse problems.
In Section \ref{sec:Numerical-Examples} we present two numerical
examples with the inverse problem for the Navier-Stokes equations:
in the first one SMC appears to achieve the same accuracy as MCMC
at a fraction of the computational cost; in the second one it is unrealistic
to use MCMC from a computational perspective, but SMC can provide
satisfactory numerical solutions at a reasonable computational cost.
Finally, in Section \ref{sec:Discussion-and-extensions} we present
a discussion with some possible extensions and some concluding remarks.

\section{Problem formulation\label{sec:Problem-formulation}}

In this section we will give a brief description of the Navier-Stokes
equations defined on a torus, specify the observation mechanism and
present the posterior distribution of interest for the initial condition.
We will later use the problem formulated in this section as a case
study for the proposed SMC algorithm for inverse problems.

\subsection{Navier-Stokes Equations on a Torus}



We will first set up the appropriate state space and then present
the dynamics.

\subsubsection{Preliminaries}

Consider the state (or phase) space being the 2D-torus, $\mathbb{T}=[0,2\pi)\times[0,2\pi)$,
with $x\in\mathbb{T}$ being a point on the space. The initial condition
of interest is a 2D vector field $u:\mathbb{T}\rightarrow\mathbb{R}^{2}$.
We set $u=\left(u_{1}(x),u_{2}(x)\right)'$, where $u_{1},u_{2}\in L^{2}(\mathbb{T})$
and $\cdot'$ denotes vector/matrix transpose. We will define the
vorticity as: 
\[
\varpi=\varpi(x,t)=-\nabla\times u(t,x)
\]
 with the (slightly unusual) convention that clock-wise rotation leads
to positive vorticity. Let $\left|\cdot\right|$ denote the magnitude
of a vector or complex variate. For a scalar field $g:\mathbb{T}\rightarrow\mathbb{R}$
we will write $\nabla^{\perp}g=\left(-\partial_{x_{2}}g,\partial_{x_{1}}g\right)'$.
We will also consider the vector Laplacian operator: 
\[
\Delta u=\left(\partial_{x_{1}}^{2}u_{1}+\partial_{x_{2}}^{2}u_{1},\partial_{x_{1}}^{2}u_{2}+\partial_{x_{2}}^{2}u_{2}\right)'
\]
 and, for functions $\widetilde{v},v:\mathbb{T}\rightarrow\mathbb{R}^{2}$,
the operator: 
\[
(v\cdot\nabla)\,\widetilde{v}=(v_{1}\partial_{x_{1}}\widetilde{v}_{1}+v_{2}\partial_{x_{2}}\widetilde{v}_{1},v_{1}\partial_{x_{1}}\widetilde{v}_{2}+v_{2}\partial_{x_{2}}\widetilde{v}_{2})'\ .
\]
 Define the Hilbert space: 
\[
\mathbb{U}:=\left\{ \left.2\pi-\mbox{periodic trigonometric polynomials }u:\:\mathbb{T}\rightarrow\mathbb{R}^{2}\right|\:\nabla\cdot u=0\ ,\:\int_{\mathbb{T}}u(x)dx=0\right\} ,
\]
 and let $U$ be the closure of $\mathbb{U}$ with respect to the
norm in $L^{2}(\mathbb{T}){}^{2}$%
. Let also $P:(L^{2}(\mathbb{T}))^{2}\rightarrow U$ denote the Leray-Helmholtz
orthogonal projector. An appropriate orthonormal basis for $U$ is
comprised of the functions: 
\[
\psi_{k}(x)=\tfrac{k^{\perp}}{2\pi|k|}\exp\left(ik\cdot x\right)\ ,\quad k\in\mathbb{Z}^{2}\setminus\{0\}\ ,
\]
 where $k^{\perp}=(-k_{2},k_{1})^{'}$ and $i^{2}=-1$. So $k$ corresponds
to a (bivariate) frequency and the Fourier series decomposition of
an element $u\in U$ is written as: 
\[
u(x)=\sum_{k\in\mathbb{Z}^{2}/\{0\}}u_{k}\psi_{k}(x)\ ,\quad u_{k}=\langle u,\psi_{k}\rangle=\int_{\mathbb{T}}u\cdot\bar{\psi}_{k}(x)dx\ ,
\]
 for the Fourier coefficients $u_{k}$, with $\bar{\cdot}$ denoting
complex conjugate. Notice that since $u$ is real-valued we will have
$\overline{u_{k}}=-u_{-k}$.

Also we define $A=-P\Delta$ to be the Stokes operator; note that
$A$ is diagonalized in $U$ in the basis $\left\{ \psi_{k}\right\} _{k\in\mathbb{Z}^{2}\setminus\{0\}}$
with eigenvalues $\left\{ \lambda_{k}\right\} _{k\in\mathbb{Z}^{2}\setminus\{0\}}$
where $\lambda_{k}=\left|k\right|^{2}$. Fractional powers of the
Stokes operator can then be defined by the diagonalization. For any
$s\ge0$, we define $A^{s}$ as the operator with eigenvalues $\lambda_{k,s}=\left|k\right|^{2s}$
and eigenfunctions $\left\{ \psi_{k}\right\} _{k\in\mathbb{Z}^{2}\setminus\{0\}}$
and the Hilbert spaces $U^{s}\subseteq U$ as the domain of $A^{s/2}$,
that is the set of $u\in U$ such that $\sum_{k\in\mathbb{Z}^{2}/\{0\}}|k|^{2s}|u_{k}|^{2}<\infty$.

\subsubsection{The Navier Stokes equations}

The Navier-Stokes equations describe Newton's laws of motion for an
incompressible flow of fluid defined on $\mathbb{T}$. Let the flow
be initialized with $u\in U$ and consider the case where the mean
flow is zero. We will denote the time and space varying velocity field
as $v:\mathbb{T}\times[0,\infty)\rightarrow\mathbb{R}^{2}$, $v(x,t)=\left(v_{1}(x,t),v_{2}(x,t)\right)'$
and this is given as follows: 
\begin{gather*}
\partial_{t}v-\nu\Delta v+(v\cdot\nabla)\, v=f-\nabla\mathfrak{p}\ ,\\
\nabla\cdot v=0\ ,\quad\int_{\mathbb{T}}v_{\mathtt{j}}(x,\cdot)dx=0\ ,\:\mathtt{j}=1,2\ ,\\
v(x,0)=u(x)\ ,
\end{gather*}
 where $\nu>0$ is the viscosity parameter, $\mathfrak{p}:\mathbb{T}\times[0,\infty)\rightarrow\mathbb{R}$
is the pressure function, $f:\mathbb{T}\rightarrow\mathbb{R}^{2}$
an exogenous time-homogeneous forcing. We assume periodic boundary
conditions: 
\[
v_{\mathtt{j}}(\cdot,0,t)=v_{\mathtt{j}}(\cdot,2\pi,t)\ ,\quad v_{\mathtt{j}}(0,\cdot,t)=v_{\mathtt{j}}(2\pi,\cdot,t)\ ,\quad\mathtt{j}=1,2\ .
\]
 Applying the projection $P$ to $v$, we may write the equations
in the form of an ordinary differential equation (ODE) in $U$: 
\begin{equation}
\frac{dv}{dt}+\nu Av+B(v,v)=P\left(f\right)\ ,\: v\left(0\right)=u\ ,\label{eq:odeNS}
\end{equation}
 where the symmetric bi-linear form is defined as:

\[
B(v,\tilde{v})=\tfrac{1}{2}P\left((v\cdot\nabla)\,\widetilde{v}\right)+\tfrac{1}{2}P\left((\widetilde{v}\cdot\nabla)\, v\right)\ .
\]
 Intuitively, $P$ projects an arbitrary forcing $f$ into the space
of incompressible functions $U$. See \citet{robinson2001infinite,foias2001navier}
for more details. Let $\left\{ \Psi(\cdot,t):U\rightarrow U\right\} _{t\geq0}$
denote the semigroup of solution operators for the equation (\ref{eq:odeNS})
through $t$ time units. We also define the following discrete-time
semigroup, corresponding to time instances $t=n\delta$, of lag $\delta>0$
and $n=0,\ldots,T$: 
\[
G_{\delta}^{(n)}(\cdot)=\Psi(\cdot,n\delta)
\]
 with the conventions $G_{\delta}^{(0)}=I$, $G_{\delta}^{(1)}=G_{\delta}$
and $G_{\delta}^{(n)}=G_{\delta}\circ G_{\delta}^{(n-1)}$.

In practice we will use a finite but high dimensional approximation
for $G_{\delta}^{(n)}\bigl(u\bigr)$, which is obtained numerically
using a mesh for $u,v$; we will present the details of the numerical
solution of \eqref{eq:odeNS} in Section \ref{sec:Numerical-Examples}.

\subsection{A Bayesian Framework for the Initial Condition}

We will model the data as noisy measurements of the evolving velocity
field $v$ on a fixed grid of points, $x_{1},\ldots,x_{\Upsilon}$,
for $\Upsilon\ge1$. These are obtained at regular time intervals
that are $\delta$ time units apart. So the observations will be as
follows: 
\[
y_{n,\varsigma}=v\left(x_{\varsigma},n\delta\right)+\gamma\zeta_{n,\varsigma}\ ,\quad\zeta_{n,\varsigma}\stackrel{iid}{\sim}\mathcal{N}(0,1)\ ,\quad1\le\varsigma\le\Upsilon\ ,\,\,1\le n\le T\ ,
\]
 where $\gamma\ge0$ is constant and $v$ is initialized by the unknown
`true' initial vector field, $u^{\dagger}$. To simplify the expressions,
we set: 
\[
y=\big((y_{n,\varsigma})_{\varsigma=1}^{\Upsilon}\big)_{n=1}^{T}.
\]
 Performing inference with this type of data is referred to as Eulerian
data assimilation. The likelihood of the data, conditionally on the
unknown initial condition $u$, can be written as: 
\begin{equation}
l(y;u)=\frac{1}{\mathcal{Z}(y)}\prod_{n=1}^{T}\prod_{\varsigma=1}^{\Upsilon}\exp\left(-\tfrac{1}{2\gamma^{2}}\big(\, y_{n,\varsigma}-G_{\delta}^{(n)}\Bigl(u\Bigr)(x_{\varsigma})\,\big)^{2}\right)\ .\label{eq:likeli}
\end{equation}
 where $\mathcal{Z}(y)$ a normalizing constant that does not depend
on $u$. 

We will also consider the following family of priors: 
\begin{equation}
\mu_{0}=\mathcal{N}(0,\beta^{2}A{}^{-\alpha})\label{eq:prior}
\end{equation}
 with hyper-parameters $\alpha,\beta$ affecting the roughness and
magnitude of the initial vector field. This is a convenient but still
flexible enough choice of a prior; see \citet[Sections 2.3 and 4.1]{da2008stochastic}
for an introduction to Gaussian distributions on Hilbert spaces. Indeed,
when considering the Fourier domain, we have the real function constraint
for the complex conjugate coefficients ($u_{k}=-\overline{u_{-k}}$),
so we split the domain by defining: 
\[
\mathbb{Z}_{\uparrow}^{2}=\left\{ k=(k_{1},k_{2})\in\mathbb{Z}^{2}\setminus\{0\}:\: k_{1}+k_{2}>0\right\} \cup\left\{ k=(k_{1},k_{2})\in\mathbb{Z}^{2}\setminus\{0\}:\: k_{1}+k_{2}=0,\, k_{1}>0\right\} \ .
\]
 We will impose that $u_{k}=-\overline{u_{-k}}$ for $k\in\{\mathbb{Z}^{2}\setminus\{0\}\}\setminus\mathbb{Z}_{\uparrow}^{2}$.
Since the covariance operator is determined via the Stokes operator
$A$, we have the following equivalence when sampling from the prior:
\[
u\sim\mu_{0}\ \Leftrightarrow\ \mbox{Re}(u_{k}),\,\mbox{Im}(u_{k})\stackrel{iid}{\sim}\mathcal{N}(0,\tfrac{1}{2}\beta^{2}|k|^{-2\alpha})\ ,\,\, k\in\mathbb{Z}_{\uparrow}^{2}\ .
\]
 That is, $\mu_{0}$ admits the the following Karhunen-Loève expansion:
\begin{gather}
\mu_{0}=\mathcal{L}aw\big(\sum_{k\in\mathbb{Z}^{2}\setminus\{0\}}\tfrac{\beta}{\sqrt{2}}\left|k\right|^{-\alpha}\xi_{k}\,\psi_{k}\,\,\big)\ ;\label{eq:kl1}\\
\,\,\mbox{Re}(\xi_{k})\,,\,\mbox{Im}(\xi_{k})\,\stackrel{iid}{\sim}\mathcal{N}(0,1)\ ,\, k\in\mathbb{Z}_{\uparrow}^{2}\ ;\quad\xi_{k}=-\overline{\xi_{-k}}\ ,\, k\in\{\mathbb{Z}^{2}\setminus\{0\}\}\setminus\mathbb{Z}_{\uparrow}^{2}\ .\label{eq:kl2}
\end{gather}
 Thus \emph{a-priori}, the Fourier coefficients $u_{k}$ with $k\in\mathbb{Z}_{\uparrow}^{2}$
are assumed independent normally distributed, with a particular rate
of decay for their variances as $\left|k\right|$ increases.

Adopting a Bayesian inference perspective, we need to construct a
posterior probability measure $\mu$ on $U$: 
\begin{equation}
\frac{d\mu}{d\mu_{0}}(u)=\frac{1}{Z(y)}\, l(y;u)\ ,\label{eq:target}
\end{equation}
 where $Z(y)$ is the normalization constant. Due to the generality
of the state space, some care is needed here to make sure that under
the chosen prior, the mappings $G_{\delta}^{(n)}$ possess enough
regularity (i.e. they are $\mu_{0}$-measurable) and hence the change
of measure is well defined. For this reason we present below a proposition
from \citet{cotter2009bayesian} for this Eulerian data assimilation
problem: \begin{prop} \label{prop1}Assume that $f\in U$. Consider
the Gaussian measure $\mu_{0}=\mathcal{N}(0,\beta^{2}A^{-\alpha})$
on $U$ with $\beta>0$, $\alpha>1$. Then the probability measure
$\mu$ is absolutely continuous with respect to $\mu_{0}$ with Radon\textendash{}Nikodym
derivative written in (\ref{eq:likeli}). In addition, a Maximum-a-Posteriori
(MAP) estimate of the initial condition $u$ exists in the sense that
$\frac{1}{2}\Vert\beta^{-1}A{}^{\alpha/2}u\Vert^{2}-\log l(y;u)$
attains an infimum in $U^{\alpha}$. \end{prop}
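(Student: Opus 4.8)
The plan is to treat the two assertions separately: first, that the stated formula defines a legitimate change of measure (so that $\mu\ll\mu_{0}$), and second, that the associated penalized functional attains its infimum on $U^{\alpha}$. Both hinge on the same analytic input, namely the regularizing effect of the 2D Navier--Stokes semigroup $\Psi(\cdot,t)$, which makes the pointwise evaluations $G_{\delta}^{(n)}(u)(x_{\varsigma})$ appearing in (\ref{eq:likeli}) well-behaved functions of $u$.

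For absolute continuity, I would first record that the prior $\mu_{0}=\mathcal{N}(0,\beta^{2}A^{-\alpha})$ with $\alpha>1$ is supported on $U^{s}$ for every $s<\alpha-1$: indeed $\mathbb{E}_{\mu_{0}}\|A^{s/2}u\|^{2}=\beta^{2}\sum_{k}|k|^{2(s-\alpha)}$, and this series converges on the 2D lattice exactly when $s<\alpha-1$, so $\mu_{0}(U^{s})=1$ for some $s>0$. The next step is the key analytic fact: for $t>0$ the solution operator maps such rough initial data into a much smoother space, $\Psi(u,t)\in U^{r}$ with $r$ arbitrarily large (the Gevrey/analytic smoothing of 2D Navier--Stokes; see \citet{robinson2001infinite,foias2001navier}). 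Since $U^{r}\hookrightarrow C(\mathbb{T})^{2}$ for $r>1$, the point evaluations $u\mapsto G_{\delta}^{(n)}(u)(x_{\varsigma})$ are then well-defined and continuous in $u$ for $\mu_{0}$-almost every $u$. Consequently $l(y;u)$ is a strictly positive, bounded (each exponential factor is $\leq1$), measurable function of $u$; it is therefore $\mu_{0}$-integrable with $0<Z(y)=\int_{U}l(y;u)\,\mu_{0}(du)<\infty$, and (\ref{eq:target}) defines a probability measure equivalent to $\mu_{0}$.

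For the MAP estimate I would use the direct method of the calculus of variations applied to $I(u):=\tfrac{1}{2}\|\beta^{-1}A^{\alpha/2}u\|^{2}-\log l(y;u)$ on $U^{\alpha}$. Writing $-\log l(y;u)=\log\mathcal{Z}(y)+\tfrac{1}{2\gamma^{2}}\sum_{n,\varsigma}\big(y_{n,\varsigma}-G_{\delta}^{(n)}(u)(x_{\varsigma})\big)^{2}\geq\log\mathcal{Z}(y)$, the misfit term is bounded below, so $I(u)\geq\tfrac{1}{2\beta^{2}}\|A^{\alpha/2}u\|^{2}+\log\mathcal{Z}(y)$. This gives at once that $I$ is bounded below and coercive on $U^{\alpha}$: along any minimizing sequence $(u_{j})$ with $I(u_{j})\to\inf_{U^{\alpha}}I$, the bound forces $\|A^{\alpha/2}u_{j}\|$ to stay bounded, so $(u_{j})$ is bounded in the Hilbert space $U^{\alpha}$ and admits a subsequence converging weakly, $u_{j}\rightharpoonup u^{\star}\in U^{\alpha}$. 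It then remains to show $I(u^{\star})\leq\liminf_{j}I(u_{j})$. The quadratic penalty $\tfrac{1}{2}\|\beta^{-1}A^{\alpha/2}\cdot\|^{2}$ is convex and continuous, hence weakly lower semicontinuous, so this term behaves correctly.

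The main obstacle, and the step I would spend most effort on, is the continuity of the data-misfit term under weak convergence. The route I would take exploits the compact embedding $U^{\alpha}\hookrightarrow U^{s}$ for $s<\alpha$ (the embedding is diagonal with factors $|k|^{2(s-\alpha)}\to0$, hence compact): boundedness in $U^{\alpha}$ together with weak convergence upgrades to strong convergence $u_{j}\to u^{\star}$ in $U^{s}$, and in particular in $U$. The continuity (indeed local Lipschitz dependence on the initial datum) of the 2D Navier--Stokes solution operator then yields $\Psi(u_{j},n\delta)\to\Psi(u^{\star},n\delta)$ in a space $U^{r}$ with $r>1$, and composing with the continuous point evaluation $U^{r}\hookrightarrow C(\mathbb{T})^{2}$ gives $G_{\delta}^{(n)}(u_{j})(x_{\varsigma})\to G_{\delta}^{(n)}(u^{\star})(x_{\varsigma})$. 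Hence the misfit term is weakly continuous, $I(u^{\star})\leq\liminf_{j}I(u_{j})$, and $u^{\star}$ attains the infimum. The delicate points throughout are the quantitative smoothing and continuous-dependence estimates for the Navier--Stokes flow; these are exactly the ingredients supplied by \citet{cotter2009bayesian} and the underlying PDE references, rather than anything probabilistic.
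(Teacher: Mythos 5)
Your argument is essentially a self-contained reconstruction of what the paper simply cites: the paper's proof consists of invoking Theorem 3.4 of \citet{cotter2009bayesian} for absolute continuity and Theorem 2.7 together with Lemmata 3.1--3.2 and Lemma 4.3 of that paper for existence of the MAP estimate (the latter being proved there by exactly the direct-method/weak-compactness argument you give, with $U^{\alpha}$ identified as the Cameron--Martin space of $\mu_{0}$). So the mathematical route is the same; you have unpacked the black box, which is valuable, and the two nontrivial analytic inputs you isolate --- parabolic smoothing of the 2D Navier--Stokes semigroup so that point evaluation makes sense, and continuity of $u\mapsto G_{\delta}^{(n)}(u)(x_{\varsigma})$ with respect to the $U$ topology so that the misfit is weakly continuous along bounded sequences in $U^{\alpha}$ --- are precisely the content of the cited Lemmata 3.1--3.2.

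One inaccuracy worth fixing: under the stated hypothesis $f\in U$ only, you cannot claim $\Psi(u,t)\in U^{r}$ with $r$ \emph{arbitrarily large}; Gevrey/analytic smoothing of that strength requires correspondingly regular forcing. What parabolic regularization gives with $f\in U$ is roughly $\Psi(u,t)\in U^{2}$ for $t>0$, which is all you need since $U^{2}\hookrightarrow C(\mathbb{T})^{2}$ in two dimensions, so the point evaluations and the continuity statement survive. With that correction, and modulo the standard measurability technicality (continuity on a Borel set of full $\mu_{0}$-measure suffices to define $l(y;\cdot)$ as a measurable function), your proof is correct.
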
 \begin{proof} The
first part is Theorem 3.4 of \citet{cotter2009bayesian}. The second
statement follows from the same paper by verifying Theorem 2.7 with
Lemmata 3.1-3.2 and noting that $U^{\alpha}$ is the Cameron-Martin
space of $\mu_{0}$ by Lemma 4.3; for more details see the discussion
after Theorem 3.4. \end{proof} Notice that condition $\alpha>1$
is necessary and sufficient for $A^{-\alpha}$ to be a trace-class
operator, thus also for the infinite-dimensional Gaussian measure
$\mu_{0}$ to be well-defined; see e.g.\@ \citet[Proposition 2.18]{da2008stochastic}.
In this paper we will use a zero mean function for the prior, but
this is done purely for the sake of simplicity. In fact, Proposition
\ref{prop1} is proven in \citet{cotter2009bayesian} for $\mu_{0}=\mathcal{N}(\mathfrak{m},\beta^{2}A^{-\alpha})$
with the mean functions $\mathfrak{m}\in U^{\alpha}$. 
In addition, we note that even though $G_{\delta}^{(n)}\bigl(u\bigr)$
in (\ref{eq:likeli}) will be obtained numerically in practice on
a finite dimensional mesh, Proposition~\ref{prop1} can be extended
to the posterior defined on the corresponding finite dimensional approximation
for $u$; we refer the interested reader to Theorems 2.4 and 4.3 in
\citet{cotter2010approximation}.

\section{Monte Carlo Methods for the Inverse Problem\label{sec:Monte-Carlo-Methods}}

In this section we present some Monte Carlo algorithms that can be
used for inverse problems such as the one involving the Navier-Stokes
dynamics formulated in Section \ref{sec:Problem-formulation}. We
will present first a well-established MCMC method applied in this
context and then outline a basic general-purpose SMC sampling algorithm.
We postpone the presentation of our proposed method for the next Section.
There we will combine strengths from both the algorithms in this section
to address the complex structure of the high-dimensional posteriors
of interest in this paper.

\begin{rem} We emphasize again that the algorithms presented in both
this and the next section are `exact' (as also mentioned in the Introduction).
By `exact' we mean that the estimation errors are purely due to the
finite number of Monte Carlo samples and can be made arbitrarily small.
The methods are based on solid theoretical principles and can loosely
speaking make full use of Bayes rule to assimilate the observations.
This is in contrast to heuristic methods that invoke Gaussian approximations
and Kalman type computations. Although these are commonly used in
practice for high dimensional applications, they are not theoretically
justified for non-linear dynamics (\citet{Legland2011}). \end{rem}


\subsection{A MCMC Method on the Hilbert Space}

MCMC is an iterative procedure for sampling from $\mu$, where one
simulates a long run of an ergodic time-homogeneous Markov chain $(u\left(m\right);m\geq0)$
that is $\mu$-invariant%
\footnote{We will use throughout the convention $u(m)$ to denote the $m$-th
iteration of any MCMC transition kernel.%
}. After a few iterations (burn-in) the samples of this chain can be
treated as approximate samples from $\mu$. 
There are many possible transition kernels for implementing MCMC chains,
but we will only focus on some algorithms that have been carefully
designed for the posteriors of interest in this paper and seem to
be particularly appropriate for Hilbert-space-valued measures arising
as change of measure from Gaussian laws. Other standard MCMC algorithms
(e.g.\@ Gibbs, Random-Walk Metropolis) have been successfully used
in high-dimensional applications; see e.g. \citet{MCMC:96}. Nevertheless,
in our context the posteriors possess some particularly challenging
characteristics for MCMC: 
\begin{description}
\item [{i)}] they lack a hierarchical modeling structure. When this is
present, conditional independencies of the coefficients are often
exploited using Gibbs-type samplers that attempt updates of a fraction
(or block) of Fourier coefficients (conditional to the remaining ones)
at each iteration. Here such a structure is not present and implementation
of conditional updates would require calculations over all coefficients
(or dimensions), making Gibbs-type schemes not useful in practice. 
\item [{ii)}] they are targeting an infinite-dimensional state space (in
theory). In practice a high dimensional approximation (mesh) will
be used, but we still require from a method to be valid for an arbitrary
mesh size and hence robust to mesh refinement. This is not the case
for standard Random-Walk-type algorithms that typically deteriorate
quickly with the mesh size; see \citet{hairer2011spectral} for more
details. 
\item [{iii)}] information in the observations is not spread uniformly
over the Fourier coefficients. \emph{A-posteriori} these can have
very different scaling 
ranging from the very low frequencies (where the support of the posterior
can change drastically from the prior) to the very high ones (where
the support of the posterior may closely resemble the prior). All
these different scales cannot be easily determined either analytically
or approximately making it difficult for MCMC algorithms to adjust
their proposal's step-sizes in the many different directions of the
state space. 
\end{description}
\begin{algorithm}
\vspace{0.1cm}
 
\begin{itemize}
\item Run a $\mu$-invariant Markov chain $(u\left(m\right);m\geq0)$ as
follows: 
\item Initialize $u(0)\sim\mu_{0}$. For $m\geq1$:

\begin{enumerate}
\item Propose: 
\[
\widetilde{u}=\rho\, u\left(m-1\right)+\sqrt{1-\rho^{2}}\,\mathfrak{Z}\ ,\quad\mathfrak{Z}\sim\mu_{0}\ ,
\]

\item Accept $u\left(m\right)=\widetilde{u}$ with probability: 
\begin{equation}
1\wedge\frac{l(y;\widetilde{u})}{l(y;u\left(m-1\right))}\label{eq:acc_ratio_simple}
\end{equation}
 otherwise $u\left(m\right)=u\left(m-1\right)$. 
\end{enumerate}
\end{itemize}
\caption{MCMC for High Dimensional Inverse Problems. }

\label{alg:mcmc} 
\end{algorithm}

Considerations i) and ii) have prompted the development of a family
of global-update MCMC algorithms, which are well-defined on the Hilbert
space (and thus robust upon mesh-refinement). In Algorithm \ref{alg:mcmc}
we present such an algorithm%
\footnote{The notation $\min\left\{ a,b\right\} =a\wedge b$ is being used within
the algorithm.%
} corresponding to a Metropolis accept-reject scheme that has appeared
earlier in \citet{neal1999regression} as a regression tool for Gaussian
processes and in \citet{beskos2008mcmc,cotter2012mcmc,stuart2010inverse}
in the context of high-dimensional inference. In direct relevance
to the purposes of this paper, Algorithm \ref{alg:mcmc} has been
applied in the context of data assimilation and is often used as the
`gold standard' benchmark to compare various data assimilation algorithms
as done in \citet{law2011evaluating}. One interpretation why the
method works in infinite dimensions is that Step 1 of Algorithm \ref{alg:mcmc}
provides a proposal transition kernel that preserves the Gaussian
prior $\mu_{0}$, while the posterior itself will be preserved using
the accept/reject rule in Step 2. In contrast, standard Random-Walk
Metropolis proposals (of the type $\widetilde{u}=u(m-1)+\textrm{noise}$)
would provide proposals of a distribution which is singular with respect
to the target $\mu$, and will thus be assigned zero acceptance probability.
In practice, when a finite-dimensional approximation of $u$ is used,
both the standard MCMC methods and Algorithm \ref{alg:mcmc} will
have non-zero acceptance probability, but in the limit only Algorithm
\ref{alg:mcmc} is valid. The mixing properties of the standard MCMC
transition kernels will also diminish quickly to zero upon mesh-refinement
(in addition to the acceptance probability), whereas this is not true
for Algorithm \ref{alg:mcmc} (\citet{hairer2011spectral}).

As a limitation, it has been noted often in practice that a value
of $\rho$ very close to $1$ is needed (e.g. $0.9998$ will be used
later on) to achieve a reasonable average acceptance probability (say
$0.2-0.3$). This is due to the fact that the algorithm is optimally
tuned to the prior Gaussian measure $\mu_{0}$, whereas the posterior
resembles closely $\mu_{0}$ only at the Fourier coefficients of very
high frequencies. This leads to small exploration steps in the proposal
and relatively slow mixing of the MCMC chain, which means that one
needs to run the chain for an excessive number of iterations (of the
order $10^{6}$) to get a set of samples with reasonable quality.
In addition, each iteration requires running a PDE solver until time
$T$ to compute $l(y;u)$ in Step 2, so the approach is very computationally
expensive. To sum up, although Algorithm \ref{alg:mcmc} has provided
satisfying results in many applications (\citet{cotter2012mcmc}),
there is still a great need for further algorithmic development towards
improving the efficiency of Monte Carlo sampling.

\begin{rem}

More elaborate MCMC proposals have appeared in \citet{law2012proposals}
that can achieve better performance regarding consideration iii) above.
There the proposals are based on some approximations of the posterior,
$\mu$, following the ideas in \citet{martin2012stochastic}. The
method in \citet{law2012proposals} has appeared in parallel to the
work presented here and contains interesting ideas that are definitely
relevant to the material in this paper. Nevertheless, as it is currently
under closer investigation it will not be considered here or later
on in our comparisons in Section \ref{sec:Numerical-Examples}.

\end{rem}

\subsection{A generic SMC Approach\label{sub:A-generic-SMC}}

We proceed by a short presentation of SMC and refer the reader to
\citet{chopin2002sequential,del2006sequential} for a more thorough
treatment. Instead of a single posterior over all observations, consider
now a sequence of probability measures $(\mu_{n})_{n=0}^{T}$ defined
on $U$ such that $\mu_{T}=\mu$ and $\mu_{0}$ is a prior as in (\ref{eq:prior}).
For example, one may consider the natural ordering of the observation
times to construct such a sequence. Indeed, consider the likelihood
of the block of observations at the $p$-th epoch: 
\[
l_{p}(y_{p};u):=\frac{1}{\mathcal{Z}_{p}(y_{p})}\prod_{\varsigma=1}^{\Upsilon}\exp\left(-\tfrac{1}{2\gamma^{2}}\big(\, y_{p,\varsigma}-G_{\delta}^{(p)}\biggl(u\biggr)(x_{\varsigma})\,\big)^{2}\right)\ .
\]
 Note that as $p$ increases so does the computational effort required
to compute $l_{p}$ due to using a numerical PDE solution to evaluate
$G_{\delta}^{(p)}\left(u\right)$. Given that the observation noise
is independent between different epochs, we define a sequence of posteriors
~$(\mu_{n})_{n=0}^{T}$ as follows: 
\begin{equation}
\frac{d\mu_{n}}{d\mu_{0}}\left(u\right)=\frac{1}{Z_{n}}\prod_{p=1}^{n}l_{p}(y_{p};u)\ ,\quad0\le n\le T\ .\label{eq:smc_sequence}
\end{equation}
 This forms a bridging sequence of distributions between the prior
and the posterior, which also admits a Karhunen-Loève expansion: 
\begin{equation}
\mu_{n}=\mathcal{L}aw\big(\sum_{k\in\mathbb{Z}^{2}\setminus\{0\}}\tfrac{\beta}{\sqrt{2}}\left|k\right|^{-\alpha}\xi_{k,n}\,\psi_{k}\,\,\big)\ ,k\in\mathbb{Z}_{\uparrow}^{2}\ ;\quad\xi_{k,n}=-\bar{\xi}_{-k,n}\ ,\, k\in\{\mathbb{Z}^{2}\setminus\{0\}\}\setminus\mathbb{Z}_{\uparrow}^{2}\ ,\label{eq:kl1-1}
\end{equation}
 where compared to \eqref{eq:kl1}-\eqref{eq:kl2}, $\left\{ \xi_{k,n}\right\} _{k\in\mathbb{Z}_{\uparrow}^{2}}$
are now correlated random variables from some unknown distribution.
Note the particular choice of $(\mu_{n})_{n=0}^{T}$ in \eqref{eq:smc_sequence}
is a natural choice for this problem. In fact, there are other alternatives
involving artificial sequences and introduction of auxiliary variables
(and the extension in the next section are an example for this; see
\citet{del2006sequential} for some more examples). The SMC algorithm
will target sequentially each intermediate $\mu_{n}$, which will
be approximated by a weighted swarm of $N\ge1$ particles (or samples).
This is achieved by a sequence of selection and mutation steps (see
\citet[Chapter 5]{del2004feynman}): 
\begin{description}
\item [{Selection}] \textbf{step:} At the $n$-th iteration say we have
available $N$ equally weighted samples of $\mu_{n-1}$, denoted $\{u_{n-1}^{j}\}_{j=1}^{N}$.
These will be used as importance proposals for $\mu_{n}$ and are
assigned the incremental (normalized) weights: 
\[
W_{n}^{j}\propto\frac{d\mu_{n}}{d\mu_{n-1}}(u_{n-1}^{j})=l_{n}(y_{n};u_{n-1}^{j}),\quad\sum_{j=1}^{N}W_{n}^{j}=1,\quad1\le j\le N\ .
\]
 The weighting step is succeeded by a resampling step so as to discard
samples with low weights. The particles are resampled probabilistically
with replacement according to their weights $W_{n}^{j}$. 
\item [{Mutation}] \textbf{step:} Carrying out only selection steps will
eventually lead to degeneracy in the diversity of the particle population.
During each successive resampling step, only few parent particles
will survive and copy themselves. Thus, some `jittering' of the population
is essential to improve the diversity. These jittering steps should
maintain the statistical properties of $\mu_{n}$ and are hence chosen
to be a small number of $\mu_{n}$-invariant MCMC iterations. For
example, one could consider using a few times Steps 1-2 of Algorithm
\ref{alg:mcmc} but with the complete likelihood $l$ replaced with
$\prod_{s=1}^{n}l_{s}$ (although we will discuss later why this is
not recommended). 
\end{description}
In Algorithm \ref{alg:basic_smc} we present the general purpose SMC
algorithm that that has appeared in \citet{chopin2002sequential}.
For the resampling step, we have used $\mathcal{R}$ to denote the
distribution of the indices of the parent particles. For instance,
one may copy offsprings according to successful counts based on the
multinomial distribution of the normalized weights (this is the approach
we follow in this paper). Recall also that $u_{n}^{j}$ denotes the
$j$-th particle approximating $\mu_{n}$ and in this paper this will
be thought as a concatenated vector of the real and imaginary parts
of the Fourier coefficients in $\mathbb{Z}_{\uparrow}^{2}$ (or its
finite truncation). Upon completion of Step 2 one obtains particle
approximations for $\mu_{n}$: 
\[
\mu_{n}^{N}=\sum_{j=1}^{N}W_{n}^{j}\delta_{u_{n}^{j}}\ ,
\]
 where $\delta_{u}$ denotes the Dirac point measure at $u\in U$.
The convergence and accuracy of these Monte-Carlo approximations have
been established under relatively weak assumptions and in various
contexts; see \citet{del2004feynman} for several convergence results.
Note that most steps in the algorithm allow for a trivial parallel
implementation and hence very fast execution times; see \citet{lee2010utility,murray2013rethinking}
for more details. In addition, the resampling step is typically only
performed when an appropriate statistic (commonly the effective sample
size (ESS)) will indicate its necessity; e.g. when ESS will drops
below a prescribed threshold: 
\[
\mbox{ESS}_{n}=(\sum_{j=1}^{N}\left(W_{n}^{j}\right)^{2})^{-1}<N_{thresh}.
\]
 When not resampling, particles keep their different weights $W_{n}^{j}$
(and are not all set to $1/N$), which are then multiplied with the
next incremental weights. In \citet{chopin2002sequential} the author
uses the particle population to design $\mathcal{K}_{n}$ (that) either
as a standard random walk or as an independent Metropolis-Hastings
sampler based on particle approximation $\mu_{n}^{N}$.

\begin{algorithm}
\vspace{0.1cm}
 
\begin{itemize}
\item At $n=0$, for $j=1,\ldots N$ sample $u_{0}^{j}\sim\mu_{0}$. 
\item Repeat for $n=1,\ldots,T$:

\begin{enumerate}
\item Selection:

\begin{enumerate}
\item Importance Sampling: weight particles \, $W_{n}^{j}\propto W_{n-1}^{j}\frac{d\mu_{n}}{d\mu_{n-1}}(u_{n-1}^{j})$\ ,
$\,\,\sum_{j=1}^{N}W_{n}^{j}=1$\ . 
\item Resample (if required):

\begin{enumerate}
\item Sample offsprings $\left(p_{n}^{1},\ldots,p_{n}^{N}\right)\sim\mathcal{R}(W_{n}^{1},\ldots,W_{n}^{N})$. 
\item Set $\breve{u}_{n}^{j}=u_{n-1}^{p_{n}^{j}}$ and $W_{n}^{j}=\frac{1}{N}$,
$1\le j\le N$. 
\end{enumerate}
\end{enumerate}
\item $\mu_{n}$-invariant mutation: update $u_{n}^{j}\sim\mathcal{K}_{n}(\breve{u}_{n}^{j},\cdot)$,
$1\le j\le N$, where $\mu_{n}\mathcal{K}_{n}=\mu_{n}$. 
\end{enumerate}
\end{itemize}
\caption{Basic Sequential Monte Carlo}

\label{alg:basic_smc} 
\end{algorithm}

As Algorithm \ref{alg:basic_smc} is based on sequential importance
sampling its success relies on: 
\begin{itemize}
\item $\mu_{n-1}$ resembling closely $\mu_{n}$ in order to avoid the weights
degenerating. 
\item $\mathcal{K}_{n}$ providing sufficient jitter to the particles in
order to counter the lost diversity due to resampling. For instance,
the number of MCMC iterations used needs to be enough for particles
to spread around the support of $\mu_{n}$. 
\end{itemize}
These issues become much more pronounced in high dimensions. We will
extend the SMC methodology to deal with these issues in the following
section.

\section{Extending SMC for High Dimensional Inverse Problems\label{sec:Extending-SMC-for}}

One advantage of SMC is its inherent flexibility due to all different
design elements such as the sequence $(\mu_{n})_{n=1}^{T}$ or the
kernels $\mathcal{K}_{n}$. In high dimensional applications such
as data assimilation a user needs to design these carefully to obtain
good performance. In addition, monitoring the performance also includes
some challenges itself. We will deal with these issues in this section.

Firstly, recall the equivalence between representing the initial vector
field $u$ by its Fourier coefficients and vice versa: 
\[
u\leftrightarrow\{u_{k}\}_{k\in\mathbb{Z}_{\uparrow}^{2}}\ ;\quad u_{k}=\langle u,\psi_{k}\rangle,\:\bar{u}_{k}=-u_{k}.
\]
 In this section $u$ will be treated again as the concatenated vector
of the real and imaginary part of its Fourier coefficients. In theory,
this vector is infinite-dimensional, but in practice it will be finite
(but high) dimensional due to the truncation in the Fourier space
used in the numerical PDE solver. We will sometimes refer to the size
of the implied mesh, $d_{u}$, informally as the dimensionality of
$u$.

SMC as in Algorithm \ref{alg:basic_smc} will need to satisfy some
requirements to be effective. As already mentioned at the end of Section
\ref{sec:Monte-Carlo-Methods}, broadly speaking these are: 
\begin{itemize}
\item each selection step should not deplete the particle population by
excessively favoring one or relatively few particles. This deficiency
can arise in the context of importance sampling when likelihood densities
are overly peaked, so that only a few particles give non-trivial likelihood
values to the corresponding observation. 
\item each mutation step should sufficiently jitter the particles, so that
the population will span most of the support of the target measure.
Ideally, the mixing properties of the MCMC kernels assigned to do
this should not degrade with increasing $n$. 
\end{itemize}
We will address the first point by altering the sequence of SMC targets
and the second point by proposing improved MCMC kernels compared to
simple modifications of Algorithm \ref{alg:mcmc}.

First we need to ensure that the importance sampling weights (in Step
1 of Algorithm \ref{alg:basic_smc}) are `stable' in the sense that
they exhibit low variance. For high dimensional inverse problems it
is expected that this is not the case when the sequence $(\mu_{n})_{n=1}^{T}$
is defined as in (\ref{eq:smc_sequence}). We will modify the sequence
of target distributions $(\mu_{n})_{n=1}^{T}$ so that it evolves
from the prior $\mu_{0}$ to the posterior $\mu$ more smoothly or
in a more `stable' manner. One way to achieve this is by bridging
the two successive targets $\mu_{n-1}$ and $\mu_{n}$ via intermediate
tempering steps as in \citet{neal2001annealed}. So one can introduce
a (possibly random) number, say $q_{n}$, of artificial intermediate
targets between $\mu_{n-1}$ and $\mu_{n}$: 
\begin{equation}
\mu_{n,r}=\mu_{n-1}\big(\tfrac{d\mu_{n}}{d\mu_{n-1}}\big)^{\phi_{n,r}}\ ,\label{eq:temper_target}
\end{equation}
 where 
\begin{equation}
0=\phi_{n,0}<\phi_{n,1}<\cdots<\phi_{n,q_{n}}=1\ ,\label{eq:number_freq}
\end{equation}
 are a sequence of user-specified temperatures. The accuracy of SMC
when using such tempering schemes have been the topic of study in
\citet{beskos2011error,beskos2011stability,giraud2012non,schweizer2012non,whiteley2012sequential}.
From these works the most relevant to the present high-dimensional
setup are \citet{beskos2011stability,beskos2011error}. There in a
slightly simpler setup the authors demonstrated that it is possible
to achieve the required weight stability with a reasonable computational
cost, when the sequence of targets varies slowly and the MCMC mutation
steps mix well for every target in the bridging sequence.

For the SMC sequence implied jointly by (\ref{eq:smc_sequence}) and
\eqref{eq:temper_target}, in Section \ref{sub:tempering} we will
present an adaptive implementation for choosing the next temperature
on-the-fly (\citet{jasra2011inference}) and in Section \ref{sub:Improving-the-mixing}
propose improved MCMC mutation kernels that use particle approximations
for each $\mu_{n,r}$.

\subsection{Stabilizing the Weights with Adaptive Tempering}

\label{sub:tempering}

A particularly useful feature of using tempering within SMC is that
one does not need to choose for every bridging sequence $q_{n}$ and
$\phi_{n,0},\ldots,\phi_{n,q_{n}}$ before running the algorithm.
In fact these can be decided on-the-fly using the particle population
as it was originally proposed in \citet{jasra2011inference}. Suppose
at the moment the SMC algorithm is about to proceed to iteration $n,r$
(the $r$-th tempering step between $\mu_{n-1}$ and $\mu_{n}$).
The MCMC mutation step for temperature $\phi_{r-1,n}$ has just completed
and let $\{u_{n,r-1}^{j}\}_{j=1}^{N}$ be equally weighted particles%
\footnote{Here $u_{n,r-1}^{j}$ denotes the concatenated real vector of real
and imaginary Fourier coefficients in $\mathbb{Z}_{\uparrow}^{2}$
for the $j$-th particle targeting $\mu_{n,r-1}$. Often in the discussion
we interpret $\mu_{n,r}$ as a probability measure on a similar real
vector $u_{n,r}$.%
} approximating $\mu_{n,r-1}$ as defined in (\ref{eq:temper_target}).
The next step is to use $\{u_{n,r-1}^{j}\}_{j=1}^{N}$ as importance
proposals for $\mu_{n,r}$. 
The incremental weights are equal to $W_{n,r}^{j}\propto\frac{d\mu_{n,r}}{d\mu_{n,r-1}}(u_{n,r-1}^{j})$,
so if $\phi_{n,r}$ has been specified, they can be also written as:
\begin{equation}
W_{n,r}^{j}=\frac{l_{n}(y_{n};u_{n,r-1}^{j})^{\phi_{n,r}-\phi_{n,r-1}}}{\sum_{s=1}^{N}l_{n}(y_{n};u_{n,r-1}^{s})^{\phi_{n,r}-\phi_{n,r-1}}}\ .\label{eq:weights}
\end{equation}
 Now from the expression in \eqref{eq:weights} it follows that one
can choice of $\phi_{n,r}$ by imposing a minimum quality for the
particle population after the weighting, e.g. a minimum value for
the ESS. Therefore we can use the particles $\{u_{n,r-1}^{j}\}_{j=1}^{N}$
to specify $\phi{}_{n,r}$ as the solution of the equation: 
\begin{equation}
\mbox{ESS}_{n,r}(\phi_{n,r})=\big(\sum_{j=1}^{N}\left(W_{n,r}^{j}\right)^{2}\big)^{-1}\approx N_{thresh}\ .\label{eq:ess}
\end{equation}
 If $\mbox{ESS}_{n,r}\left(1\right)>N_{thresh}$ one should set $\phi_{n,r}=1$
and proceed to the next tempering sequence leading to $\mu_{n+1}$.
Solving the above equation for $\phi_{n,r}$ can be easily implemented
using an iterative bisection on $(\phi_{n,r-1},1]$ for the scalar
function $\mbox{ESS}_{n,r}(\phi)$. There is now only one user-specified
parameter to be tuned, namely $N_{thresh}$. This adaptive tempering
approach of \citet{jasra2011inference} has been also used successfully
in \citet{schafer2013sequential,zhou2013towards}. In \citet{zhou2013towards}
one may also find an alternative choice for the quality criterion
instead of the ESS. Finally, assuming sufficient mixing for the MCMC
mutation steps, the accuracy of adaptive tempering has been studied
in \citet{giraud2012non}.

\begin{rem} \label{remark:temper} The exposition uses intermediate
tempering between the natural target sequence $(\mu_{n})_{n=0}^{T}$
defined in \eqref{eq:smc_sequence}. If one is not interested in the
intermediate posterior using part of the data then it is possible
to attempt tempering directly on $\mu$. Similarly, if each $y_{n}$
is a high dimensional vector, then one could define intermediate targets
between each or some elements in $y_{n}$ and temper between these
targets. In the latter case caution must be taken to avoid unnecessary
intermediate tempering steps due to outliers in the observations.
In any of these cases we remark the presented methodology still applies.
\end{rem}


\subsection{Improving the Mixing of MCMC Steps with Adaptive Scaling}

\label{sub:Improving-the-mixing}

We proceed by considering the design of the MCMC mutation steps to
be used between tempering steps. We will design a random-walk-type
method, tuned to the structure of the target distributions, by combining
two ingredients: 
\begin{description}
\item [{(a)}] we will use current information from the particles to adapt
the proposal on-the-fly to the target distribution. 
\item [{(b)}] we will distinguish between high and low frequencies for
the MCMC formulation. This is a consideration specific to inverse
problems related to dissipative PDEs, where the data often contains
more information about the lower frequencies. 
\end{description}
We will look for a moment at the MCMC mutation kernel of Algorithm
\ref{alg:basic_smc}. Recall the correspondence between an element
in $U$ and its Fourier coefficients. To remove the effect of different
scaling for each Fourier coefficient (due to the different variances
in prior) we will consider the bijection 
\[
u\leftrightarrow\{\xi_{k}\}_{k\in\mathbb{Z}_{\uparrow}^{2}}
\]
 as implied by (\ref{eq:kl1})-(\ref{eq:kl2}) for $\mu_{0}$ or \eqref{eq:kl1-1}
for $\mu_{n}$. As a result \emph{a-priori,} $\mbox{Re}(\xi_{k})$,
$\mbox{Im}(\xi_{k})$ for all $k\in\mathbb{Z}_{\uparrow}^{2}$ are
i.i.d. samples from $\mathcal{N}(0,1)$. The proposal in Step 1 of
Algorithm \ref{alg:mcmc} written in terms of $\xi_{k}$ is: 
\begin{equation}
\widetilde{\xi}_{k}=\rho\,\xi_{k}+\sqrt{1-\rho^{2}}\,\mathfrak{Z}_{k}\ ,\quad\mbox{Re}(\mathfrak{Z}_{k}),\,\mbox{Im}(\mathfrak{Z}_{k})\stackrel{iid}{\sim}\mathcal{N}(0,1)\ ,\quad k\in\mathbb{Z}_{\uparrow}^{2}\ .\label{eq:rw_prior}
\end{equation}
 This would be an excellent proposal when the target is the prior
$\mu_{0}$ (or close to it). When such a proposal is used within MCMC
transition kernels for Step 2 of Algorithm \ref{alg:basic_smc}, then
the mixing of the resulting mutation kernels will rapidly deteriorate
as we move along the sequence of bridging distributions between $\mu_{0}$
and $\mu$. The assimilated information from the observations will
change the posterior densities for each $\xi_{k}$ relative to the
prior. In particular, often the data will contain a lot of information
for the Fourier coefficients located at low frequencies, thereby shrinking
their posterior variance. Thus, at these low frequencies the update
in (\ref{eq:rw_prior}) will require a choice of $\rho$ very close
to 1 for the proposal $\widetilde{\xi_{k}}$ to have a non-negligible
chance to remain within the domain of the posterior and hence deliver
non-vanishing acceptance probabilities. At the same time such small
steps will penalize the mixing of the rest of the Fourier coefficients
with relatively large posterior variances. This is a well known issue
often seen in practice (\citet{law2012proposals}) and somehow the
scaling of the random walk exploration for each frequency needs to
be adjusted to the shape of the posterior it is targeting.

We will adapt the proposal to the different posterior scalings in
the coefficients using the particles. Assume that the algorithm is
currently at iteration $n,r$, where the importance sampling step
with proposals from $\mu_{n-1,r}$ in (\ref{eq:temper_target}) has
been completed and we have the weighted particle set $\{u_{n,r-1}^{j},W_{n,r}^{j}\}_{j=1}^{N}$
approximating $\mu_{n,r}$. We will construct the MCMC mutation kernel
$\mathcal{K}_{n,r}$ (so that $\mu_{n,r}\mathcal{K}_{n,r}=\mu_{n,r}$)
as follows. With a slight abuse of notation, we denote by $u_{k,n,r-1}^{j}$
the bivariate real vector comprised of the real and imaginary part
of the $k$-th Fourier coefficient of $u_{n,r-1}^{j}$, where $1\le j\le N$
and $k\in\mathbb{Z}_{\uparrow}^{2}$. We estimate the marginal mean
and covariance of the $k-$th Fourier coefficient under the current
target in the sequence $\mu_{n,r}$ as follows: 
\begin{equation}
\mathfrak{m}_{k,n,r}^{N}=\sum_{j=1}^{N}W_{k,n,r}^{j}u_{k,n,r-1}^{j}\ ,\quad\Sigma_{k,n,r}^{N}=\sum_{j=1}^{N}W_{k,n,r}^{j}(u_{k,n,r-1}^{j}-\mathfrak{m}_{k,n,r}^{N})(u_{k,n,r-1}^{j}-\mathfrak{m}_{k,n,r}^{N})'\ .\label{eq:mean-covar}
\end{equation}
 The estimated moments $\mathfrak{m}_{k,n,r}^{N}$, $\Sigma_{k,n,r}^{N}$
(and the corresponding Gaussian approximation of the posterior) will
be used to provide the scaling of the step size of the random walk
for each $k$. Let $\{\breve{u}_{n,r}^{j}\}_{j=1}^{N}$ be the collection
of particles obtained after resampling $\{u_{n,r-1}^{j}\}_{j=1}^{N}$
with replacement according to the weights $W_{n,r}^{j}$. In the MCMC
mutation step we can use the following proposal instead of (\ref{eq:rw_prior}):
\begin{equation}
\widetilde{u}_{k,n,r}=\mathfrak{m}_{k,n,r}^{N}+\rho\,(\breve{u}_{k,n,r}^{j}-\mathfrak{m}_{k,n,r}^{N})+\sqrt{1-\rho^{2}}\,\mathcal{N}(0,\Sigma_{k,n,r}^{N})\ .\label{eq:proposal_rw_post}
\end{equation}
 Notice that in \eqref{eq:proposal_rw_post} we propose to move the
real and imaginary parts of the Fourier coefficients separately for
each frequency $k\in\mathbb{Z}_{\uparrow}^{2}$. This requires computing
only the mean and covariances of the $k$-th marginal of $\mu_{n,r}$.
Other options are available, involving jointly estimating higher-dimensional
covariance matrices $\Sigma_{n,r}^{N}$ (for the joint vector $u_{n,r}$
involving every $k$). However, caution is needed because of the Monte
Carlo variance in estimating high dimensional covariances, when the
number of particles is moderate (as it will be the case for the computationally
expensive numerical examples in this paper). A pragmatic option in
this case is to use only the diagonal elements of the joint covariance
estimator or possibly include some off-diagonal terms where (partial)
correlations are high.

\citet{chopin2002sequential} has applied such adaptive ideas advocating
both the use of an independence sampler as well as a standard random
walk (e.g. $\widetilde{u}_{n,r}=\breve{u}_{n,r}^{j}+\varrho\mathcal{N}(0,\Sigma_{n,r}^{N})$).
There more emphasis is placed towards the independence sampler, as
it was possible to use asymptotic statistical theory to suggest that
the posterior converges with $n$ to a multivariate normal. Thus,
accurate estimation of the covariance matrix would eventually result
to high acceptance probabilities. This approach is better suited for
low dimensions (and a high number of particles), but in high dimensions
it makes sense to opt for a local-move proposal like \eqref{eq:proposal_rw_post}.
In high dimensions it is extremely hard to capture very accurately
the structure of the target $\mu_{n,r}$ as in \citet{chopin2002sequential}
simply by using estimates of very high-dimensional mean vectors and
covariance matrices based on moderate number of Monte Carlo samples.
This intuition was confirmed also in \citet{schafer2013sequential},
where a different local-move approach was proposed for high dimensional
binary spaces using random walks from appropriate parametric families
of distributions.


The second ingredient of the proposed MCMC mutation kernel involves
distinguishing between low and high frequencies. In particular, we
will use the proposal of (\ref{eq:proposal_rw_post}) for a window
of the Fourier coefficients with relatively low frequencies and the
standard proposal of (\ref{eq:rw_prior}) that uses the prior for
the higher frequencies. 
We have found empirically that this thrifty hybrid approach gives
a better balance between adaptation and variability caused by Monte
Carlo error in estimating empirical covariances. We will use the proposal
of (\ref{eq:proposal_rw_post}) for coefficients with frequencies
in the rectangular window defined as: 
\begin{equation}
\mathbf{K}=\left\{ k\in\mathbb{Z}^{2}\setminus\{0\}:\: k_{1}\vee k_{2}\leq K\right\} ,\label{eq:K}
\end{equation}
where $k_{1}\vee k_{2}=\max\{k_{1},k_{2}\}$. It would certainly be
possible to use an alternative definition for $\mathbf{K}$. For instance,
we could include all frequencies such that $|k|\le K$, but this is
not expected to make a considerable difference in terms of computational
efficiency. The idea here is that for high enough frequencies the
marginal posterior of the Fourier coefficients should be very similar
to the prior, i.e.\@ the observations are not very informative for
these frequencies. Hence, in high frequencies one might as well use
the standard proposal of (\ref{eq:rw_prior}) and still attain good
enough mixing. 

\begin{algorithm}[t]
\vspace{0.1cm}

\begin{enumerate}
\item Initialize $u_{n,r}(0)=\mathfrak{u}$ (when in Step $3(f)$ of Algorithm
\ref{alg:smc_adapted} set $u_{n,r}(0)=\breve{u}_{n,r}^{j}$). Let
$\mathfrak{m}_{k,n,r}^{N}$, $\Sigma_{k,n,r}^{N}$ be known approximations
for all {\small $\mathbf{K}\cap\mathbb{Z}_{\uparrow}^{2}$}. Choose
$\rho_{L},\rho_{H}\in(0,1)$. 
\item For $m=0,\ldots,M-1$:

\begin{enumerate}
\item For $k\in\mathbf{K}\cap\mathbb{Z}_{\uparrow}^{2}$, propose the update:
\[
\widetilde{u}_{k}=\mathfrak{m}_{k,n,r}^{N}+\rho_{L}(u_{k}(m)-\mathfrak{m}_{k,n,r}^{N})+\sqrt{1-\rho_{L}^{2}}\,\,\mathcal{N}(0,\Sigma_{k,n,r}^{N})\ ;
\]
 for $k\in\mathbf{K}^{c}\cap\mathbb{Z}_{\uparrow}^{2}$ propose the
update: 
\[
\widetilde{u}_{k}=\rho_{H}\, u_{k}(m)+\sqrt{1-\rho_{H}^{2}}\,\,\mathcal{N}(0,\tfrac{1}{2}\beta^{2}|k|^{-2\alpha}\, I)\ .
\]
 
\item Compute forward dynamics $G_{\delta}^{(n)}(\widetilde{u})$ and likelihood
functions $l_{1}(y_{1};\widetilde{u}),\ldots,l_{n}(y_{n};\widetilde{u})$. 
\item With probability: 
\begin{equation}
1\wedge\frac{l_{n,r}(\widetilde{u})}{l_{n,r}(u(m))}\frac{\mu_{0}(\widetilde{u}_{L})}{\mu_{0}(u_{L}(m))}\frac{\mathcal{Q}(\widetilde{u}_{L},u_{L}(m))}{\mathcal{Q}(u_{L}(m),\widetilde{u}_{L})}\label{eq:acc_ratio}
\end{equation}
 accept the proposal and set $u(m+1)=\widetilde{u}$\ ; otherwise
set $u(m+1)=u(m)$. We use: 
\end{enumerate}

\begin{align}
l_{n,r}(u) & =l_{n}(y_{n};u)^{\phi_{n,r}}\prod_{s=1}^{n-1}l_{s}(y_{s};u)\ ,\\
\mu_{0}(u_{L}) & =\exp\Big\{-\sum_{k\in\mathbf{K}\cap\mathbb{Z}_{\uparrow}^{2}}\beta^{-2}|k|^{2\alpha}\left|u_{k}\right|^{2}\Big\}\ ,\label{eq:mu_0_acc}\\
\mathcal{Q}(u_{L},\tilde{u}_{L}) & =\exp\Big\{\,-\tfrac{1}{2(1-\rho_{L}^{2})}\sum_{k\in\mathbf{K}\cap\mathbb{Z}_{\uparrow}^{2}}\left(\tilde{u}_{k}-\mathfrak{m}_{k,n,r}^{N}-\rho_{L}(u_{k}-\mathfrak{m}_{k,n,r}^{N})\right)'\left(\Sigma_{k,n,r}^{N}\right){}^{-1}\left(\tilde{u}_{k}-\mathfrak{m}_{k,n,r}^{N}-\rho_{L}(u_{k}-\mathfrak{m}_{k,n,r}^{N})\right)\Big\}\ .\label{eq:q_acc}
\end{align}

\item Output $u(M)$ as a sample from $\mathcal{K}_{n,r}(\mathfrak{u},\cdot)$. 
\end{enumerate}
\caption{A $\mu_{n,r}$-invariant MCMC Mutation kernel $\mathcal{K}_{n,r}(\mathfrak{u},\cdot)$ }

\label{alg:mcmc_adapted} 
\end{algorithm}

The proposed MCMC kernel is presented in Algorithm \ref{alg:mcmc_adapted}.
For simplicity, in the notation we omit subscripts $n,r$ when writing
$u(m),u$ for $u_{n,r}(m),u_{n,r}$. We also use subscripts $L$,
$H$ to refer to collection of concatenated vectors of real/imaginary
parts of Fourier coefficients in $\mathbf{K}\cap\mathbb{Z}_{\uparrow}^{2}$
and $\mathbf{K}^{c}\cap\mathbb{Z}_{\uparrow}^{2}$ respectively and
$I$ is a $2\times2$ unit matrix. Notice, that even with adaptation,
a few MCMC iterations (denoted by $M\ge1$) might be required to generate
enough jittering for the populations of particles (e.g. $10-30$).
Another interesting feature is the different choice of step sizes
$\rho_{L}$ and $\rho_{H}$ used inside and outside the window respectively.
These step sizes might be still close to $1$ when the target posteriors
admit a very complex correlation structure. Nonetheless, in Section
\ref{sec:Numerical-Examples} we will present numerical examples where
substantially lower values 
can be used to produce reasonable acceptance ratios ($0.15-0.35$)
than when only proposals from the prior are used for all the frequencies
$k\in\mathbb{Z}_{\uparrow}^{2}$. The increased exploration steps
will contribute significantly towards more efficiency in the Monte
Carlo sampling procedure. 

\begin{rem}

In the Navier-Stokes dynamics (\ref{eq:odeNS}), for appropriate choice
of $f,u$ the spectrum of the time evolving vector field $v$ tends
to concentrate at low frequencies. This is common for many dissipative
PDEs. Loosely speaking, the dissipation will transfer energy from
high to low frequencies until $v$ enters the attractor of the PDE.
In cases where the likelihood is formed by noisy observations of $v$,
the resulting posteriors will tend to be more informative at lower
frequencies. Finally, a good initial choice of $K$ could be implied
by the spacing of the observation grid, but this is a heuristic specific
to Eulerian data assimilation.

\end{rem}

\begin{rem}

The acceptance probability in \eqref{eq:acc_ratio} follows from the
usual Metropolis-Hastings accept-reject principle. Compared with the
acceptance probability in \eqref{eq:acc_ratio_simple} the terms involving
$\mathcal{Q},\mu_{0}$ arise to account for the different proposal
at the low frequencies. If we used the proposal in \eqref{eq:proposal_rw_post}
for all $k\in\mathbb{Z}_{\uparrow}^{2}$, it would be necessary to
ensure somehow that the acceptance ratio is well defined and non-zero.
In this case, one should make sure that the corresponding infinite
sums in \eqref{eq:mu_0_acc}-\eqref{eq:q_acc} are finite. A related
practical issue is that any mismatch of the proposed samples with
$\mu_{n,r}$ due to Monte Carlo error can result to many rejected
samples. Using a finite $K$ deals with these issues and one might
also need to use higher values of $N$ when $K$ increases.

We note here that we could opt to use a `smooth' window for $\mathbf{K}$
instead of a window with a sharp edge like (\ref{eq:K}). A possible
implementation could be to modify the proposal to: 
\[
\widetilde{u}_{k,n,r}=\tilde{\mathfrak{m}}_{k,n,r}^{N}+\rho(u_{k}(m)-\tilde{\mathfrak{m}}_{k,n,r}^{N})+\sqrt{1-\rho^{2}}\,\,\mathcal{N}(0,\tilde{\Sigma}_{k,n,r}^{N})\ ,
\]
 where 
\[
\tilde{\mathfrak{m}}_{k,n,r}^{N}=a_{k}\mathfrak{m}_{k,n,r}^{N}\ ,\quad\tilde{\Sigma}_{k,n,r}^{N}=b_{k}\Sigma_{k,n,r}^{N}+(1-b_{k})\tfrac{1}{2}\beta^{2}|k|^{-2\alpha}\, I\ ,
\]
 with $\left(a_{k},b_{k}\right)$ being user-specified sequences converging
to $0$ as $|k|\rightarrow0$. The impact of different choices of
the rate of decay of these sequences is yet to be investigated, but
this is beyond the scope of this paper.

\end{rem}

\subsection{The Complete Algorithm}

\begin{algorithm}
\vspace{0.1cm}

\begin{itemize}
\item At $n=0$, for $j=1,\ldots N$ sample $u_{0,0}^{j}\sim\mu_{0}$. 
\item For $n=0,\ldots,T$

\begin{enumerate}
\item For $j=1,\ldots N$ compute forward dynamics $G_{\delta}\circ G_{\delta}^{(n-1)}(u_{n,0}^{j})$
and $l_{n}(y_{n},u_{n,0}^{j})$. 
\item Set $r=0$ and $\phi_{n,0}=0$. 
\item While $\phi_{n,r}<1$

\begin{enumerate}
\item Increase $r$ by $1$. 
\item (Compute temperature)\\
 IF $\min_{\phi\in(\phi_{n,r-1},1]}\mbox{ES}\mbox{S}_{n,r}(\phi)>N_{thresh}$,
set $\phi_{n,r}=1,$ \\
 ELSE compute $\phi_{n,r}$ such that 
\[
\mbox{ES}\mbox{S}_{n,r}(\phi_{n,r})\thickapprox N_{thresh}
\]
 using a bisection on $(\phi_{n,r-1},1]$. 
\item (Compute weight) for $j=1,\ldots N$: 
\[
W_{n,r}^{j}=\frac{l_{n}(y_{n};u_{n,r-1}^{j})^{\phi_{n,r}-\phi_{n,r-1}}}{\sum_{s=1}^{N}l_{n}(y_{n};u_{n,r-1}^{s})^{\phi_{n,r}-\phi_{n,r-1}}}.
\]
 
\item (Compute moment estimates $\mathfrak{m}_{k,n,r}^{N}$ and $\Sigma_{k,n,r}^{N}$)
for $k\in\mathbf{K}\cap\mathbb{Z}_{\uparrow}^{2}$: 
\[
\mathfrak{m}_{k,n,r}^{N}=\sum_{j=1}^{N}W_{n,r}^{j}u_{k,n,r-1}^{j},\quad\Sigma_{k,n,r}^{N}=\sum_{j=1}^{N}W_{n,r}^{j}\left(u_{k,n,r-1}^{j}-\mathfrak{m}_{k,n,r}^{N}\right)\left(u_{k,n,r-1}^{j}-\mathfrak{m}_{k,n,r}^{N}\right)'.
\]

\item (Resample) let $\left(p_{n}^{1},\ldots,p_{n}^{N}\right)\sim\mathcal{R}(W_{n}^{1},\ldots,W_{n}^{N})$.
For $j=1,\ldots,N$ set $\breve{u}_{n,r}^{j}=u_{n,r-1}^{p_{n}^{j}}$
and $W_{n,r}^{j}=\frac{1}{N}$. 
\item ($\mu_{n,r}$-invariant mutation) for $j=1,\ldots N$, sample $u_{n,r}^{j}\sim\mathcal{K}_{n,r}(\breve{u}_{n,r}^{j},\cdot)$
(see Algorithm \ref{alg:mcmc_adapted}). 
\end{enumerate}
\item Set $u_{n+1,0}^{j}=u_{n,r}^{j}\left(:=u_{n}^{j}\right)$. 
\end{enumerate}
\end{itemize}
\caption{A SMC algorithm for High-Dimensional Inverse Problems}

\label{alg:smc_adapted} 
\end{algorithm}

The complete SMC algorithm is presented in Algorithm \ref{alg:smc_adapted}.
The proposed algorithm computes the temperatures and the scaling of
the MCMC steps on-the-fly using the evolving particle populations.
To improve the mixing of the mutation steps, we use the new adaptive
MCMC kernel in Algorithm \ref{alg:mcmc_adapted} and distinguish between
high and low frequencies. After the completion of Step 4 of Algorithm
\ref{alg:smc_adapted}, the particles can be used to approximate the
intermediate posteriors $\mu_{n}$; this is emphasized in Step 4 by
denoting $u_{n,r}^{j}=u_{n}^{j}$ when $\phi_{n,r}=1$. In addition,
when $\phi_{n,r}=1$ the resampling steps can be omitted whenever
$\mbox{ESS}>N_{thresh}$, as mentioned in Section \ref{sub:A-generic-SMC}.
Although this is not taken into account in Algorithm \ref{alg:smc_adapted},
this extension involves storing (when $\phi_{n,r}=1$) each individual
weight from Step 3(c) as $W_{n}^{j}$, so that at the subsequent weighting
step (only) they can get multiplied to the un-normalized weight: $W_{n+1,1}^{j}\propto W_{n}^{j}l_{n+1}(y_{n+1},u_{n+1,0}^{j})^{\phi_{n+1,1}}$
(similarly to Step 1(a) of Algorithm \ref{alg:basic_smc}).

Although standard SMC methods have a well-developed theoretical framework
for their justification, this literature is much less developed in
the presence of the critical adaptive steps considered in Algorithm
\ref{alg:smc_adapted}. When both adaptive scaling of MCMC steps and
adaptive tempering are considered together, we refer the reader to
\citet{ajay13adaptive} for asymptotic convergence results (in $N$).

As regards the computational cost, for each MCMC mutation at iteration
$n,r$, we need to run the PDE numerical solver $M$ times from $t=0$
up to the current time $t=n\delta$. Therefore, the total computational
cost is proportional to $\kappa MNT^{2}$, where $\kappa$ depends
on the random number of tempering steps. In fact, this cost is significantly
reduced when more tempering steps are required for small values of
$n$. Finally, the memory requirements are $\mathcal{O}(N)$ as there
is no need to store past particles at each step of the algorithm.

\subsection{Monitoring the performance: towards an automatic design}

\label{sec:automatic}

Although SMC is a generic approach suitable for a wide class of problems,
this flexibility also means that the user has to select many design
elements. Firstly, one must decide the target sequence $(\mu_{n})_{n=0}^{T}$,
which often is imposed by the problem at hand as in (\ref{eq:smc_sequence}).
If such a natural sequence is not available, adaptive tempering can
be used directly on $\mu$ as explained in Remark \ref{remark:temper}.
The remaining design choices here are $K,M,N$ and $\rho_{H},\rho_{L}$.
The choice of $N$ can be determined based on the available computational
power and memory. Assuming the user can afford repeated algorithmic
runs, then a useful rule of thumb is to avoid increasing $N$ after
the resulting estimates do not change much. We proceed by outlining
different measures of performance for each of the remaining tuning
parameters: 
\begin{enumerate}
\item For $\rho_{H},\rho_{L}$, we recommend that they should be tuned so
that at time $T$ the acceptance ratio in (\ref{eq:acc_ratio}) averaged
over the particles has a reasonable value, e.g. $0.1-0.3$. Having
said that, the average value of the acceptance ratio should be recorded
and monitored for the complete run of the SMC. The same applies for
the ESS, which will also reveal how much tempering is used during
the SMC run. 
\item It is critical to choose $M$ so as to provide sufficient diversity
in the particle population. A question often raised when using SMC
samplers is whether a given value of $M$ (or a particular MCMC mutation
more generally) is adequate. We propose the following measure for
monitoring the jitter in the population for each frequency $k$ and
each intermediate step of the algorithm $n,r$: 
\begin{equation}
J_{k,n,r}=\frac{\sum_{j=1}^{N}\left|u_{k,n,r}^{j}(M)-u_{k,n,r}^{j}(0)\right|^{2}}{2\sum_{j=1}^{N}\left|u_{k,n,r}^{j}(0)-\mu_{k,n,r}^{N}\right|^{2}}\ .\label{eq:jitter}
\end{equation}
 We remark that the statistic $J_{k,n,r}$ has not been used before
for SMC to the best of our knowledge. Of course, monitoring every
value of $k$ is not necessary, and one could in practice choose a
small number of representative frequencies from their complete set.
In addition, as $N$ increases, $J_{k,n,r}$ will converge to $1-\mbox{corr}(u_{k,n,r}(M),u_{k,n,r}(0))$.
Hence, statistical intuition can explain what requirements we should
pose for $J_{k,n,r}$, e.g. that it should be at least above $0.01-0.05$.
In our context the MCMC mutation steps are applied to jitter the population
at each of the many steps of the SMC sampler. The role of the MCMC
steps here is very different than in a full MCMC sampler, where the
number of MCMC steps have to be large enough for ergodic theory to
apply. 
\item It is also possible to empirically validate whether the chosen value
for $K$, the half-width of $\mathbf{K}$, was appropriate. A revealing
plot here is the two-dimensional heat map of the ratio of the variance
of the posterior to the variance of the prior against $k$. $\mathbf{K}$
should include as much as possible the region where this ratio is
significantly less than $1$, e.g. say less than $0.8$. 
\end{enumerate}
All the above criteria can be used to empirically evaluate algorithmic
performance during or after a SMC run with a particular set of tuning
parameters. Fortunately, SMC does not require complicated convergence
diagnostics similar to MCMC.

A particularly interesting point is that summaries like the average
acceptance ratio, the statistic $J_{k,n,r}$, or the ratio of the
posterior to prior variance can be also be potentially used within
decision rules to determine $\rho_{L},\rho_{H}$, $M$ and $K$ adaptively
on-the-fly. This can be implemented similarly to how the ESS is used
for adaptively choosing the tempering temperatures. To choose $M$
for instance one may keep using MCMC iterations until the median of
$J_{k,n,r}$ over a number of representative frequencies $k\in\mathbb{Z}_{\uparrow}^{2}$
reaches a pre-specified value. For $\rho_{L},\rho_{H}$ an option
can be to increase/decrease them by a given fraction according to
the values of the empirical average acceptance ratio over the particle
population as done in \citet{jasra2011inference}. Similarly, $K$
can vary according to the ratio of the estimated posterior variances
to the prior ones on the boundary of $\mathbf{K}$. 
These aforementioned guidelines or other similar ideas can lead towards
a more automatic implementation of the proposed SMC algorithm. Although
we believe investing effort towards this direction is definitely important,
this lies beyond the scope of this paper. So, in the numerical examples
to follow the performance measures mentioned here are used as monitoring
tools but not as means to adapt more tuning parameters.

\begin{rem} We should clarify that adaptation in this paper is of
a very different nature than what is commonly referred to as adaptive
MCMC in the literature, e.g. \citet{andrieu2008tutorial}. There,
the MCMC tuning parameters such as $\rho$ in Algorithm \ref{alg:mcmc}
vary according to the history of the sample path of the MCMC trajectory.
This method is intended for long MCMC runs and care needs to be taken
to ensure that appropriate diminishing adaptation conditions are satisfied.
Here the interest is only to jitter the population. The MCMC mutations
have different invariant measures at each $n,r$ and are based on
very few standard MCMC iterations, which conditional on the whole
particle population are independent for each particle.

\end{rem}

\section{Numerical Examples}

\label{sec:Numerical-Examples}

We will use numerical solutions for the Navier-Stokes PDE in \eqref{eq:odeNS}
given an initial condition. We employ a method based on a spectral
Galerkin approximation of the velocity field in a divergence-free
Fourier basis (\citet{hesthaven2007spectral}). The convolutions arising
from products in the nonlinear term are computed via FFTs on a $64^{2}$
grid with additional anti-aliasing using double-sized zero padding
(\citet{uecker2009short}). In addition, exponential time differencing
is used as in \citet{cox2002exponential}, whereby an analytical integration
is used for the linear part of the PDE, $\nu Av$, together with an
explicit numerical Euler integration scheme for the non-linear part,
$P\left(f\right)-B(v,v)$.

In this section we will present two numerical examples using two different
synthetic data-sets obtained from some corresponding true initial
vector fields, $u^{\dagger}$. The first example will use Data-set
A consisting of few blocks of observations obtained at close time
intervals (small $\delta$), but each block is obtained from a dense
observation grid (high $\Upsilon$). For this example we will compare
the performance of our method with benchmark results from the MCMC
approach described in Algorithm \ref{alg:mcmc}. For the second example
we will use Data-set B, a longer data-set with blocks of observations
spaced apart by longer time periods $\delta$, each originating from
a sparser observation grid (low $\Upsilon$). In both cases the total
number of observations of the vector field will be the same and $\Upsilon T=80$.
We summarize these details in Table \ref{tb:data}. 
\begin{table}[!h]
\centering %
\begin{tabular}{|c|c|c|}
\hline 
Data-set A  & $\delta=0.02$\ ,\,\,\,$\Upsilon=16$\ ,\,\,\, $T=5$  & $u^{\dagger}\sim\mathcal{N}(0,\,\beta^{2}\, A^{-\alpha})\,,\,\,\beta^{2}=5,\,\,\alpha=2.2$ \tabularnewline
\hline 
Data-set B  & $\delta=0.2$ \ ,\,\,\, $\Upsilon=4$\ ,\,\,\, $T=20$  & $u^{\dagger}\sim\mathcal{N}(0,\,\beta^{2}\, A^{-\alpha})\,,\,\,\beta^{2}=1,\,\,\alpha=2$ \tabularnewline
\hline 
\end{tabular}\caption{The specification of the two datasets considered in the numerical
examples. Data-set A corresponds to a scenario of a short-time data-set
with a dense observation grid and Data-set B to the scenario of a
long data-set with a sparse observation grid. The true initial condition
$u^{\dagger}$ was sampled from the prior with the shown parameter
values.}

\label{tb:data} 
\end{table}

The two data-sets are synthesized using the numerical PDE solver described
above. In both cases we have set: 
\[
\nu=0.02\ ,\quad f(x)=\nabla^{\perp}\cos\big((5,5)^{'}\cdot x\big)\ ,\quad\gamma^{2}=0.2\ ,
\]
 where we remind the reader that $\nu$ is the viscosity, $f$ the
external forcing, and $\gamma^{2}$ the observation noise variance.
Adjoint PDE solvers will be used in the sense that the same numerical
solver is used for synthesizing the data and in the Monte Carlo inference
algorithms. 

\begin{table}[!h]
\centering%
\begin{tabular}{|c|c|c|c|}
\cline{2-4} 
\multicolumn{1}{c|}{} & Algorithmic  & number of calls of PDE solver  & execution time\tabularnewline
\multicolumn{1}{c|}{} & Parameters  & with time length $\delta$ (divided by $T$)  & \tabularnewline
\hline 
MCMC Dataset A  & $\rho=0.9998$  & $0.9\times10^{6}$  & 9 days\tabularnewline
\hline 
SMC Dataset A  & $N=500\,,\, M=20\,$  & \multirow{2}{*}{$7.266\times10^{5}$}  & \multirow{2}{*}{3 days}\tabularnewline
no parallelization  & $\rho_{L}=0.99\,,\,\,\rho_{H}=0.991\,,\,\, K=7$  &  & \tabularnewline
\hline 
SMC Dataset A  & $N=1,020\,,\, N_{thresh}=\frac{N}{3},\, M=20\,$  & \multirow{2}{*}{$1.403\times10^{6}$}  & \multirow{2}{*}{7.4 hours}\tabularnewline
with parallelization  & $\rho_{L}=0.99\,,\,\rho_{H}=0.991\,,\,\, K=7$  &  & \tabularnewline
\hline 
SMC Dataset B  & $N=1,020\,,\, N_{thresh}=\frac{N}{3}\,,M=20\,$  & \multirow{2}{*}{$1.447\times10^{6}$}  & \multirow{2}{*}{3.5 days}\tabularnewline
with parallelization  & $\rho_{L}=0.99\,,\,\rho_{H}=0.991\,,\,\, K=7$  &  & \tabularnewline
\hline 
\end{tabular}\caption{We present the number of times a numerical PDE solution of total length
$\delta$ is required by each algorithm. This number is divided by
$T$. The total execution time is also shown for each case. For the
parallel implementation of Algorithm \ref{alg:smc_adapted} we used
trivial parallelization (except for the resampling step). The code
was written in Matlab$^{(R)}$ and parallel implementations of SMC
run as a parallel MPI job with 60 workers on the computing cluster
of CSML-UCL (SunGrid$^{(R)}$ engine). All other simulations were
performed in Matlab$^{(R)}$ on the same computer running Linux with
an Intel$^{(R)}$ Xeon$^{(R)}$ CPU E5-1660 at 3.30GHz (six core)
and 16 GB RAM.}

\label{tab:comp} 
\end{table}

In Table \ref{tab:comp} we summarize the computational cost of the
algorithms used in our experiments. The table presents the number
of times a PDE solution is required and the total execution time.
We do not provide an MCMC benchmark for Data-set B as the more expensive
PDE solver needed in this case would result in an enormous execution
time for Algorithm \ref{alg:mcmc}. For SMC, we will present results
from an implementation of Algorithm \ref{alg:smc_adapted} with trivial
parallelization, whereby the resampling step is performed at a single
computing node that collects and distributes all particles. For Data-set
A, in Table \ref{tab:comp} we also show the computational cost from
a typical run of SMC with $N=500$ but without using parallelization.
Although in the remainder of this section we will not present the
actual results from this run, we report that the performance was comparable
to MCMC as well as SMC with higher $N$ obtained via the parallel
implementation. In a way this demonstrates the efficiency of the SMC
method compared to MCMC, but we have to emphasize that for more realistic
applications parallelization is critical for effective execution times.

\subsection{Data-set A: Short-time Data-set with a Dense Observation Grid}

Figure \ref{ex1:post_mean} plots the posterior mean of the vorticity
and velocity fields for the initial condition as estimated by our
adaptive SMC algorithm (left) and the MCMC one (right). In the same
plot the true field $u^{\dagger}$ and its vorticity is displayed
in the middle. The results from SMC and MCMC are very similar and
both methods manage to capture the main features of the true field
$u^{\dagger}$. The smoothing effect observed for the posterior means
by both SMC and MCMC appears because the observations cannot provide
substantial information about the high frequency Fourier coefficients.
Figure \ref{ex1:pred} shows the posterior mean of the vorticity field
this time for the terminal state $v(\cdot,\delta T)$ instead of the
initial condition (i.e. the push forward probability measure of $\mu$
under $G_{\delta}^{(T)}$).

Note that the objective here is to approximate the full posterior
and not just the mean. Figure \ref{ex1:pdfs} shows the estimated
posterior density functions (PDFs) for a few (re-scaled) Fourier coefficients,
$\xi_{k,T}$ (as defined in \eqref{eq:kl1-1}), of different frequencies
$k$ obtained using both SMC and MCMC. Recall the scaling in $\xi_{k,0}$
makes all prior densities equal to standard normals. In Figure \ref{ex1:pdfs}
we plot the prior density (dotted), the posterior densities from SMC
(solid) and MCMC (dashed) together with the true value of $\xi_{k,T}^{\dagger}$
used for generating the data (vertical line). In Figures \ref{ex1:scatterMCMC}
and \ref{ex1:scatterSMC} we show scatter-plots over pairs of different
frequencies $k$ from the (sub-sampled) MCMC trajectory and the SMC
particles respectively. In all the aforementioned plots SMC and MCMC
seem to be in close agreement, which provides numerical evidence that
SMC samples correctly from the posterior distribution.

We proceed by presenting different measures of performance for MCMC
and SMC. In Figure \ref{ex1:monitorMCMC} we plot the autocorrelations
of the MCMC trajectory for different Fourier coefficients. The mixing
of the MCMC chain is quite slow, hence a large number of iterations
was required for the MCMC approach to deliver reliable results. To
monitor the performance of the SMC algorithm, Figure \ref{ex1:monitorSMC}
includes plots of the ESS, the average acceptance ratio and the jittering
indicator $J_{k,n,r}$ against each SMC iteration%
\footnote{By SMC iteration index $n,r$ we mean actually iteration $r+\sum_{p=0}^{n-1}q_{p}$
, i.e. the number of times Steps 1-4 of Algorithm \ref{alg:smc_adapted}
have completed. Note $q_{p}$ is as in \eqref{eq:number_freq} and
is a random variable determined by the algorithm.%
}$n,r$. Compared to the size of the data-set ($\Upsilon T=80$) the
total number of extra tempering steps required here was about $50$.
In the bottom left plot in Figure \ref{ex1:monitorSMC} we observe
how the average, maximum and minimum (over $k$) of $J_{k,n,r}$ changes
with $n,r$ separately for when $k$ is within or outside the window
of frequencies $\mathbf{K}$. For some indicative values of $k$ we
also show $J_{k,n,r}$ in the lower right plot of Figure \ref{ex1:monitorSMC}.
$J_{k,n,r}$ does not seem to vary a lot with $\left|k\right|$. It
is certainly reassuring that all the MCMC steps seem to deliver considerable
amount of jittering to all Fourier coefficients. Also, the amount
of jittering appears to be fairly evenly spread over all Fourier coefficients,
even if different MCMC proposals are used within and outside the window
$\mathbf{K}$. Although supporting plots are not shown here, $J_{k,n,r}$
seemed to grow linearly with $M$ for every $k$ and each $n,r$.


In Figure \ref{ex1:circle} we examine some statistical properties
that are related to frequencies $k$. In this plot, in the top row
we use a heat map against $k$ to plot the ratio of estimated posterior
marginal standard deviation for $\xi_{k,T}$ over the standard deviation
of the $\xi_{k,0}$ (or the prior). The left plot corresponds to this
ratio computed using the real parts of $\xi_{k,T}$ and the right
plot to imaginary ones. In the bottom row we plot the posterior mean
of $\xi_{k,T}$ against $k$. We can deduce that the choice of treating
high and low frequencies separately and adapting the MCMC steps only
for a window of frequencies is reasonable, as most of the information
in the data spreads over a number of frequencies covered by our chosen
window $\mathbf{K}$.

\subsection{Dataset B: Long-time Data-set with a Sparse Observation Grid}


We will now present the results of the SMC method of Algorithm \ref{alg:smc_adapted}
when applied to Data-set B. This scenario is more challenging than
the one with Data-set A, as we allow the Navier-Stokes dynamics to
evolve for a longer period of time. We will follow a similar presentation
as in the previous example. Figure \ref{ex2:post_mean} plots the
posterior mean of the initial vorticity and velocity field. This plot
can be used to compare the SMC estimates method versus the true values
corresponding to $u^{\dagger}$. Although here we do not have a benchmark
available like before, the smoothing effect in the estimates relative
to the truth does not seem surprising based on the intuition gained
from the previous example. In Figure \ref{ex2:pred} we show the posterior
mean of the final vorticity $\varpi(\cdot,\delta T)$ together with
the true one. Although few errors are present, the result is satisfying
given the (mildly) chaotic nature of the forward dynamics.

Figure \ref{ex2:pdfs} displays the approximate posterior densities
of $\xi_{k,n}$ for a number of frequencies $k$. The difference compared
to Figure \ref{ex1:pdfs} for the previous example is that now we
can see how $\mu_{n}$ changes for $n=0\mbox{ (dotted)},$ $0.5T\mbox{ (dash-dotted)},$
$\:0.75T\mbox{ (dashed)},$ $\: T\mbox{ (solid)}$. As expected, each
new block of observations contributes to shaping a more informative
posterior. In Figure \ref{ex2:scatterSMC} we present the scatter-plots.
To monitor the performance of SMC, in Figure \ref{ex2:monitorSMC}
we plot the ESS, average acceptance ratio and $J_{k,n,r}$ all against
$n,r$ as we did for the previous example. The algorithm uses almost
the same number of tempering steps in total compared to the previous
example. In addition, the acceptance ratio and $J_{k,n,r}$ stop decreasing
after some iteration. We interpret this as a sign that $\mu_{n}$
stops changing fast with $n$ and that the particles form good approximations
of the targeted sequence.

Finally, Figure \ref{ex2:circle} shows the heat maps against $k$
of the estimated posterior means of $\xi_{k,T}$ and the ratio of
their marginal posterior standard deviations over their prior values,
similarly to Figure \ref{ex1:circle} for Data-set A. Compared to
the previous example the posterior seems to gain information from
the observations for a wider window of low frequencies. Still, the
choice of $K=7$ seems to be justified.

\section{Discussion and Extensions\label{sec:Discussion-and-extensions}}

This paper aims to make a significant contribution towards challenging
the perception that SMC is not useful for high-dimensional inverse
problems. We believe this appears to be often the case when particle
methods are implemented naively, see for example some negative results
and exponential-in-dimension computational costs reported in \citet{bengtsson2008curse,snyder2008obstacles}
for some cases involving stochastic dynamics. The added efficiency
of our method compared to plain MCMC can be attributed to being able
to employ a variety of adaptation steps that take advantage of the
evolving particle population, hence tuning the algorithm effectively
to the structure of the target distributions in the SMC sequence.
SMC algorithms are also appealing to practitioners given the inherent
ability to parallelize many steps in the algorithm thus drastically
reducing execution times. As regards to understanding the effect of
each block of observations, another useful aspect of the method is
that the SMC sequence allows for monitoring the evolution of posterior
distributions of interest as more observations arrive. In contrast,
MCMC methods would require re-running the algorithm from scratch.
In terms of the accuracy of the estimates we believe that SMC can
be on a par with expensive MCMC methods, and this is illustrated clearly
in the example with Data-set A.

We believe that the numerical results for the case study in this paper
can motivate further investigations for using exact methods for data
assimilation problems. In the present case study, the proposed SMC
algorithm was able to provide results in an example (Data-set B) where
the execution time of MCMC was prohibitive. Of course, PDE models
used in practice can be much more complex that the Navier-Stokes equations
used in this paper and might require a much finer resolution. Therefore,
it would be unrealistic to claim that SMC methods can replace common
heuristics. However, we believe that a thorough exploration of SMC
methods in this context is beneficial for the data assimilation community,
especially if they can be combined with some justifiable approximations
to reduce the computational burden. Recent research initiatives in
this direction include \citet{van2010nonlinear,chorin2010implicit,rebeschini2013can}
when stochastic dynamics are used. In particular, \citet{rebeschini2013can}
look at ways of reducing the intrinsic deficiency of importance sampling
by avoiding to dismiss a whole high-dimensional particle when only
few of its coordinates are in disagreement with the observation's
likelihood. It would be interesting to investigate how the methodology
developed here can be combined with these aforementioned works.

\begin{algorithm}
\vspace{0.1cm}
 
\begin{itemize}
\item At time $\tau=L-1$ execute Algorithm \ref{alg:smc_adapted} for $n=0,\ldots,L-1$.

\begin{itemize}
\item Store particles $\{u_{n}^{j}\}_{j=1}^{N}$ after resampling for all
steps $0<n\le L$. 
\end{itemize}
\item At times $\tau=L,\ldots,T$, execute Algorithm \ref{alg:smc_adapted}
with $n=\tau-L+1,\ldots,\tau$ as follows:

\begin{enumerate}
\item Replace $\mu_{0}$ in Algorithm \ref{alg:mcmc_adapted} with an approximation
of the posterior $u|y_{1},\ldots,y_{\tau-L}$ constructed using the
particle population $\{u_{\tau-L}^{j}\}_{j=1}^{N}$, e.g.\@ $\mathcal{N}(\mu_{\tau-L}^{N},\Sigma_{\tau-L}^{N})$,
or some smooth Gaussian mixture centered at the particle locations
$\{u_{\tau-L}^{j}\}_{j=1}^{N}$. 
\item Store particles $\{u_{n}^{j}\}_{j=1}^{N}$ after resampling over all
steps $\tau-L+1\leq n\le\tau$. 
\item Discard from memory particles $\{u_{\tau-L}^{j}\}_{j=1}^{N}$ . 
\end{enumerate}
\end{itemize}
\caption{Receding Horizon Estimation for Long $T$ with SMC-Sampling}

\label{alg:mpc} 
\end{algorithm}

The work in this paper opens several paths for further investigations
towards developing effective exact algorithms for data assimilation.
In the context of high-dimensional Bayesian inference either for inverse
problems or non-linear filtering we mention some possible extensions
below: 
\begin{description}
\item [{Gaussian}] \textbf{priors and beyond:} We have chosen Gaussian
priors with a fixed rate of decay $\alpha$ for the eigenvalues of
their covariance operators. In general, $\alpha$ should be small
enough so that posteriors for Fourier coefficients for which there
is information in the data are not dominated by the prior, and this
requires further empirical/theoretical investigations. Other prior
distribution options could also be investigated. For instance, in
the context of plain MCMC algorithms, the works in \citet{cotter2012mcmc}
and \citet{dashti2011besov} have considered Sieve and Besov priors
respectively. 
\item [{Algorithmic}] \textbf{design automation:} The added flexibility
of the SMC framework comes at the price of having to choose more algorithmic
tuning parameters (in our context, choice of step-sizes $\rho_{H}$,
$\rho_{L}$, window size $K$, number of MCMC steps $M$). All these
parameters are involved in the specification of the MCMC mutation
step, that must provide enough jittering to the data. As explained
in Section \ref{sec:automatic}, there is a lot of information in
the particle population that can be exploited towards a more `automated'
version of the algorithm presented here, with more (if not all) parameters
determined on-the-fly. %

\item [{Algorithmic}] \textbf{robustness:} We have considered the standardized
squared jumping distance index $J_{k,n,r}$ in (\ref{eq:jitter})
as a way of measuring the diversity in the particles during the execution
of the algorithm. In the case of Data-Set A, when an alternative algorithm
(i.e.\@ plain MCMC) is available for comparison, it appears that
even small values ($\sim5\%$) can suffice. More investigations are
needed to determine what values of $J_{k,n,r}$ should one aim for
in general. 
\item [{On-line}] \textbf{algorithms:} The described algorithm has computational
cost $\mathcal{O}(T^{2})$, so currently is not useful for on-line
applications with very long data-sets. One possible remedy could be
to use receding horizon estimation principles similar to \citet{jazwinski1968limited,rawlings2006particle}
with a fixed lag or memory $L$ to reduce the computational cost to
$\mathcal{O}(L^{2}T)$ with increased memory requirements of $\mathcal{O}(NL)$.
An example of such a procedure is shown in Algorithm \ref{alg:mpc},
where one implements Algorithm \ref{alg:smc_adapted} recursively
with a fixed-lag approach and for the MCMC jittering steps in Algorithm
\ref{alg:mcmc_adapted} one uses instead of $\mu_{0}$ a Gaussian
approximation (or a mixture of them) of the posterior at time $\tau$.
A bias will incur from using these approximations of the posterior
at time $\tau$ in the $L$ subsequent mutations and importance sampling
selections. Hopefully in some cases this bias might be small and uniform
in time. In principle, one could also correct this estimation error
using some form of importance sampling like \citet{doucet2006efficient},
but it remains to be investigated whether this is possible to be implemented
without requiring weight computations looking all the way back to
$\tau=0$ and hence being impractical for long $\tau$. Algorithm
\ref{alg:mpc} bears some similarity with earlier works: in \citet{stordal2011bridging,frei2012bridging,kai12thesis}
the authors combine ensemble Kalman filtering or Gaussian mixtures
with particle filters when the noise is present in the dynamics without
any use of receding horizon or MCMC jittering steps and in \citet{yang2010receding}
the authors present a similar algorithm to Algorithm \ref{alg:mpc}
intended for a computer vision application but again without using
any MCMC mutation steps. We are currently investigating the performance
and theoretical properties of Algorithm \ref{alg:mpc} when applied
to inverse problems involving long observation sequences. 
\item [{Beyond}] \textbf{Navier-Stokes on a Torus:} An advantage of the
proposed SMC methodology is that it is generic and applicable to a
wider class of applications than the particular Navier-Stokes inverse
problem. One could use this method also with non-linear observations,
different dynamical systems such as groundwater flow equations or
other statistical inverse problems such as those examined in \citet{stuart2010inverse,cotter2012mcmc}.
In addition, a challenging extension could be to see how this work
can be useful in the context of high dimensional filtering when noise
is present in the dynamics. 
\end{description}

\section*{Acknowledgements}

The authors would like to thank Dan O. Crisan, Kody J.H. Law and Andrew
M. Stuart for many useful discussions that provided valuable insight
on the topic. We would also like to thank Vassilis Stathopoulos for
his help on using Matlab$^{(R)}$ parallel jobs with MPI on the SunGrid$^{(R)}$
engine of CSML at UCL. N. Kantas and A. Beskos received funding by
EPSRC under grant EP/J01365X/1, which is gratefully acknowledged.
A. Jasra and A. Beskos acknowledge also support from MOE Singapore.

\bibliographystyle{plainnat} \addcontentsline{toc}{section}{\refname}
 \bibliographystyle{plainnat}
\bibliography{navier}

\begin{thebibliography}{54}
\providecommand{\natexlab}[1]{#1}
\providecommand{\url}[1]{\texttt{#1}}
\expandafter\ifx\csname urlstyle\endcsname\relax
  \providecommand{\doi}[1]{doi: #1}\else
  \providecommand{\doi}{doi: \begingroup \urlstyle{rm}\Url}\fi

\bibitem[Andrieu and Thoms(2008)]{andrieu2008tutorial}
Christophe Andrieu and Johannes Thoms.
\newblock A tutorial on adaptive {MCMC}.
\newblock \emph{Statistics and Computing}, 18\penalty0 (4):\penalty0 343--373,
  2008.

\bibitem[Bengtsson et~al.(2008)Bengtsson, Bickel, and Li]{bengtsson2008curse}
Thomas Bengtsson, Peter Bickel, and Bo~Li.
\newblock Curse-of-dimensionality revisited: Collapse of the particle filter in
  very large scale systems.
\newblock \emph{Probability and statistics: Essays in honor of David A.
  Freedman}, 2:\penalty0 316--334, 2008.

\bibitem[Bennett(2002)]{bennett2002inverse}
Andrew~F Bennett.
\newblock \emph{Inverse modeling of the ocean and atmosphere}.
\newblock Cambridge University Press, 2002.

\bibitem[Beskos et~al.(2008)Beskos, Roberts, Stuart, and Voss]{beskos2008mcmc}
Alexandros Beskos, Gareth Roberts, Andrew Stuart, and Jochen Voss.
\newblock {MCMC} methods for diffusion bridges.
\newblock \emph{Stochastics and Dynamics}, 8\penalty0 (03):\penalty0 319--350,
  2008.

\bibitem[Beskos et~al.(2013{\natexlab{a}})Beskos, Crisan, and
  Jasra]{beskos2011stability}
Alexandros Beskos, Dan Crisan, and Ajay Jasra.
\newblock On the stability of sequential {Monte Carlo} methods in high
  dimensions.
\newblock \emph{Annals of Applied Probability}, 2013{\natexlab{a}}.
\newblock to appear.

\bibitem[Beskos et~al.(2013{\natexlab{b}})Beskos, Crisan, Jasra, and
  Whiteley]{beskos2011error}
Alexandros Beskos, Dan Crisan, Ajay Jasra, and Nick Whiteley.
\newblock Error bounds and normalizing constants for sequential {Monte Carlo}
  in high dimensions.
\newblock \emph{Advances in Applied Probability}, 2013{\natexlab{b}}.
\newblock to appear.

\bibitem[Beskos et~al.(2013{\natexlab{c}})Beskos, Jasra, and
  Thi{\'e}ry]{ajay13adaptive}
Alexandros Beskos, Ajay Jasra, and Alexandre~H Thi{\'e}ry.
\newblock On the convergence of adaptive sequential {Monte Carlo} methods.
\newblock \emph{arXiv preprint arXiv:1306.6462}, 2013{\natexlab{c}}.

\bibitem[Chopin(2002)]{chopin2002sequential}
Nicolas Chopin.
\newblock A sequential particle filter method for static models.
\newblock \emph{Biometrika}, 89\penalty0 (3):\penalty0 539--552, 2002.

\bibitem[Chorin et~al.(2010)Chorin, Morzfeld, and Tu]{chorin2010implicit}
Alexandre~J Chorin, Matthias Morzfeld, and Xuemin Tu.
\newblock Implicit particle filters for data assimilation.
\newblock \emph{Communications in Applied Mathematics and Computational
  Science}, 5:\penalty0 221--240, 2010.

\bibitem[Cotter et~al.(2009)Cotter, Dashti, Robinson, and
  Stuart]{cotter2009bayesian}
Simon~L Cotter, Masoumeh Dashti, James~C. Robinson, and Andrew~M Stuart.
\newblock {Bayesian} inverse problems for functions and applications to fluid
  mechanics.
\newblock \emph{Inverse Problems}, 25\penalty0 (11):\penalty0 115008, 2009.

\bibitem[Cotter et~al.(2010)Cotter, Dashti, and
  Stuart]{cotter2010approximation}
Simon~L Cotter, Masoumeh Dashti, and Andrew~M Stuart.
\newblock Approximation of {Bayesian} inverse problems for {PDEs}.
\newblock \emph{SIAM Journal on Numerical Analysis}, 48\penalty0 (1):\penalty0
  322--345, 2010.

\bibitem[Cotter et~al.(2013)Cotter, Roberts, Stuart, and White]{cotter2012mcmc}
Simon~L Cotter, Gareth~O Roberts, Andrew~M Stuart, and David White.
\newblock {MCMC} methods for functions: modifying old algorithms to make them
  faster.
\newblock \emph{Statistical Science}, 2013.
\newblock to appear.

\bibitem[Cox and Matthews(2002)]{cox2002exponential}
Stephen~M Cox and Paul~C Matthews.
\newblock Exponential time differencing for stiff systems.
\newblock \emph{Journal of Computational Physics}, 176\penalty0 (2):\penalty0
  430--455, 2002.

\bibitem[Da~Prato and Zabczyk(2008)]{da2008stochastic}
Guiseppe Da~Prato and Jerzy Zabczyk.
\newblock \emph{Stochastic equations in infinite dimensions}.
\newblock Cambridge University Press, 2008.

\bibitem[Dashti et~al.(2011)Dashti, Harris, and Stuart]{dashti2011besov}
Masoumeh Dashti, Stephen Harris, and Andrew Stuart.
\newblock Besov priors for {Bayesian} inverse problems.
\newblock \emph{arXiv preprint arXiv:1105.0889}, 2011.

\bibitem[Del~Moral(2004)]{del2004feynman}
Pierre Del~Moral.
\newblock \emph{{Feynman-Kac} Formulae}.
\newblock Springer, 2004.

\bibitem[Del~Moral et~al.(2006)Del~Moral, Doucet, and Jasra]{del2006sequential}
Pierre Del~Moral, Arnaud Doucet, and Ajay Jasra.
\newblock Sequential {Monte Carlo} samplers.
\newblock \emph{Journal of the Royal Statistical Society: Series B (Statistical
  Methodology)}, 68\penalty0 (3):\penalty0 411--436, 2006.

\bibitem[Doucet et~al.(2001)Doucet, de~Freitas, and Gordon]{doucet:01}
Arnaud Doucet, Nando de~Freitas, and Neil Gordon, editors.
\newblock \emph{Sequential {M}onte {C}arlo methods in practice}.
\newblock Statistics for Engineering and Information Science. Springer-Verlag,
  New York, 2001.

\bibitem[Doucet et~al.(2006)Doucet, Briers, and
  S{\'e}n{\'e}cal]{doucet2006efficient}
Arnaud Doucet, Mark Briers, and St{\'e}phane S{\'e}n{\'e}cal.
\newblock Efficient block sampling strategies for sequential {Monte} {Carlo}
  methods.
\newblock \emph{Journal of Computational and Graphical Statistics}, 15\penalty0
  (3):\penalty0 693--711, 2006.

\bibitem[Evensen(2009)]{evensen2009data}
Geir Evensen.
\newblock \emph{Data assimilation: the ensemble {Kalman} filter}.
\newblock Springer, 2009.

\bibitem[Foias et~al.(2001)Foias, Manley, Rosa, and Temam]{foias2001navier}
Ciprian Foias, Oscar Manley, Ricardo Rosa, and Roger Temam.
\newblock \emph{{Navier-Stokes} equations and turbulence}, volume~83.
\newblock Cambridge University Press, 2001.

\bibitem[Frei and K{\"u}nsch(2012)]{frei2012bridging}
Marco Frei and Hans~R K{\"u}nsch.
\newblock Bridging the ensemble {Kalman} and particle filter.
\newblock \emph{arXiv preprint arXiv:1208.0463}, 2012.

\bibitem[Gilks et~al.(1996)Gilks, Richardson, and Spiegelhalter]{MCMC:96}
W.~R. Gilks, S.~Richardson, and D.~J. Spiegelhalter, editors.
\newblock \emph{Markov chain {M}onte {C}arlo in practice}.
\newblock Interdisciplinary Statistics. Chapman \& Hall, London, 1996.

\bibitem[Giraud and Del~Moral(2012)]{giraud2012non}
Fran{\c{c}}ois Giraud and Pierre Del~Moral.
\newblock Non-asymptotic analysis of adaptive and annealed {Feynman}-{Kac}
  particle models.
\newblock \emph{arXiv preprint arXiv:1209.5654}, 2012.

\bibitem[Hairer et~al.(2011)Hairer, Stuart, and Vollmer]{hairer2011spectral}
Martin Hairer, Andrew Stuart, and Sebastian Vollmer.
\newblock Spectral gaps for a {M}etropolis-{H}astings algorithm in infinite
  dimensions.
\newblock \emph{arXiv preprint arXiv:1112.1392}, 2011.

\bibitem[Hesthaven et~al.(2007)Hesthaven, Gottlieb, and
  Gottlieb]{hesthaven2007spectral}
Jan~S Hesthaven, Sigal Gottlieb, and David Gottlieb.
\newblock \emph{Spectral methods for time-dependent problems}, volume~21.
\newblock Cambridge University Press, 2007.

\bibitem[Jasra et~al.(2011)Jasra, Stephens, Doucet, and
  Tsagaris]{jasra2011inference}
Ajay Jasra, David~A Stephens, Arnaud Doucet, and Theodoros Tsagaris.
\newblock Inference for {L{\'e}vy}-driven stochastic volatility models via
  adaptive sequential {Monte Carlo}.
\newblock \emph{Scandinavian Journal of Statistics}, 38\penalty0 (1):\penalty0
  1--22, 2011.

\bibitem[Jazwinski(1968)]{jazwinski1968limited}
Andrew Jazwinski.
\newblock Limited memory optimal filtering.
\newblock \emph{IEEE Transactions on Automatic Control}, 13\penalty0
  (5):\penalty0 558--563, 1968.

\bibitem[Kaipio and Somersalo(2005)]{kaipio2005statistical}
Jari~P Kaipio and Erkki Somersalo.
\newblock \emph{Statistical and computational inverse problems}, volume 160.
\newblock Springer Science+ Business Media, 2005.

\bibitem[Law(2012)]{law2012proposals}
Kody~JH Law.
\newblock Proposals which speed-up function space {MCMC}.
\newblock \emph{arXiv preprint arXiv:1212.4767}, 2012.

\bibitem[Law and Stuart(2011)]{law2011evaluating}
Kody~JH Law and AM~Stuart.
\newblock Evaluating data assimilation algorithms.
\newblock \emph{arXiv preprint arXiv:1107.4118}, 2011.

\bibitem[Le~Dimet and Talagrand(1986)]{dimet1986variational}
Fran{\c{c}}ois-Xavier Le~Dimet and Olivier Talagrand.
\newblock Variational algorithms for analysis and assimilation of
  meteorological observations: theoretical aspects.
\newblock \emph{Tellus A}, 38\penalty0 (2):\penalty0 97--110, 1986.

\bibitem[Le~Gland et~al.(2011)Le~Gland, Monbet, and Tran]{Legland2011}
Fran{\c{c}}ois Le~Gland, Val{\'e}rie Monbet, and Vu-Duc Tran.
\newblock Large sample asymptotics for the ensemble {Kalman} filter.
\newblock In Dan Crisan and Boris Rozovskii, editors, \emph{The Oxford handbook
  of nonlinear filtering}, pages 598--634. Oxford University Press, Oxford,
  2011.

\bibitem[Lee et~al.(2010)Lee, Yau, Giles, Doucet, and Holmes]{lee2010utility}
Anthony Lee, Christopher Yau, Michael~B Giles, Arnaud Doucet, and Christopher~C
  Holmes.
\newblock On the utility of graphics cards to perform massively parallel
  simulation of advanced {Monte Carlo} methods.
\newblock \emph{Journal of Computational and Graphical Statistics}, 19\penalty0
  (4):\penalty0 769--789, 2010.

\bibitem[Li(2012)]{kai12thesis}
Kai Li.
\newblock \emph{Generalised particle filters}.
\newblock PhD thesis, Imperial College London, 2012.

\bibitem[Martin et~al.(2012)Martin, Wilcox, Burstedde, and
  Ghattas]{martin2012stochastic}
James Martin, Lucas~C Wilcox, Carsten Burstedde, and Omar Ghattas.
\newblock A stochastic {Newton} {MCMC} method for large-scale statistical
  inverse problems with application to seismic inversion.
\newblock \emph{SIAM Journal on Scientific Computing}, 34\penalty0
  (3):\penalty0 A1460--A1487, 2012.

\bibitem[Murray et~al.(2013)Murray, Lee, and Jacob]{murray2013rethinking}
Lawrence~M Murray, Anthony Lee, and Pierre~E Jacob.
\newblock Rethinking resampling in the particle filter on graphics processing
  units.
\newblock \emph{arXiv preprint arXiv:1301.4019}, 2013.

\bibitem[Neal(1999)]{neal1999regression}
Radford~M Neal.
\newblock Regression and classification using {Gaussian} process priors.
\newblock In JM~Bernardo, JO~Berger, AFM Smith, and AP~Dawid, editors,
  \emph{Bayesian Statistics 6: Proceedings of the Sixth Valencia International
  Meeting, June 6-10, 1998}, volume~6, pages 475--491. Oxford University Press,
  Oxford, 1999.

\bibitem[Neal(2001)]{neal2001annealed}
Radford~M Neal.
\newblock Annealed importance sampling.
\newblock \emph{Statistics and Computing}, 11\penalty0 (2):\penalty0 125--139,
  2001.

\bibitem[Rawlings and Bakshi(2006)]{rawlings2006particle}
James~B Rawlings and Bhavik~R Bakshi.
\newblock Particle filtering and moving horizon estimation.
\newblock \emph{Computers \& chemical engineering}, 30\penalty0 (10):\penalty0
  1529--1541, 2006.

\bibitem[Rebeschini and van Handel(2013)]{rebeschini2013can}
Patrick Rebeschini and Ramon van Handel.
\newblock Can local particle filters beat the curse of dimensionality?
\newblock \emph{arXiv preprint arXiv:1301.6585}, 2013.

\bibitem[Robinson(2001)]{robinson2001infinite}
James~C Robinson.
\newblock \emph{Infinite-dimensional dynamical systems: an introduction to
  dissipative parabolic {PDEs} and the theory of global attractors}, volume~28.
\newblock Cambridge University Press, 2001.

\bibitem[Sasaki(1958)]{sasaki1958objective}
Yoshikazu Sasaki.
\newblock An objective analysis based on the variational method.
\newblock \emph{J. Meteor. Soc. Japan}, 36\penalty0 (3):\penalty0 77--88, 1958.

\bibitem[Sch{\"a}fer and Chopin(2013)]{schafer2013sequential}
Christian Sch{\"a}fer and Nicolas Chopin.
\newblock Sequential monte carlo on large binary sampling spaces.
\newblock \emph{Statistics and Computing}, 23\penalty0 (2):\penalty0 163--184,
  2013.

\bibitem[Schweizer(2012)]{schweizer2012non}
Nikolaus Schweizer.
\newblock Non-asymptotic error bounds for sequential {MCMC} and stability of
  {Feynman-Kac} propagators.
\newblock \emph{arXiv preprint arXiv:1204.2382}, 2012.

\bibitem[Snyder et~al.(2008)Snyder, Bengtsson, Bickel, and
  Anderson]{snyder2008obstacles}
Chris Snyder, Thomas Bengtsson, Peter Bickel, and Jeff Anderson.
\newblock Obstacles to high-dimensional particle filtering.
\newblock \emph{Monthly Weather Review}, 136\penalty0 (12):\penalty0
  4629--4640, 2008.

\bibitem[Stordal et~al.(2011)Stordal, Karlsen, N{\ae}vdal, Skaug, and
  Vall{\`e}s]{stordal2011bridging}
Andreas~S Stordal, Hans~A Karlsen, Geir N{\ae}vdal, Hans~J Skaug, and Brice
  Vall{\`e}s.
\newblock Bridging the ensemble {Kalman} filter and particle filters: the
  adaptive gaussian mixture filter.
\newblock \emph{Computational Geosciences}, 15\penalty0 (2):\penalty0 293--305,
  2011.

\bibitem[Stuart(2010)]{stuart2010inverse}
Andrew~M Stuart.
\newblock Inverse problems: a {Bayesian} perspective.
\newblock \emph{Acta Numerica}, 19\penalty0 (1):\penalty0 451--559, 2010.

\bibitem[Talagrand and Courtier(1987)]{talagrand1987variational12}
Olivier Talagrand and Philippe Courtier.
\newblock Variational assimilation of meteorological observations with the
  adjoint vorticity equation. {I} and {II}: Theory and numerical results.
\newblock \emph{Quarterly Journal of the Royal Meteorological Society},
  113\penalty0 (478):\penalty0 1311--1347, 1987.

\bibitem[Uecker(2009)]{uecker2009short}
Hannes Uecker.
\newblock A short ad hoc introduction to spectral methods for parabolic {PDE}
  and the {Navier--Stokes} equations.
\newblock \emph{Summer School Modern Computational Science, Oldenburg 2009},
  pages 169--209, 2009.

\bibitem[van Leeuwen(2010)]{van2010nonlinear}
Peter~Jan van Leeuwen.
\newblock Nonlinear data assimilation in geosciences: an extremely efficient
  particle filter.
\newblock \emph{Quarterly Journal of the Royal Meteorological Society},
  136\penalty0 (653):\penalty0 1991--1999, 2010.

\bibitem[Whiteley(2012)]{whiteley2012sequential}
Nick Whiteley.
\newblock Sequential {Monte Carlo} samplers: error bounds and insensitivity to
  initial conditions.
\newblock \emph{Stochastic Analysis and Applications}, 30\penalty0
  (5):\penalty0 774--798, 2012.

\bibitem[Yang et~al.(2010)Yang, Jeon, and Shin]{yang2010receding}
Ehwa Yang, Moongu Jeon, and Vladimir Shin.
\newblock Receding horizon estimation for hybrid particle filters and
  application for robust visual tracking.
\newblock In \emph{Pattern Recognition (ICPR), 2010 20th International
  Conference on}, pages 3508--3512. IEEE, 2010.

\bibitem[Zhou et~al.(2013)Zhou, Johansen, and Aston]{zhou2013towards}
Yan Zhou, Adam~M Johansen, and John~AD Aston.
\newblock Towards automatic model comparison: an adaptive sequential {Monte}
  {Carlo} approach.
\newblock \emph{arXiv preprint arXiv:1303.3123}, 2013.

\end{thebibliography}
 
\begin{figure}
\centering\includegraphics[width=1\textwidth]{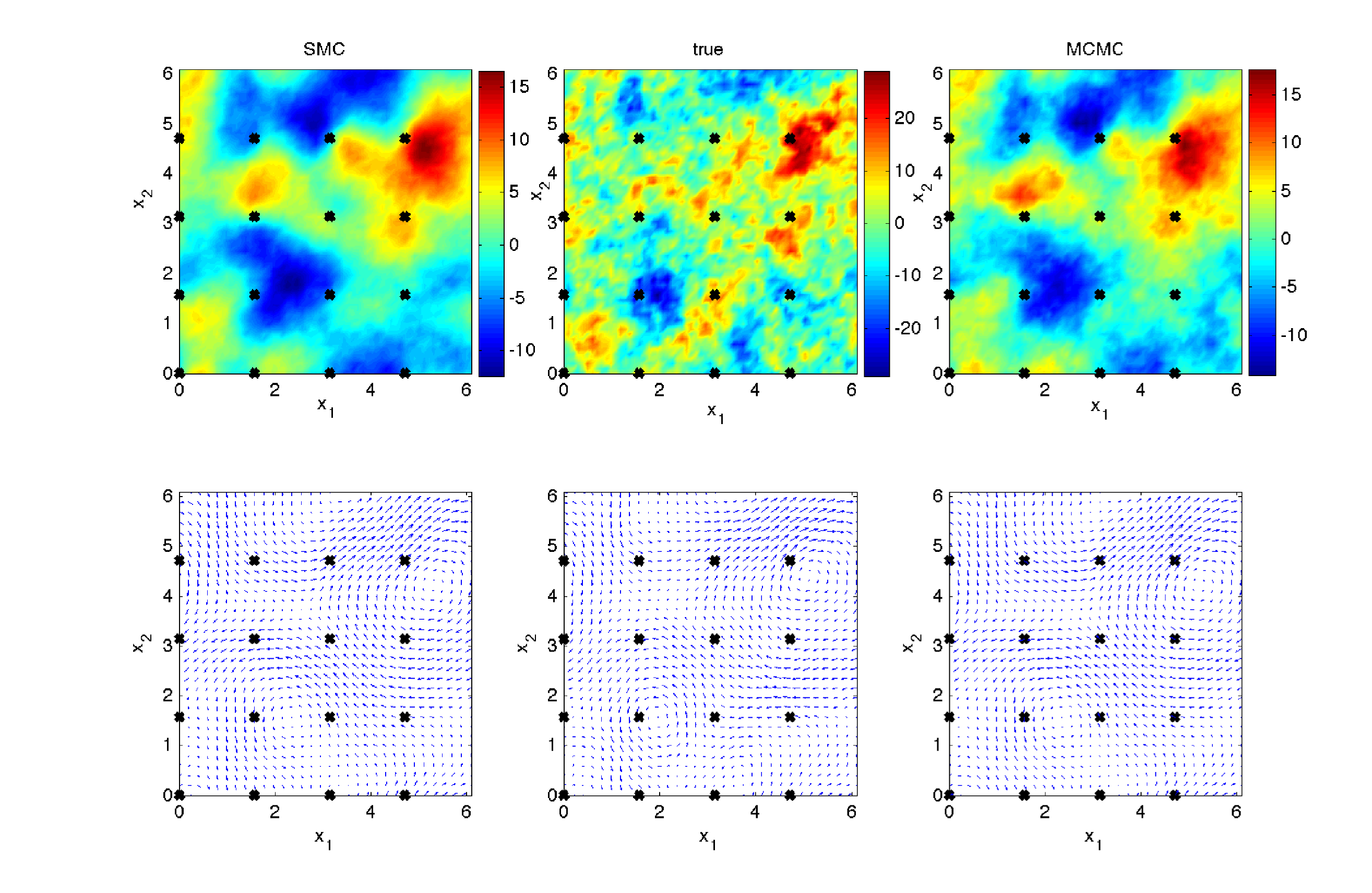} \caption{Data-set A. Top panel: left, posterior mean of initial vorticity from
SMC; center, true initial vorticity; right, posterior mean of initial
vorticity from MCMC. Bottom panel: corresponding graphs for the velocity
fields with same order from left to right. The crosses indicate the
positions $x_{1},\ldots,x_{\Upsilon}$ where the vector field is observed.}

\label{ex1:post_mean} 
\end{figure}

\begin{figure}
\centering\includegraphics[width=1\textwidth]{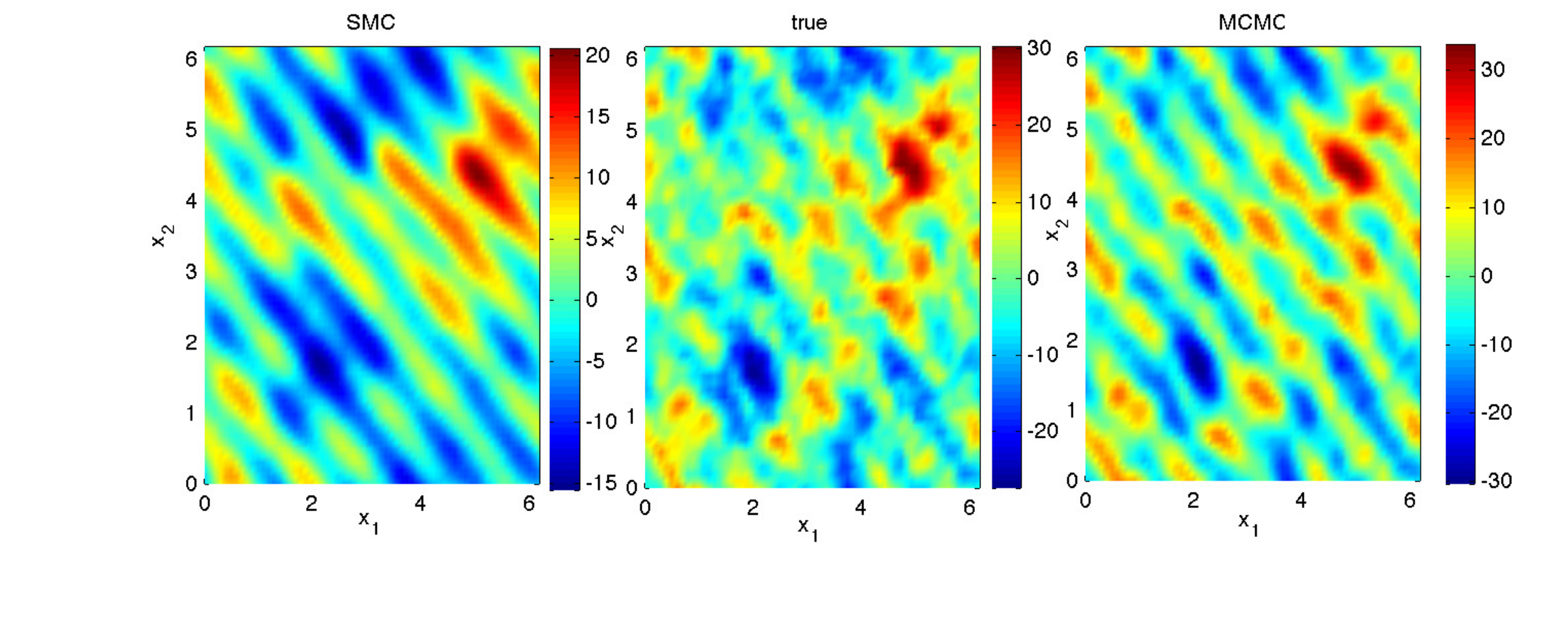} \caption{Data-set A. Vorticity of $v(\cdot,\delta T)$. Posterior mean from
SMC (left) true initial condition (center) and posterior mean from
MCMC (right). }

\label{ex1:pred} 
\end{figure}

\begin{figure}
\centering \hspace{-1cm} \includegraphics[width=1\textwidth]{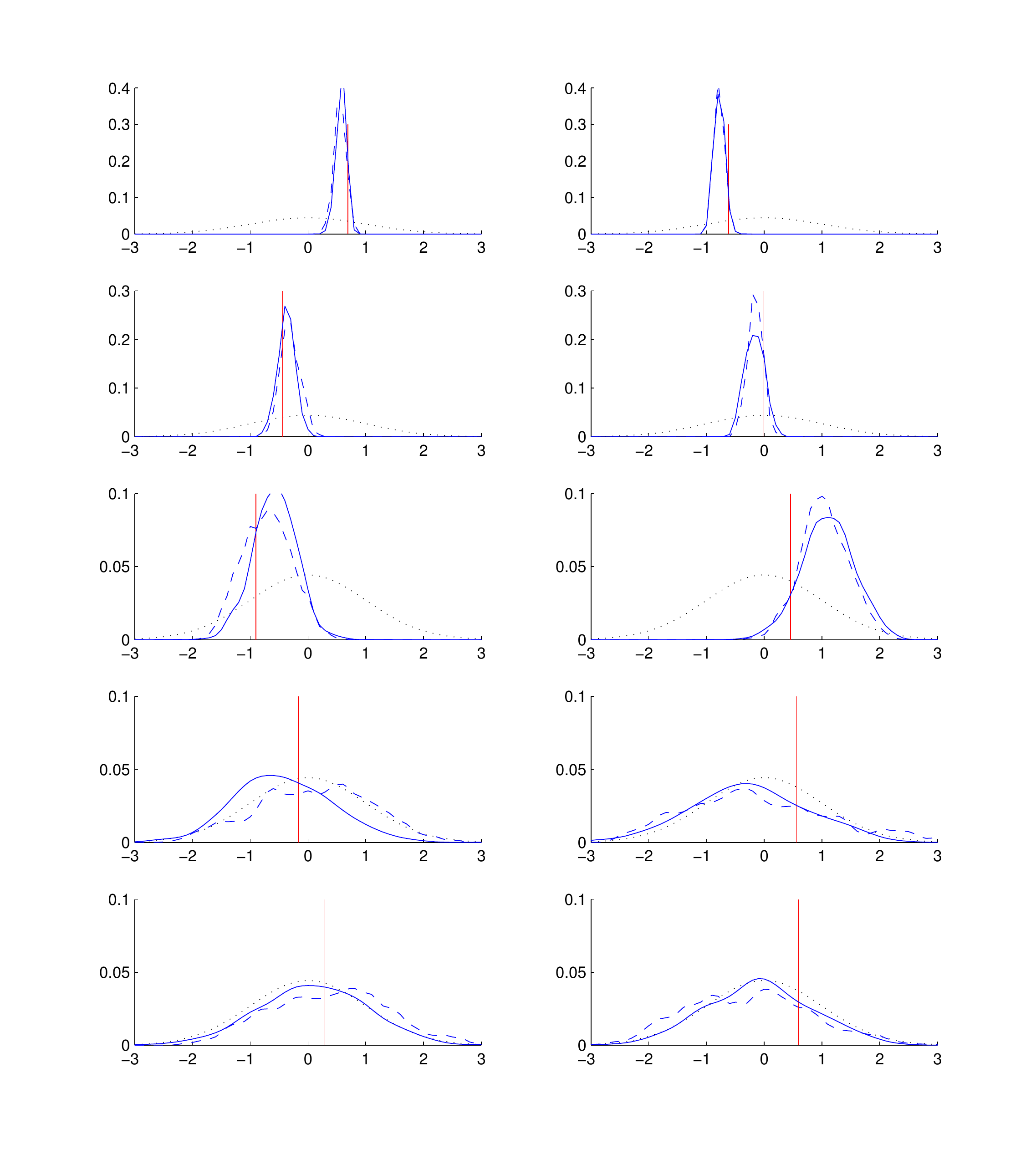}
\vspace{-2cm}
 \caption{Data-set A. Estimated PDFs: Blue lines are the estimated posterior
densities for $\xi_{k,T}$; the solid lines are for SMC, the dashed
ones for MCMC and the dotted (black) lines correspond to the prior
densities. The left panel is for the real parts and right panel for
the imaginary parts. The different rows correspond to each of the
frequencies $k=(0,1),(1,1),(2,1),(4,4),(9,9)$ from top to bottom.
The red vertical lines designate the true values $\xi_{k,T}^{\dagger}$.}

\label{ex1:pdfs} 
\end{figure}

\begin{figure}
\centering \includegraphics[width=1\textwidth]{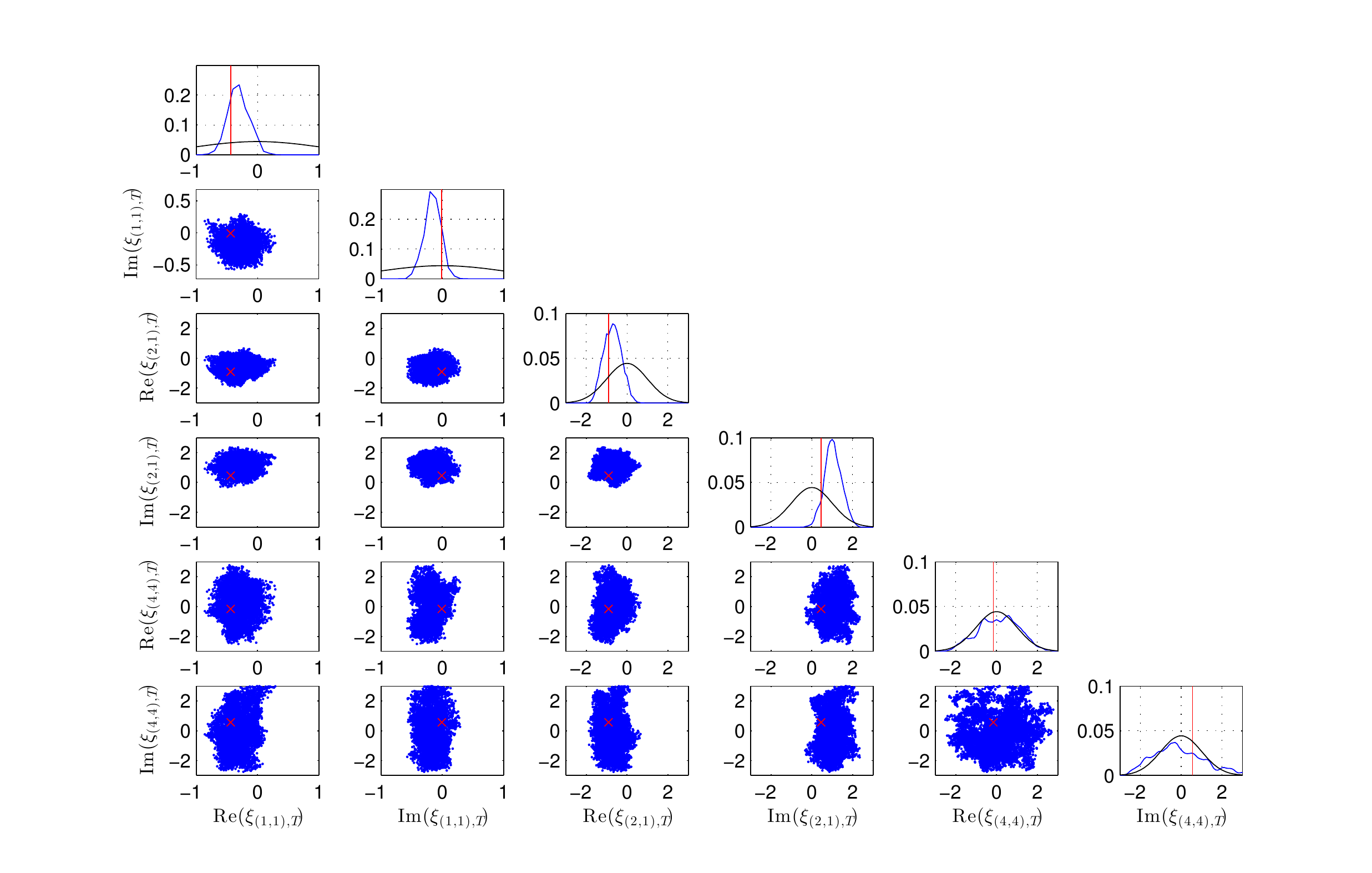} \caption{Data-set A: Scatter-plots generated from the sub-sampled MCMC trajectory
(every 100th iteration). For each row (and column resp.\@) the y-axis
(and x-axis resp.\@) alternates between real and imaginary parts
of $\xi_{k,T}$ for $k=(1,1),(2,1),(4,4)$. The red crosses in the
scatter-plots show the true values $\xi_{k,T}^{\dagger}$. Graphs
in the diagonal show the estimated posterior PDF's in blue together
with the prior, in black, and the true value as the red vertical lines
(as in Figure \ref{ex1:pdfs}). }

\label{ex1:scatterMCMC} 
\end{figure}

\begin{figure}
\includegraphics[width=0.43\textwidth]{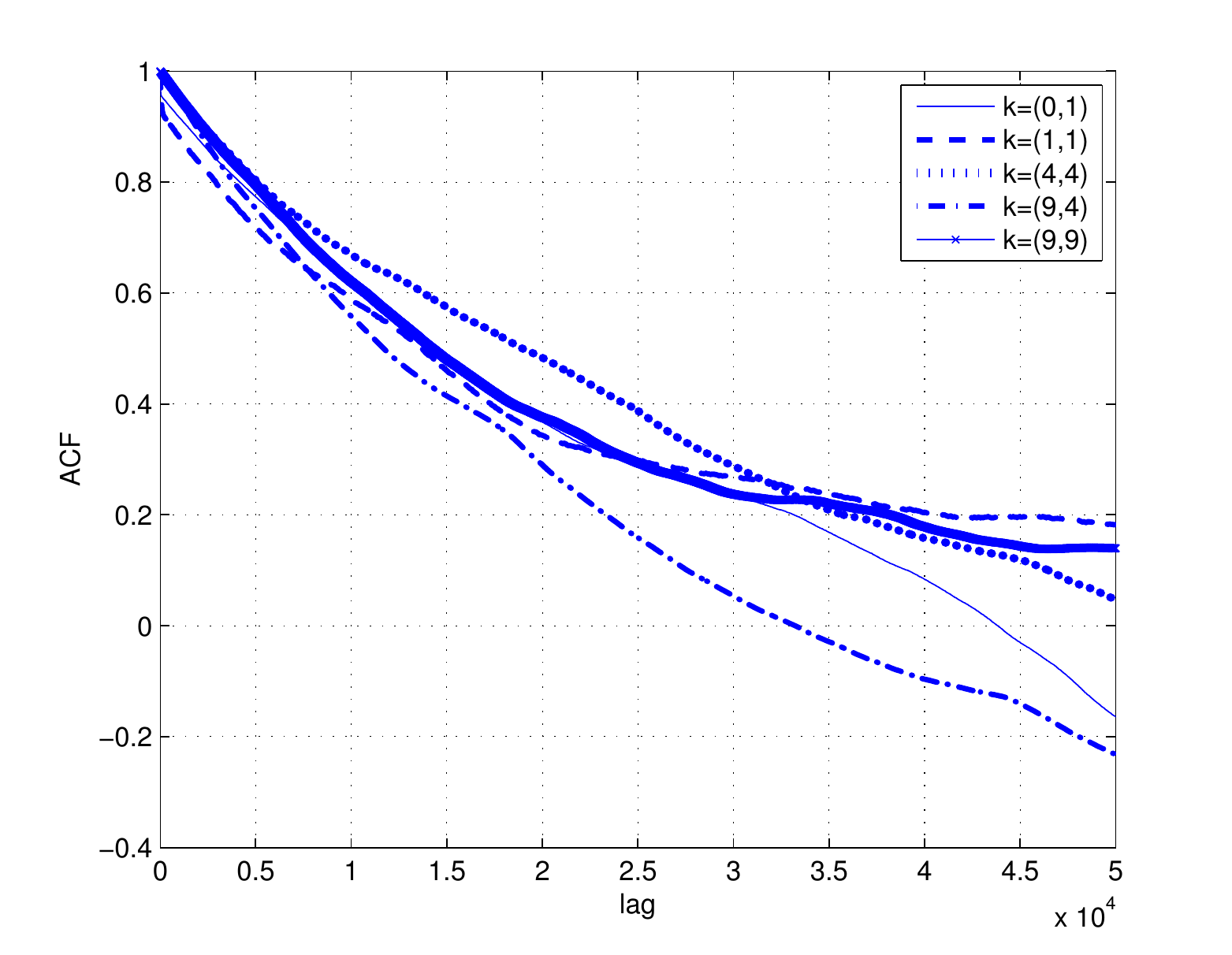}\includegraphics[width=0.45\textwidth]{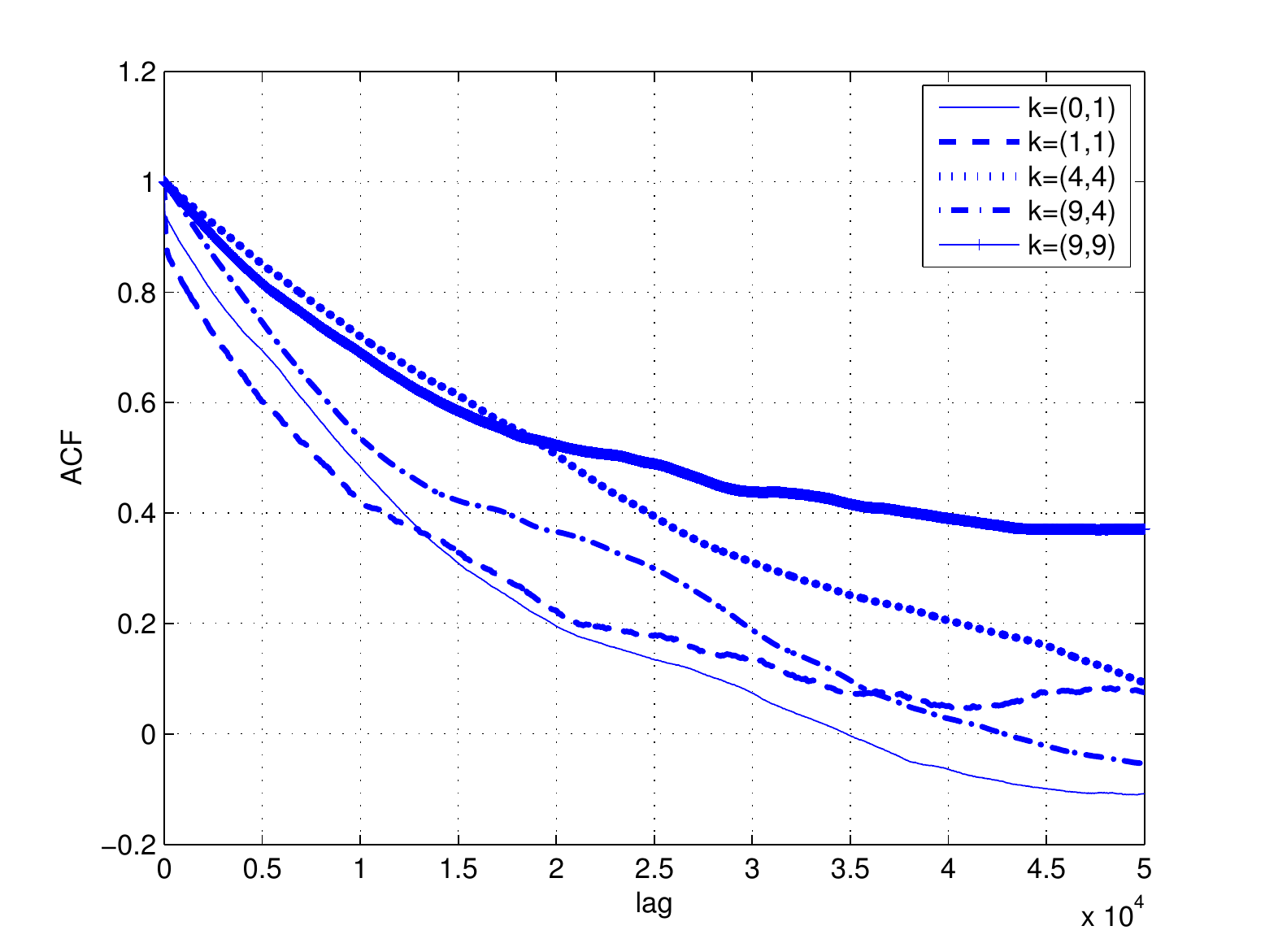}
\caption{Data-set A: monitoring MCMC performance. Autocorrelation plots from
the MCMC trajectory of $\xi_{k,T}$, for a number of different frequencies;
left graph corresponds to the real parts, and the right one to the
imaginary ones.}

\label{ex1:monitorMCMC} 
\end{figure}

\begin{figure}
\centering\includegraphics[width=1\textwidth]{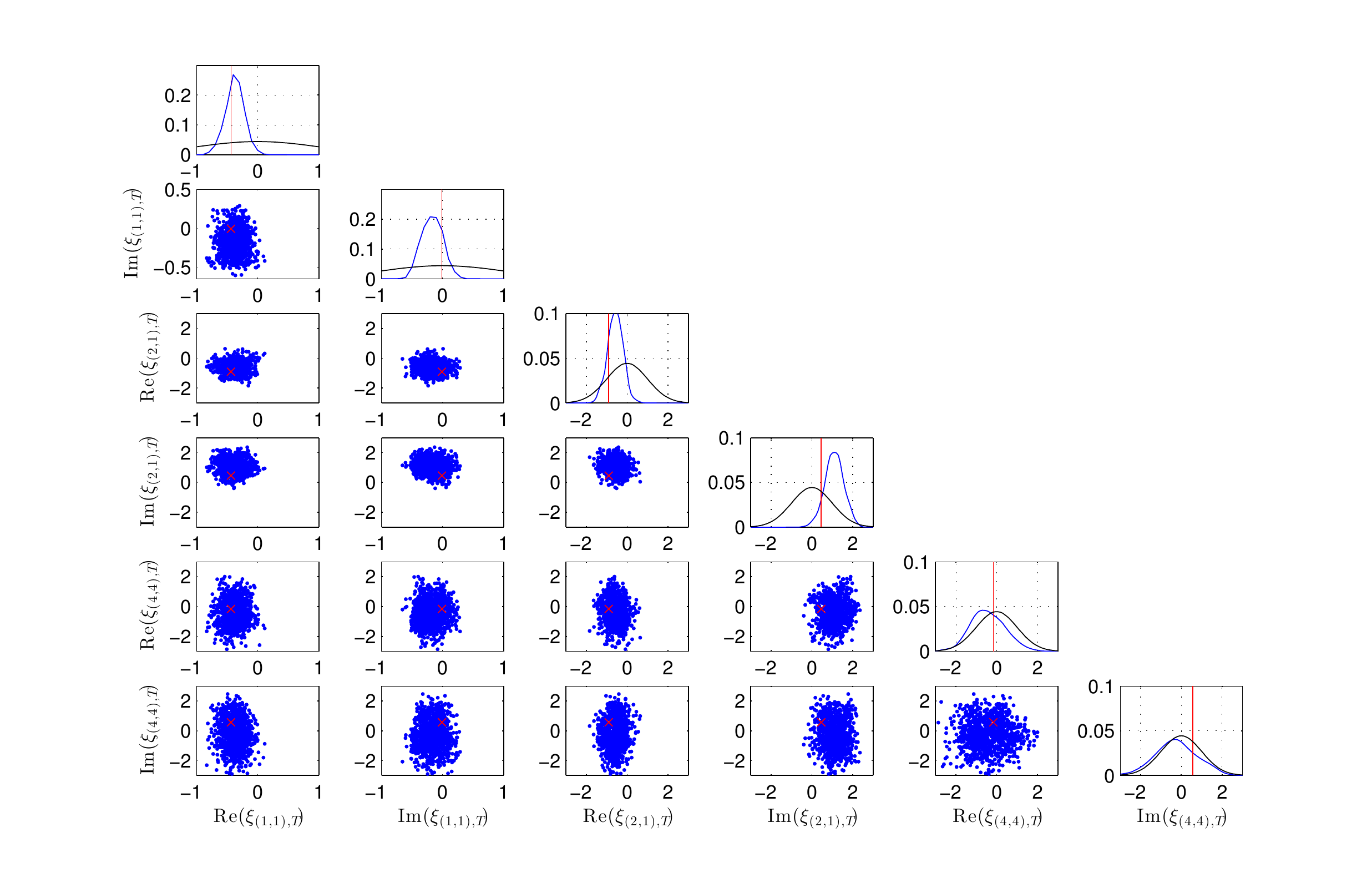}
\caption{Dataset A: Scatter-plots generated from the SMC particle ($N=1020$)
for $\xi_{k,T}$ at frequencies $k=(1,1),(2,1),(4,4)$. Details are
similar to Figure \ref{ex1:scatterMCMC}.}

\label{ex1:scatterSMC} 
\end{figure}

\begin{figure}
\includegraphics[width=0.45\textwidth,height=0.18\textheight]{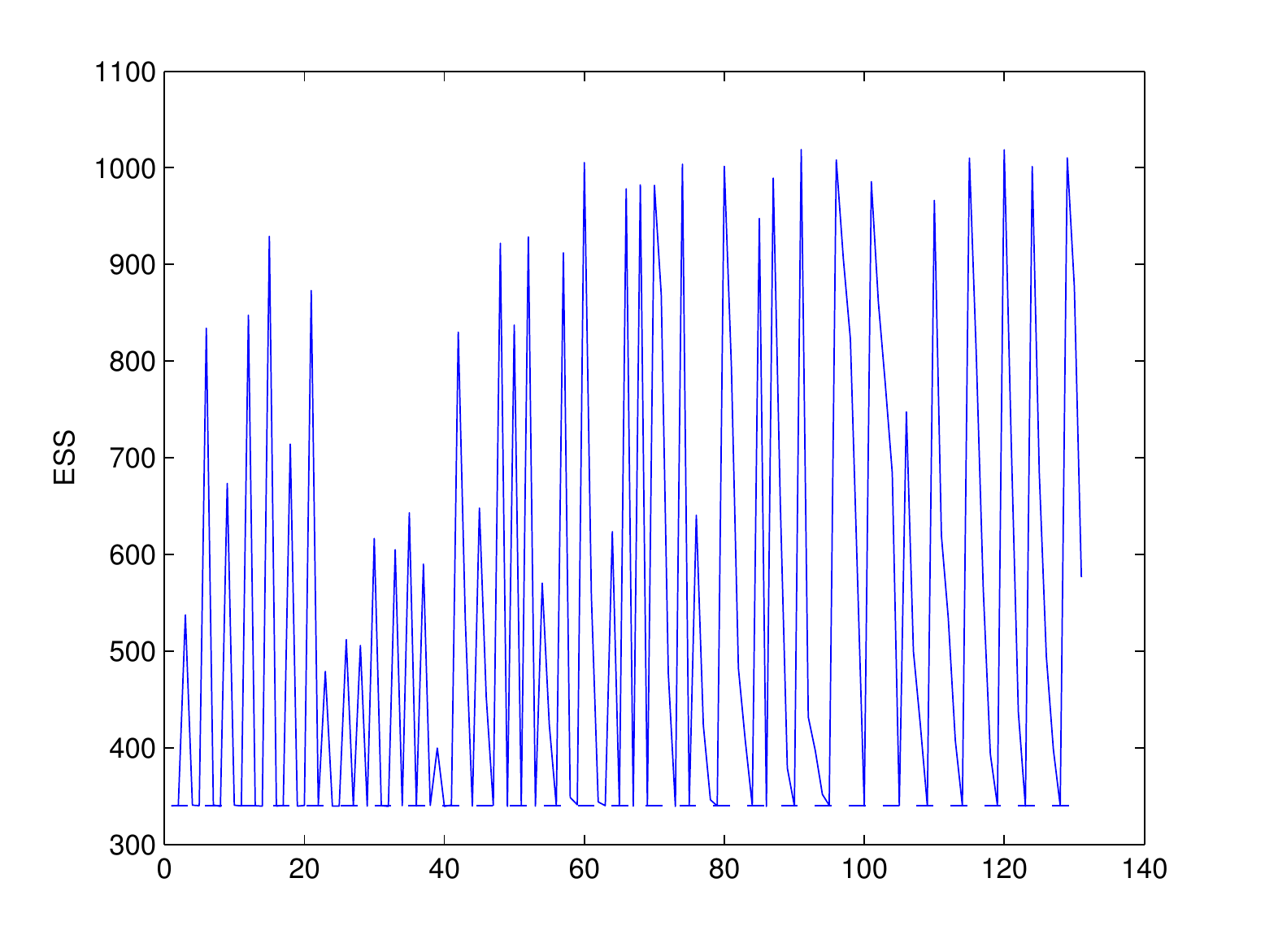}\includegraphics[width=0.45\textwidth,height=0.18\textheight]{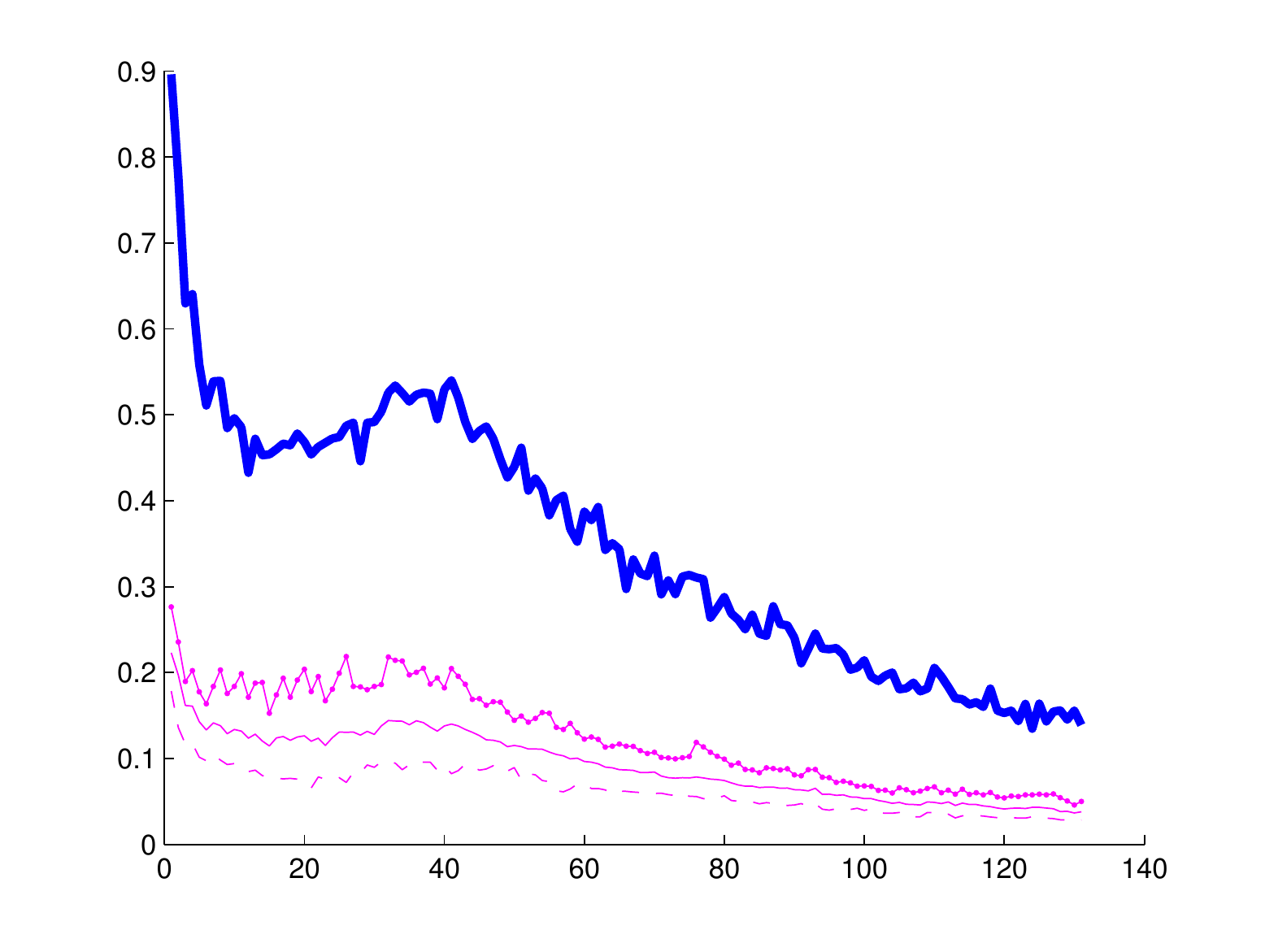}\\
 \includegraphics[width=0.45\textwidth,height=0.18\textheight]{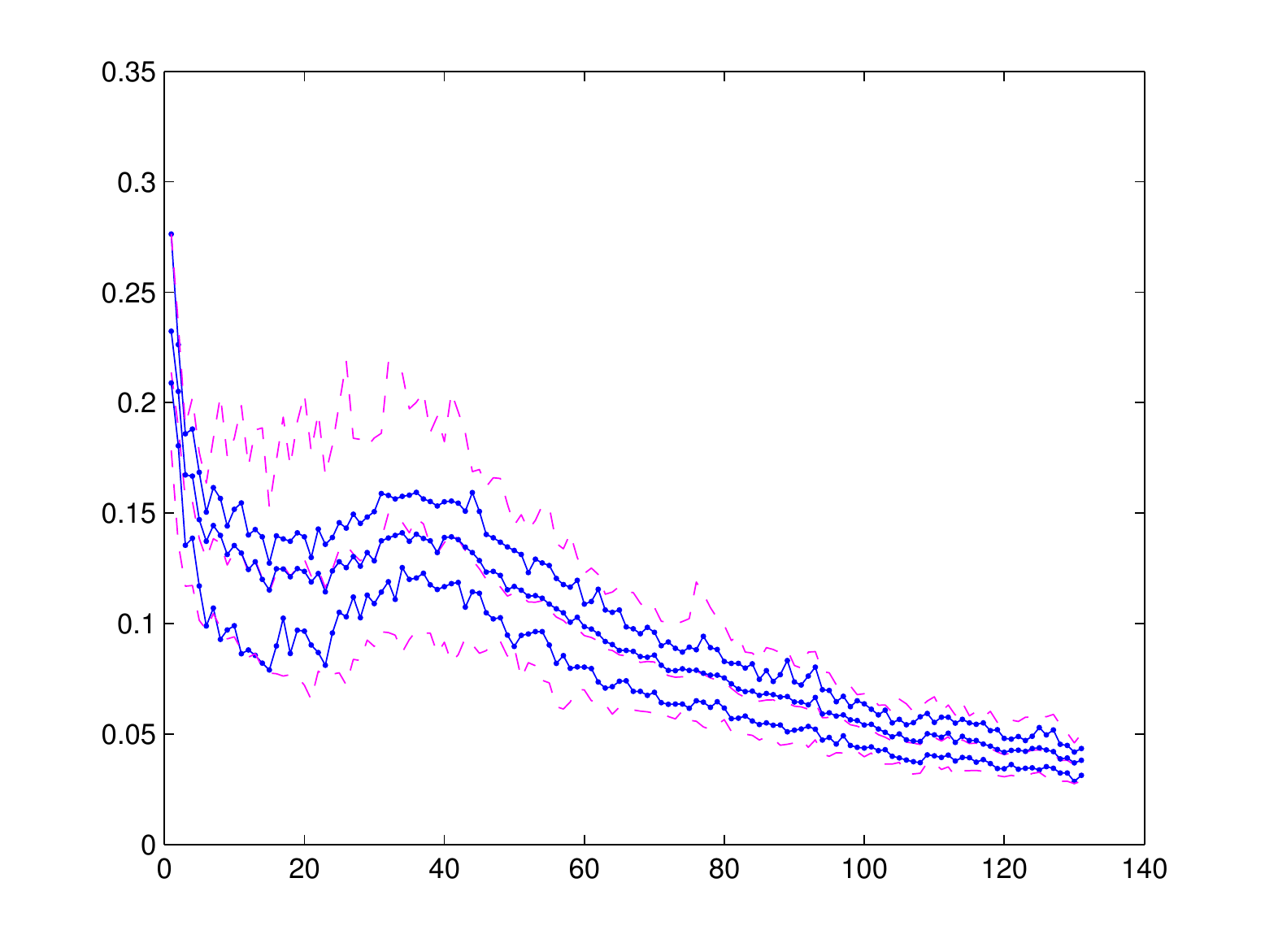}\includegraphics[width=0.45\textwidth,height=0.18\textheight]{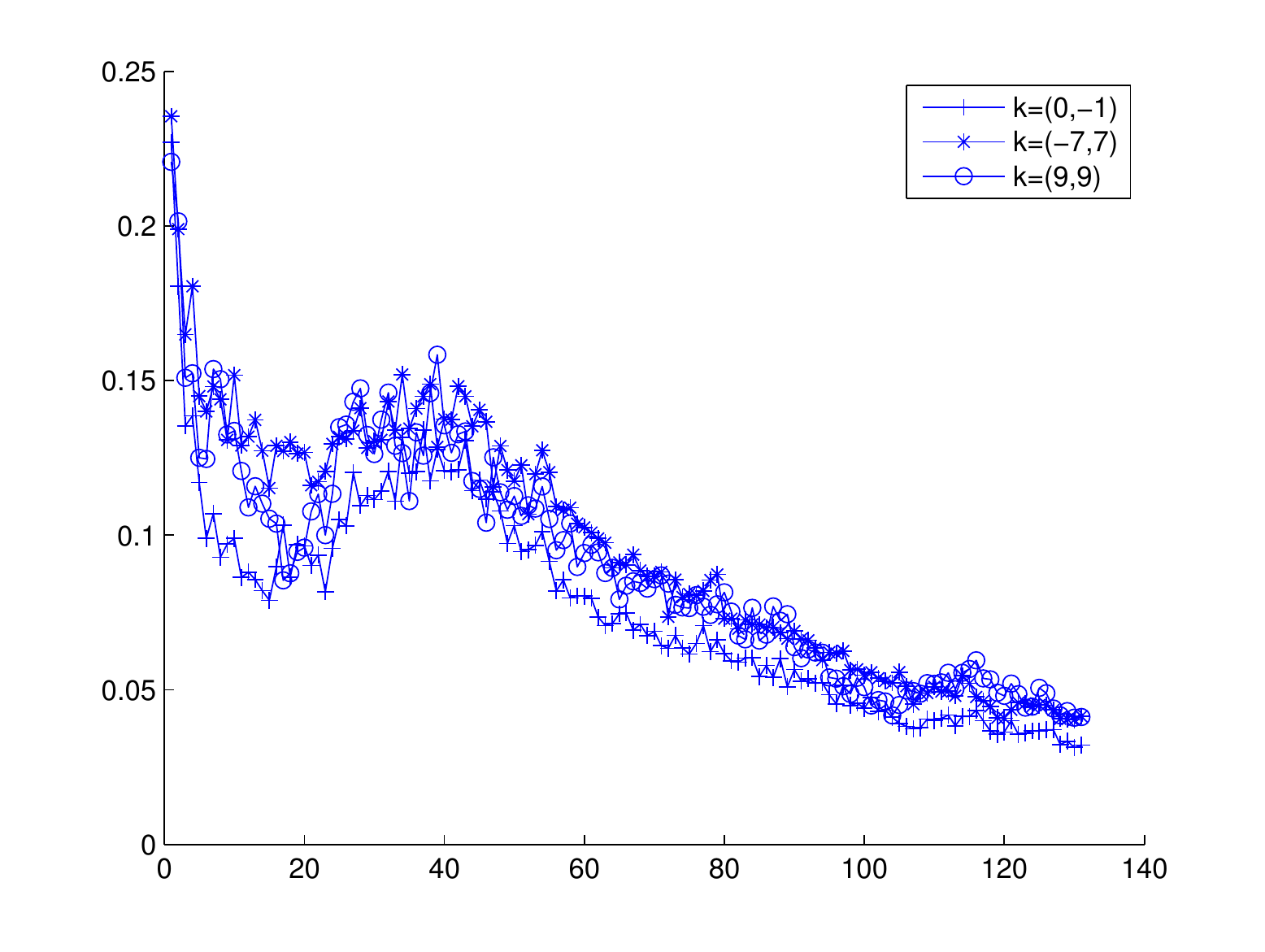}

\caption{Data-set A: monitoring SMC performance with $N=1020$. In all plots
the horizontal axis is index of SMC iteration $n,r$. Top left: ESS
oscillating between $N_{thresh}$ (when $\phi_{n,r}<1$) and higher
values (when $\phi_{n,q_{n}}=1$). Top right: thick-solid (blue) is
average acceptance ratio (w.r.t the particles), dot-solid (magenta)
is $\max_{k}J_{k,n,r}$, solid (magenta) is the average of $J_{k,n,r}$
(w.r.t $k$), dotted (magenta) is $\min_{k}J_{k,n,r}$. Bottom left:
We plot again maximum, minimum and average of $J_{k,n,r}$ w.r.t $k$
separately for $k\in\mathbf{K}\cap\mathbb{Z}_{\uparrow}^{2}$ (dash-dot,
blue) and $k\in\mathbf{K}^{c}\cap\mathbb{Z}_{\uparrow}^{2}$ (dashed,
magenta). Bottom right: $J_{k,n,r}$ for $k=(0,-1),(-7,7),(9,9)$.}

\label{ex1:monitorSMC} 
\end{figure}

\begin{figure}
\centering\includegraphics[width=1\textwidth,height=0.38\textheight]{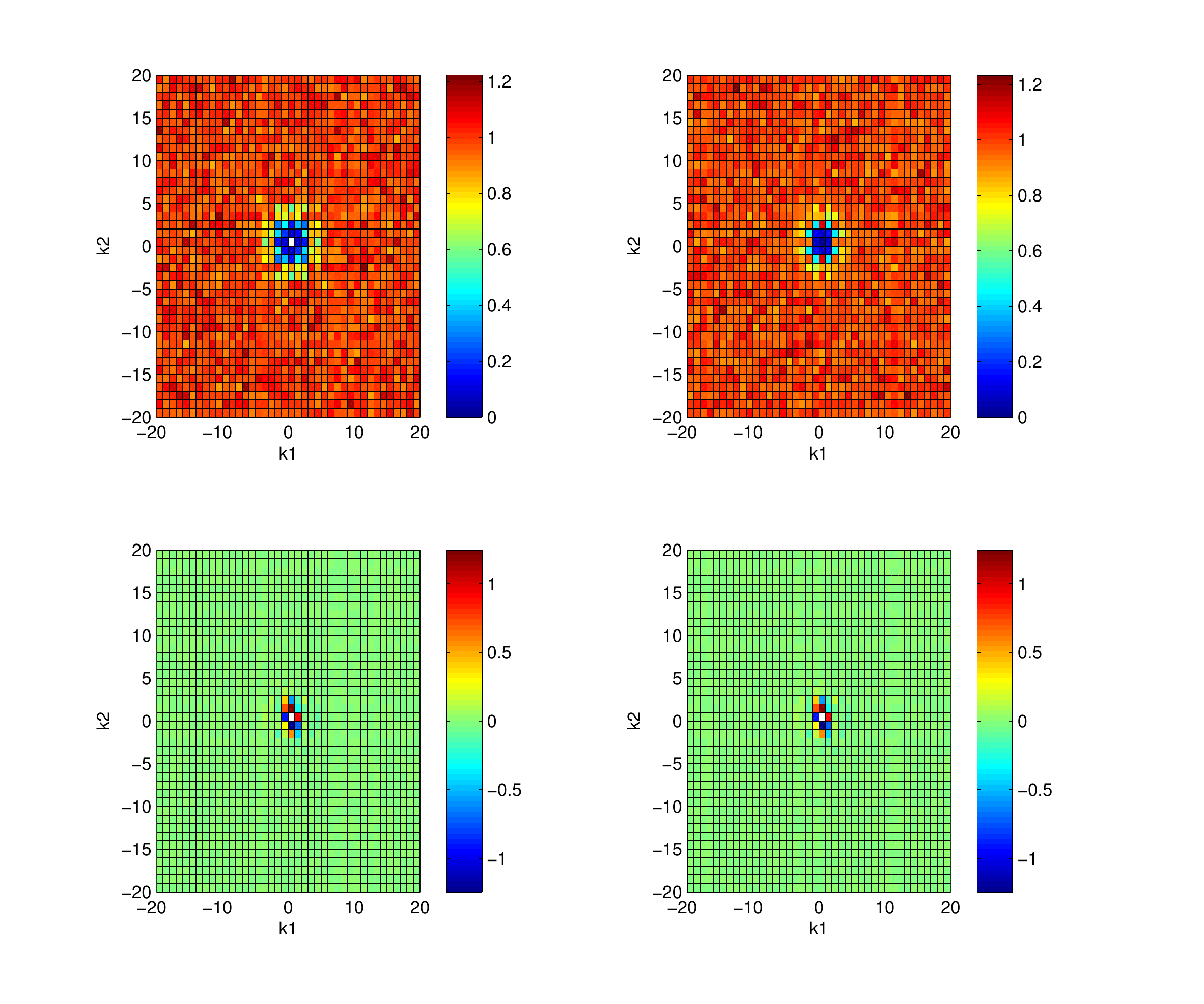}
\caption{Data-set A: posterior vs prior statistics for $\xi_{k,T}$ using SMC
with $N=1,020$ particles. Top: heat map of the ratio of estimated
posterior (marginal) standard deviations of $\xi_{k,T}$ over the
standard deviations of each $\xi_{k,0}$ (prior) against all frequencies
$k$. Bottom: corresponding heat map for the mean of each $\xi_{k,T}$
against $k$. The left and right plots correspond to the real and
imaginary parts respectively of the Fourier coefficients.}

\label{ex1:circle} 
\end{figure}

\begin{figure}
\centering\includegraphics[width=1\textwidth]{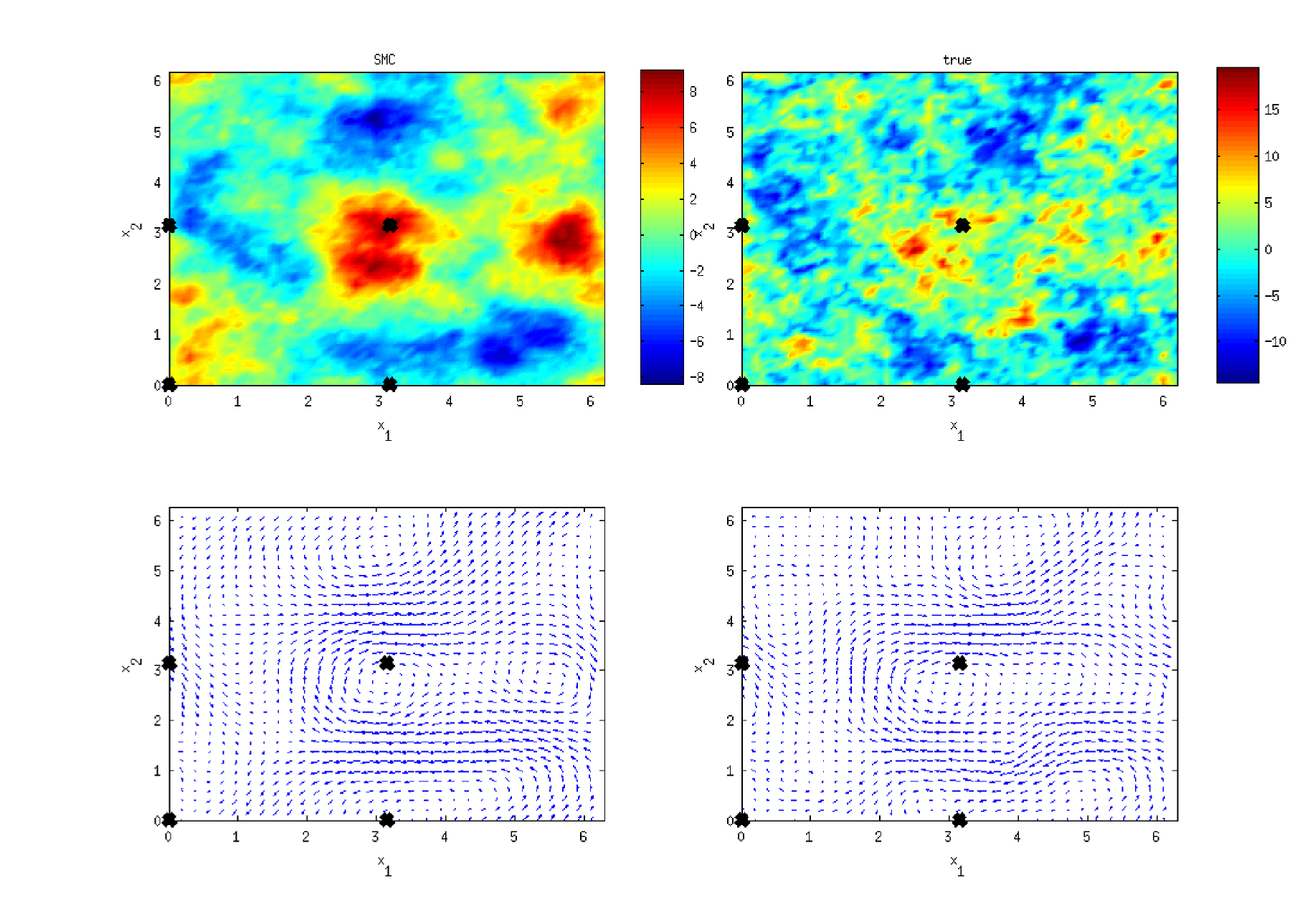}
\vspace{-0.8cm}
 \caption{Data-set B: Vorticity (top) and velocity field (bottom); posterior
mean (left) as estimated by SMC with $N=1020$ and true values (right).
The crosses indicate the positions $x_{1},\ldots,x_{\Upsilon}$ where
the vector field is observed. The graph is similar to Figure \ref{ex1:post_mean}
for Data-set A.}

\label{ex2:post_mean} 
\end{figure}

\begin{figure}
\centering\includegraphics[width=1\textwidth]{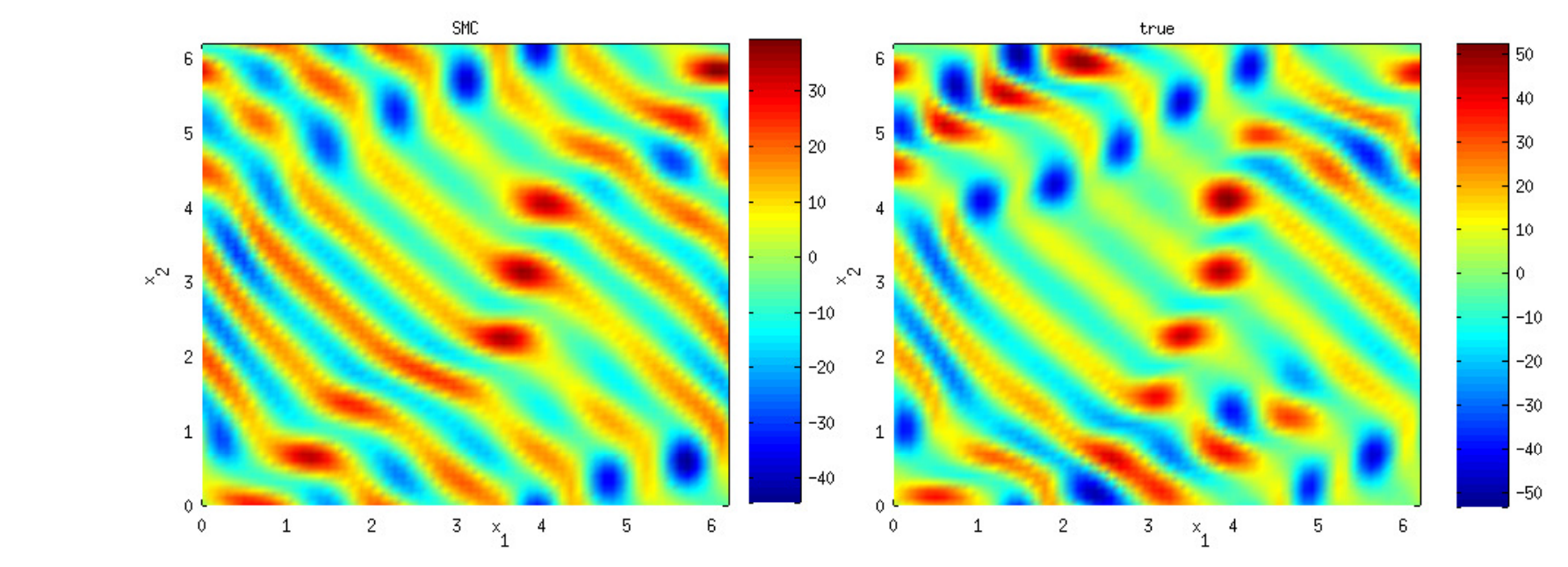}
\caption{Data-set B. Vorticity of $v(\cdot,\delta T)$. Posterior mean from
SMC (left) and true initial condition (right). }

\label{ex2:pred} 
\end{figure}

\begin{figure}
\centering\includegraphics[width=1\textwidth]{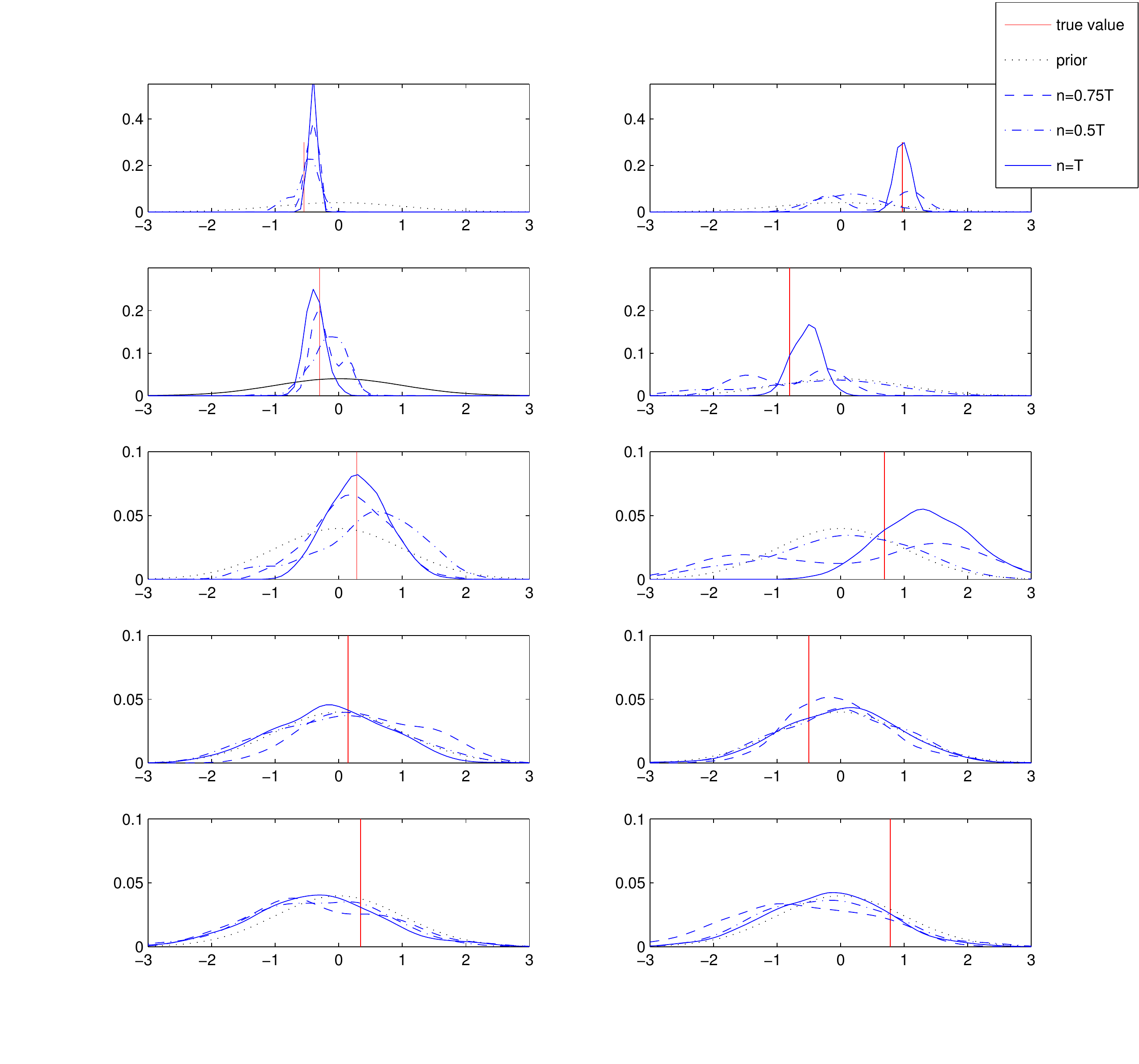} \caption{Data-set B: Estimated posterior PDFs for $\xi_{k,n}$ for $n=0,0.5T,0.75T,T$
and frequencies $k=(0,1),(1,1),(2,1),(4,4),(9,9)$. Details similarly
to Figure \ref{ex1:pdfs} for Data-set A.}

\label{ex2:pdfs} 
\end{figure}

\begin{figure}
\centering \includegraphics[width=19cm]{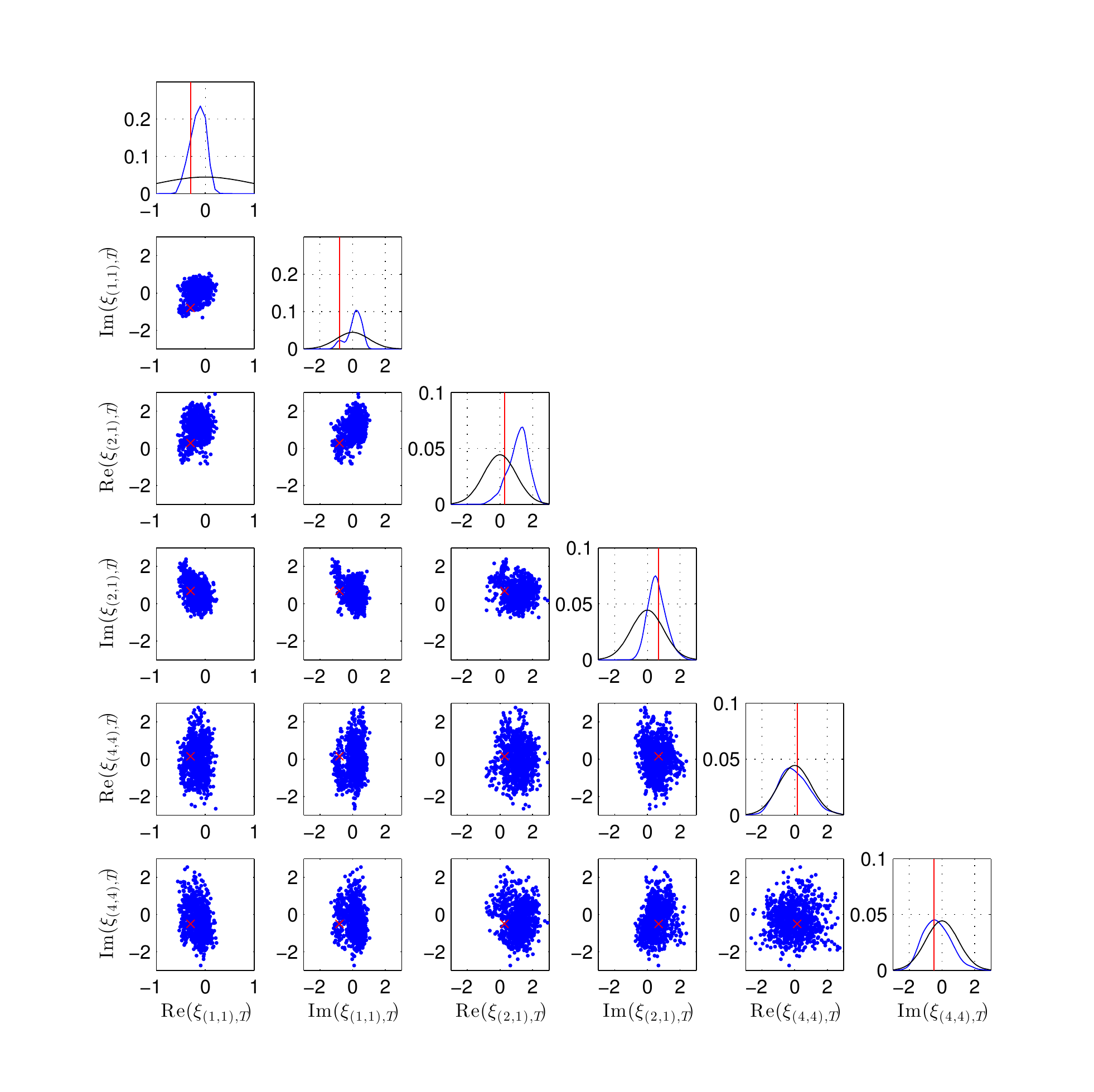} \caption{Data-set B: Scatter-plots generated from the SMC particles for frequencies
$k=(1,1),(2,1),(4,4)$. Details similarly to Figures \ref{ex1:scatterMCMC},
\ref{ex1:scatterSMC} for Data-set A.}

\label{ex2:scatterSMC} 
\end{figure}

\begin{figure}
\includegraphics[width=0.45\textwidth,height=0.18\textheight]{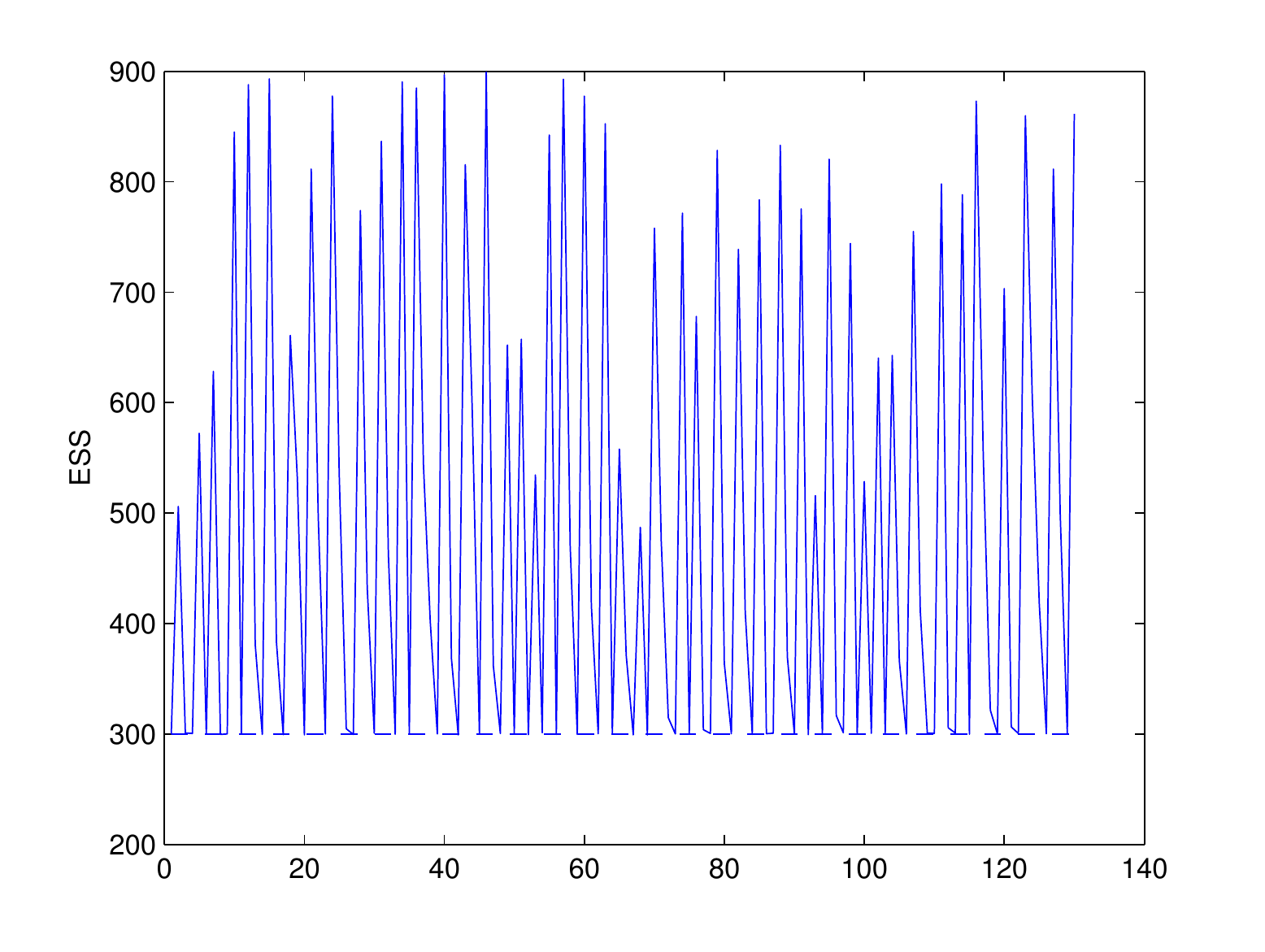}\includegraphics[width=0.45\textwidth,height=0.18\textheight]{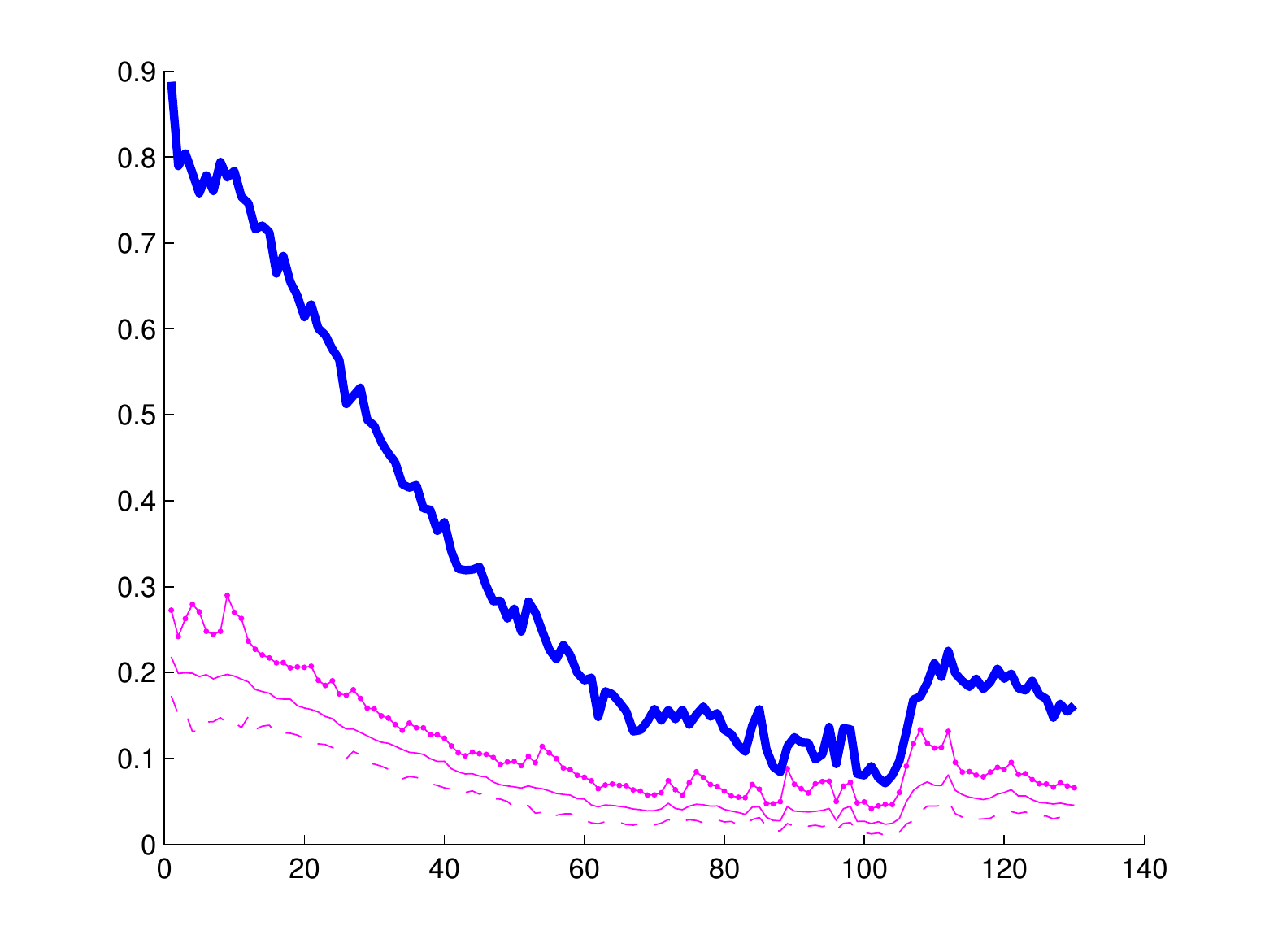}\\
 \includegraphics[width=0.45\textwidth,height=0.18\textheight]{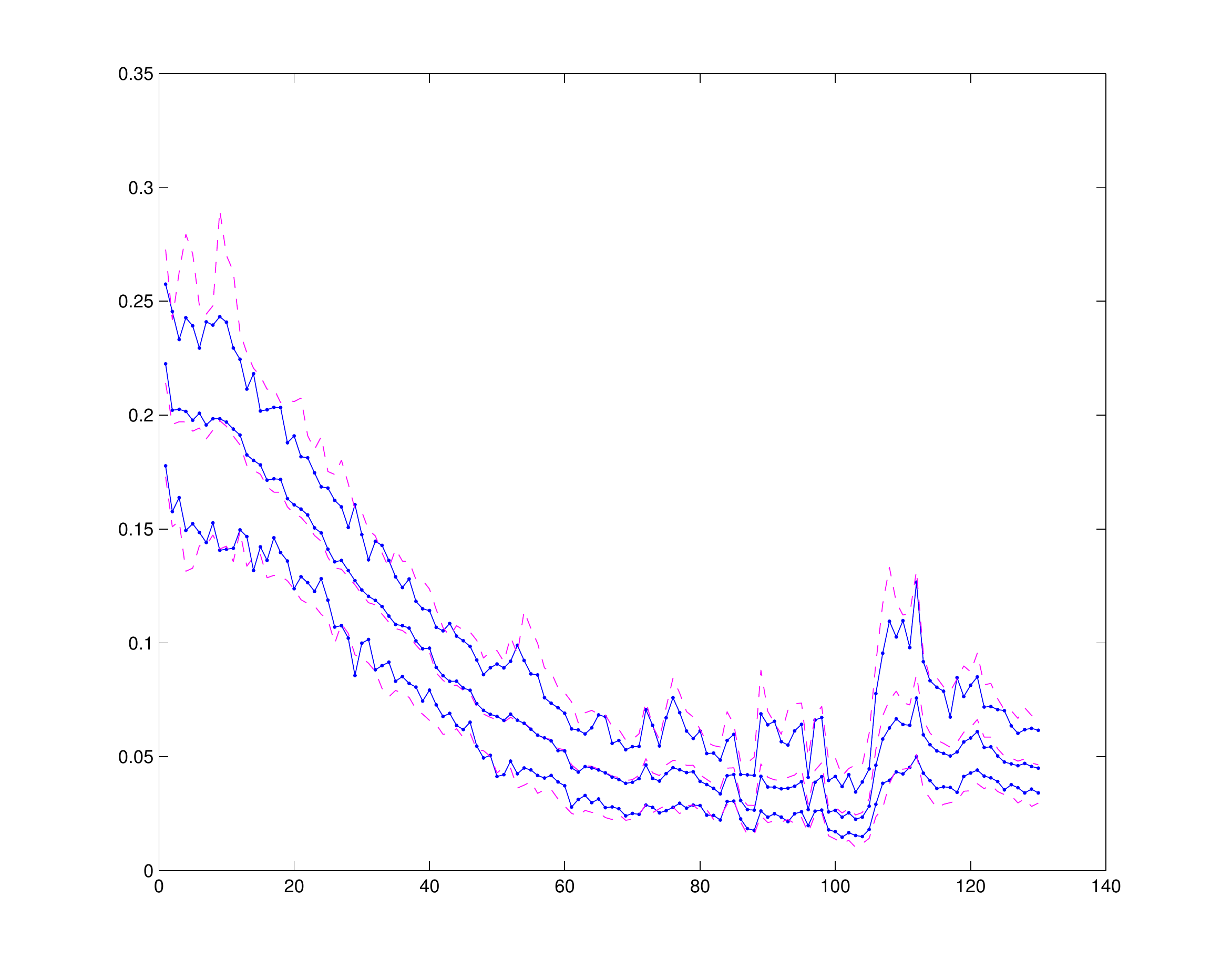}\includegraphics[width=0.45\textwidth,height=0.18\textheight]{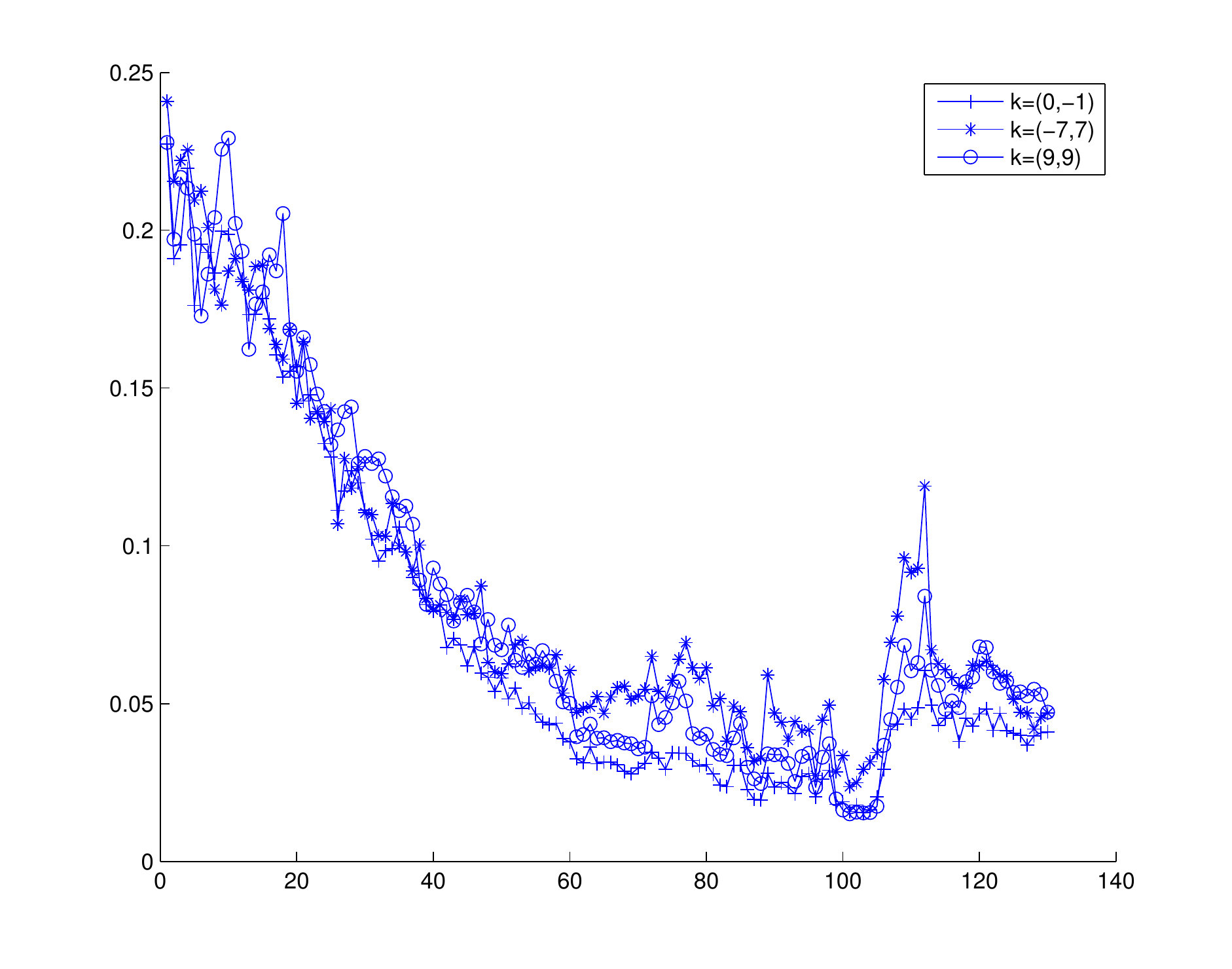}
\caption{Data-set B: monitoring SMC performance against iteration $n,r$. Top
left: ESS. Top right: average acceptance ratio (thick blue), $\max_{k}J_{k,n,r}$,
average of $J_{k,n,r}$, $\min_{k}J_{k,n,r}$ (thin-magenta). Bottom
left: $\max_{k}J_{k,n,r}$, average of $J_{k,n,r}$, $\min_{k}J_{k,n,r}$
considered separately for $k\in\mathbf{K}\cap\mathbb{Z}_{\uparrow}^{2}$
(blue) and $k\in\mathbf{K}^{c}\cap\mathbb{Z}_{\uparrow}^{2}$ (magenta).
Bottom right: $J_{k,n,r}$ for some values of $k$. For the full details
see caption in Figure \ref{ex1:monitorSMC} for Data-set A.}

\label{ex2:monitorSMC} 
\end{figure}

\begin{figure}
\centering\includegraphics[width=1\textwidth,height=0.38\textheight]{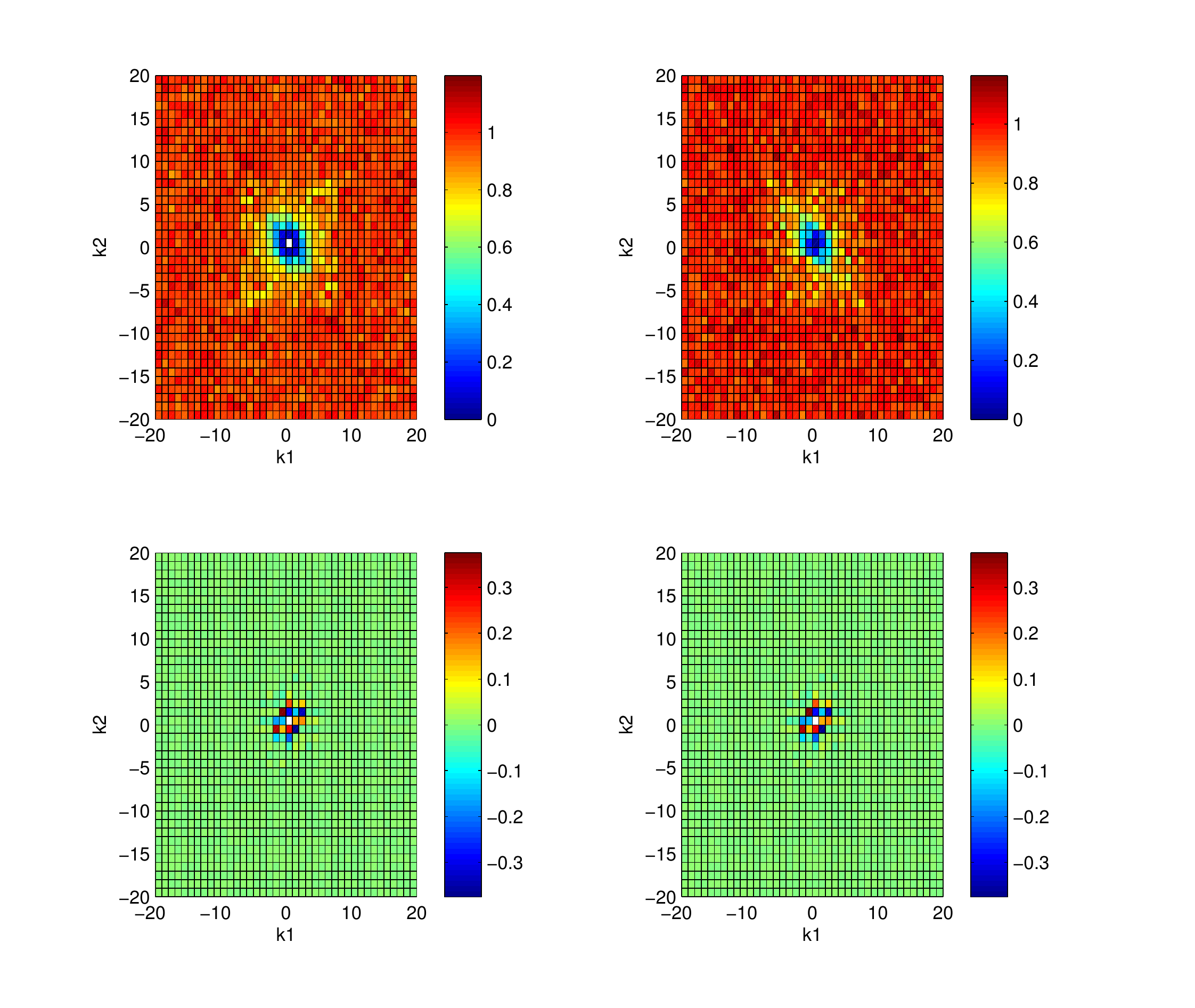}
\caption{Data-set B: heat map against $k$ of the ratio of the estimated (with
SMC) posterior standard deviations of $\xi_{k,T}$ to the one of the
prior (top) and posterior means for $\xi_{k,T}$. Details similarly
to Figure \ref{ex1:circle} for Data-set A.}

\label{ex2:circle} 
\end{figure}

\end{document}